\definecolor{darkgreen}{rgb}{0.0,0,0.9}
\newtcolorbox{wbox}
{
	colback  = white,
}
\newtheorem{theorem}{Theorem}
\newtheorem{remark}[theorem]{Remark}
\newtheorem{alg}[theorem]{Mechanism}
\newtheorem{lemma}[theorem]{Lemma}
\newtheorem{corollary}[theorem]{Corollary}
\newtheorem{proposition}[theorem]{Proposition}
\theoremstyle{definition}
\newtheorem{definition}[theorem]{Definition}
\let\R\relax
\newcommand*{\R}{\mathbb{R}}
\newcommand*{\Qplus}{\mathbb{Q_+}}
\newcommand*{\suppress}[1]{}
\newcommand*{\cT}{\mathcal{T}}
\newcommand{\bbR}{\mathbb{R}}
\def\thm@space@setup{%
	\thm@preskip= 10pt
	\thm@postskip=\thm@preskip 
}
\renewcommand{\paragraph}{%
	\@startsection{paragraph}{4}%
	{\z@}{5pt}{-1em}%
	{\normalfont\normalsize\bfseries}%
}
\newenvironment{fminipage}%
{\begin{Sbox}\begin{minipage}}%
		{\end{minipage}\end{Sbox}\fbox{\TheSbox}}
\newcommand\vv{\boldsymbol{\mathit{v}}}
\newcommand{\CN}{\mbox{${\mathcal N}$}}
\newcommand{\CS}{\mbox{${\mathcal S}$}}
\newcommand{\CM}{\mbox{${\mathcal M}$}}
\title{One-Sided Matching Markets with Endowments:\\ Equilibria and Algorithms}
\author{Jugal Garg\thanks{University of Illinois at Urbana-Champaign. Supported by NSF Grant CCF-1942321 (CAREER).}\\ \texttt{\small jugal@illinois.edu} \and Thorben Tr\"obst\thanks{University of California, Irvine. Supported in part by NSF grant CCF-1815901.}\\ \texttt{\small t.troebst@uci.edu} \and Vijay V. Vazirani\thanks{University of California, Irvine. Supported in part by NSF grant CCF-1815901.}\\ \texttt{\small vazirani@ics.uci.edu}}
\date{}
\begin{document}
\maketitle

\begin{abstract}
	The Arrow-Debreu extension of the classic Hylland-Zeckhauser scheme \cite{HZ79} for a one-sided matching market -- called ADHZ in this paper -- has natural applications but has instances which do not admit  equilibria. By introducing approximation, we define the {\em $\epsilon$-approximate ADHZ model}, and we give the following results.

	\begin{enumerate}
		\item Existence of equilibrium under linear  utility functions. We prove that the equilibrium satisfies Pareto optimality, approximate envy-freeness, and approximate weak core stability.
		\item A combinatorial polynomial time algorithm for an $\epsilon$-approximate ADHZ equilibrium for the case of dichotomous, and more generally bi-valued, utilities. 
		\item An instance of ADHZ, with dichotomous utilities and a strongly connected demand graph, which does not admit an equilibrium.
			\end{enumerate}
			
	Since computing an equilibrium for HZ is likely to be highly intractable \cite{VY-HZ} and because of the difficulty of extending HZ to more general utility functions, \cite{HV-NBMM} proposed (a rich collection of) Nash-bargaining-based matching market models. For the dichotomous-utilities case of their model {\em linear Arrow-Debreu Nash bargaining one-sided matching market (1LAD)}, we give a combinatorial, strongly polynomial time algorithm and show that it admits a rational convex program. 

\end{abstract}

\section{Introduction}\label{sec.intro}
In this paper, we define an Arrow-Debreu extension of the classic Hylland-Zeckhauser (HZ) mechanism~\cite{HZ79} for one-sided matching markets. This fills a void in the space of general\footnote{As opposed to mechanisms for specific one-sided matching markets.} mechanisms for one-sided matching markets. Such mechanisms are classified according to two criteria: whether they use cardinal or ordinal utility functions, and whether they are in the Fisher or Arrow-Debreu\footnote{This is also called the Walrasian or exchange setting.} setting. The other three possibilities are covered as follows: (cardinal, Fisher) by the Hylland-Zeckhauser scheme \cite{HZ79}; (ordinal, Fisher) by Probabilistic Serial \cite{Bogomolnaia-PS} and Random Priority  \cite{Moulin2018fair}; and (ordinal, Arrow-Debreu) by Top Trading Cycles \cite{Shapley1974-TTC}. Details about these mechanisms are given in Section \ref{sec.related} and Appendix~\ref{sec.ord-car}. 

The Arrow-Debreu setting of one-sided matching markets has several natural applications beyond the Fisher setting, e.g., allocating students to rooms in a dorm for the next academic year, assuming their current room is their initial endowment. The issue of obtaining such an extension of the HZ mechanism, called {\em ADHZ} in this paper,  was studied by Hylland and Zeckhauser. However, this culminated in an example which inherently does not admit an equilibrium \cite{HZ79}. 

One recourse to this was given by Echenique, Miralles and Zhang \cite{Echenique2019constrained} via their notion of an {\em $\alpha$-slack Walrasian equilibrium}: This is a hybrid between the Fisher and Arrow-Debreu settings. Agents have initial endowments of goods and for a fixed $\alpha \in (0, 1]$, the budget of each agent, for given prices of goods, is $\alpha + (1 - \alpha) \cdot m$, where $m$ is the value for her initial endowment; the agent spends this budget to obtain an optimal bundle of goods. Via a non-trivial proof, using the Kakutani Fixed Point Theorem, they proved that an $\alpha$-slack equilibrium always exists. 

In this paper, we show that we can remain with a pure Arrow-Debreu setting provided we relax the notion of equilibrium to an {\em approximate equilibrium}, a notion that has become common-place in the study of equilibria within computer science. We call this the {\em $\epsilon$-approximate ADHZ model}. For this model, we give the following results.

{\bf 1).} We prove the existence of an equilibrium for arbitrary cardinal utility functions, using the fact from \cite{Echenique2019constrained} that an $\alpha$-slack equilibrium always exists for $\alpha > 0$. 

{\bf 2).} We prove that the equilibrium in our $\epsilon$-approximate ADHZ model is Pareto optimal, approximately envy free, and approximately weak core stable. In contrast, the allocation found by an HZ equilibrium is Pareto optimal and envy-free \cite{HZ79} and incentive compatible in the large \cite{He2018pseudo}. 

{\bf 3).} For an Arrow-Debreu market under linear utilities, Gale~\cite{Gale76} defined a {\em demand graph:} a directed graph on agents with an edge $(i, j)$ if agent $i$ likes a good that agent $j$ has in her initial endowment. He proved that a sufficiency condition for the existence of equilibrium is that this graph be strongly connected. The following question arises naturally: Is this a sufficiency condition for equilibrium existence in ADHZ as well? We provide a negative answer to this question. We give an instance of ADHZ whose demand graph is not only strongly connected but also has dichotomous utilities, and yet it does not admit an equilibrium. 

{\bf 4).} For the case of dichotomous utilities, we give a combinatorial polynomial-time algorithm for computing an equilibrium for our $\epsilon$-approximate ADHZ model. This result also extends to the case of bi-valued utilities, i.e., each agent's utility for individual goods comes from a set of cardinality two, though the sets may be different for different agents. We note that the polynomial-time algorithm of~\cite{DM15,DuanGM16} for Arrow-Debreu markets under linear utilities, as well as the recent strongly polynomial-time algorithm for the same problem~\cite{GV19} are quite complicated, in particular because they resort to the use of balanced flows, which uses the $l_2$ norm. In contrast, we managed to avoid the use of $l_2$ norm and hence we obtain a simple algorithm.

 A corollary of the last result is that the equilibrium of the dichotomous utilities case of the $\epsilon$-approximate ADHZ model involves only rational numbers. In contrast we give an instance of ADHZ whose unique equilibrium has irrational prices and allocations. This instance is obtained by appropriately modifying an instance for the HZ model, given in \cite{VY-HZ}, whose (unique) equilibrium has irrational prices and allocations.  

\bigskip

\noindent{\bf Computational complexity of HZ, and the Nash bargaining alternative:} Recently \cite{VY-HZ} gave the first comprehensive study of the computational complexity of HZ. Their main message was that it is likely to be highly intractable; however, they left open the question of giving a formal proof of this fact via a hardness result, see Section \ref{sec.related}. From the viewpoint of use in applications, the intractability of HZ is a serious drawback. As noted in \cite{HV-NBMM}, solving even small instances with $n = 4$ or 5 is difficult. Furthermore, in view of the above-stated difficulties of obtaining an Arrow-Debreu extension of the HZ, extensions to more general utility functions or to two-sided matching markets were never considered. 

To deal with this reality, \cite{HV-NBMM} defined a rich collection of Nash-bargaining-based matching market models, both one-sided and two-sided, in both Fisher and Arrow-Debreu settings, together with implementations, using available solvers, and experimental results. Encouraged by their results, \cite{Panageas2021combinatorial} gave efficient combinatorial algorithms for several of these models. Additionally, they established a deep connection between HZ and the Nash-bargaining-based models, hence confirming that the alternative to HZ proposed in \cite{HV-NBMM} is a principled one. The Nash bargaining solution has very desirable properties: it is Pareto optimal and symmetric and has been found to be remarkably fair, e.g., see Remark \ref{rem.remarks} and \cite{Nash-Unreasonable, Abebe-MM-Truthful, Moulin2018fair}. This aspect  has been further explored under the name of Nash Social Welfare \cite{cole2018approximating, cole2017convex}.

Continuing on this line of work, we explore the dichotomous-utilities case of the model {\em linear Arrow-Debreu Nash bargaining one-sided matching market (1LAD)} defined in \cite{HV-NBMM}. Let us call this case {\em 1DLAD}. Optimal allocations to an instance of {\em 1DLAD} are obtained by optimally solving the non-linear convex program \eqref{eq.CP-LAD-copy}. We ask if 1DLAD is polynomial time solvable. A prerequisite for this is that each instance of 1DLAD should admit a rational-valued equilibrium; we establish this by showing that \eqref{eq.CP-LAD-copy} is a \emph{rational convex program (RCP)} for the case of dichotomous utilities; see Appendix~\ref{sec.RCP1} for a definition. 

Our proof of rationality provides valuable insights: It turns out that the dual of \eqref{eq.CP-LAD-copy} has two types of variables, one corresponding to goods and the other corresponding to agents; these play the roles of {\em prices of goods} and {\em price-offsets for agents}. Additionally, it indicates how the money, $m_i$ of an agent $i$ should be defined and exactly how $i$'s allocation needs to be paid for. Using these insights, we give a novel market, based on the Fisher model, whose equilibrium captures an optimal solution to \eqref{eq.CP-LAD-copy}. We call it the {\em variable-budget market model}. We next give a combinatorial, strongly polynomial-time mechanism for computing an equilibrium for this model. It turns out that $m_i$ is a function of the eventual utility, $v_i$ of agent $i$. Our mechanism iteratively updates $v_i$, and as a result, it also keeps updating $m_i$, hence the name of the model. 

\bigskip

\noindent {\bf Additional results in the Appendix:} 
The paper \cite{VY-HZ} had given a combinatorial polynomial time algorithm for HZ under dichotomous utilities; In Appendix~\ref{asec.rcp} we show that this solution is captured by an RCP.

In Appendix~\ref{sec.DSIC-ADNB} we show that our mechanism for 1DLAD, given in Section \ref{sec.1DLAD} is strategyproof, provided the disagreement utilities are public knowledge and are therefore truthfully reported. 

\subsection{Related Results}\label{sec.related}

We start by stating the properties of mechanisms for one-sided matching markets listed in the Introduction. Random Priority \cite{Moulin2018fair} is strategyproof though not efficient or envy-free; Probabilistic Serial \cite{Bogomolnaia-PS} is efficient and envy-free but not strategyproof; and Top Trading Cycles \cite{Shapley1974-TTC} is efficient, strategyproof and core-stable. 

Recently, \cite{VY-HZ} undertook a comprehensive study of the computational complexity of the HZ scheme. They gave a combinatorial polynomial time algorithm for dichotomous utilities and an example which has only irrational equilibria; as a consequence, this problem is not in PPAD. They showed that the problem of computing an exact HZ equilibrium is in the class FIXP and the problem of computing an approximate equilibrium is in PPAD. They left open the problem of giving the corresponding  hardness results. 

The study of the dichotomous case of matching markets was initiated by Bogomolnaia and Moulin \cite{Bogomolnaia2004random}. They studied a two-sided matching market and they called it an ``important special case of the bilateral matching problem.'' Using the Gallai-Edmonds decomposition of a bipartite graph, they gave a mechanism that is Pareto optimal and group strategyproof. They also gave a number of applications of their setting, some of which are natural applications of one-sided markets as well, e.g., housemates distributing rooms, having different features, in a house. As in the HZ scheme, their mechanism also outputs a doubly-stochastic matrix whose entries represent probability shares of allocations. However, they give another interesting interpretation of this matrix. They say, ``Time sharing is the simplest way to deal fairly with indivisibilities of matching markets: think of a set of workers sharing their time among a set of employers.'' Roth, S\"onmez and \"Unver \cite{Roth2005-JET} extended these results to general graph matching under dichotomous utilities; this setting is applicable to the kidney exchange marketplace. 

An interesting recent paper \cite{Abebe-MM-Truthful} defines the notion of a random partial improvement mechanism for a one-sided matching market. This mechanism truthfully elicits the cardinal preferences of the agents and outputs a distribution over matchings that approximates every agent's utility in the Nash bargaining solution.

Several researchers have proposed Hylland-Zeckhauser-type mechanisms for a number of applications, for instance \cite{Budish2011combinatorial, He2018pseudo, Le2017competitive, Mclennan2018efficient}. The basic scheme has also been generalized in several different directions, including two-sided matching markets, adding quantitative constraints, and to the setting in which agents have initial endowments of goods instead of money, see  \cite{Echenique2019constrained, Echenique2019fairness}.

\section{The Hylland-Zeckhauser Mechanism} 
The Hylland-Zeckhauser (HZ) mechanism can be viewed as a marriage between a fractional perfect matching and a linear Fisher market, which consists of a set $A$ of agents and a set $G$ of goods. Each agent $i$ comes to the market with a budget $b_i$ and has utilities $u_{ij} \geq 0$ for each good $j$. In the case of linear utilities, agent $i$'s utility from allocation $(x_{ij})_{j\in G}$ is $\sum_j u_{ij}x_{ij}$. 

    A \emph{Fisher equilibrium} is a pair $(x, p)$ consisting of an \emph{allocation} $(x_{ij})_{i \in A, j \in G}$ and \emph{prices} $(p_j)_{j \in G}$ such that each agent gets a utility maximizing (optimal) bundle subject to budget constraints and market clears; see more details in Appendix~\ref{sec:fisher}. Fisher equilibria satisfy various nice properties, including equal-type envy-freeness, Pareto optimality, and approximate incentive compatibility in large markets.

\begin{definition}[Envy-freeness, Pareto optimality, and incentive compatibility]
    An allocation is \emph{envy-free} if for any two agents $i, i' \in A$, agent $i$ weakly prefers their allocation than those that $i'$ gets, i.e., $\sum_{j \in G}{u_{ij} x_{ij}} \geq \sum_{j \in G}{u_{ij} x_{i'j}}$.
    It is \emph{equal-type envy-free} if the above holds for any two agents with identical budgets.

    An allocation $x$ \emph{weakly dominates} another allocation $x'$ if no agent prefers $x'$ to $x$.
    It \emph{strongly} dominates $x'$ if it weakly dominates it and some agent prefers $x$ to $x'$.
    An allocation $x$ is \emph{Pareto efficient} or \emph{Pareto optimal} if there is no other allocation $x'$ which strongly dominates it.

    A mechanism is \emph{incentive compatible} if no agent can improve the total utility they accrue by misreporting their preferences/utilities to the mechanism.
\end{definition}

\begin{definition}
    A \emph{one-sided matching market} consists of a set $A$ of \emph{agents} and a set $G$ of \emph{goods}. Each agent has preferences over goods, expressed either using cardinal or ordinal utility functions. An {\em allocation} is a perfect matching of agents to goods. The goal of the market is to find an allocation so that the underlying mechanism has some desirable game-theoretic properties.
\end{definition}

The HZ mechanism uses cardinal utility functions, in which each good is rendered divisible by viewing it as one unit of {\em probability shares}. An HZ equilibrium is defined as follows.

\begin{definition}
    A \emph{Hylland-Zeckhauser (HZ) equilibrium} is a pair $(x, p)$ consisting of an \emph{allocation} $(x_{ij})_{i \in A, j \in G}$ and \emph{prices} $(p_j)_{j \in G}$ with the following properties.
    \begin{enumerate}
    \setlength\itemsep{0em}
        \item $x$ is a fractional perfect matching, i.e., $\sum_{j \in G}{x_{ij}} = 1$ for all $i$ and $\sum_{i \in A}{x_{ij}} = 1$ for all $j$.
        \item Each agent $i$ spends at most their budget, i.e., $\sum_{j \in G}{p_j x_{ij}} \leq b_i$ (usually $b_i = 1$).
        \item Each agent $i$ gets an \emph{optimal bundle}, which is defined to be a cheapest utility maximizing bundle, i.e.,  $\sum_{j \in G}{u_{ij} x_{ij}} = \max \left\{\sum_{j \in G}{u_{ij} y_j} \;\middle|\; \sum_j y_j = 1;\ \sum_{j \in G}{p_j y_j } \leq b_i\right\}$ and $\sum_{j \in G}{p_j x_{ij}} = \min \left\{\sum_{j \in G}{p_j y_j} \;\middle|\; \sum_j y_j = 1; \sum_{j \in G}{u_{ij} y_j} \geq \sum_{j \in G}{u_{ij} x_{ij}}\right\}$.
    \end{enumerate}
\end{definition}

Like Fisher equilibria, HZ equilibria are Pareto optimal, envy-free (assuming unit budgets), and approximately incentive compatible in large markets \cite{He2018pseudo}.\footnote{
Pareto optimality for HZ equilibria requires that each agent receives a {\em cheapest} utility maximizing bundle. If this condition is dropped, we get counter-examples to Pareto optimality: Consider an instance with two agents $a_1$ and $a_2$, and two goods $g_1$ and $g_2$ with $u_{11} = u_{21} = u_{22} = 1; u_{12} = 0$.
    The prices $(2, 0)$ together with the allocation $x_{11} = x_{12} = x_{21} = x_{22} = 0.5$ are optimal bundles, though not cheapest. The utilities in this equilibrium are $0.5$ for agent $a_1$ and $1$ for agent $a_2$. However, there is another HZ equilibrium with prices $(1,p)$, for any $p\in[0, 1]$ with utility 1 for both agents.
}
The allocation $x$ found by the HZ mechanism is a fractional perfect matching or a doubly-stochastic matrix. In order to get an integral perfect matching from $x$, a lottery can be carried out using the Theorem of Birkhoff \cite{Birkhoff1946tres} and von Neumann \cite{von1953certain}. It states that any doubly-stochastic matrix can be written as a convex combination of integral perfect matchings; moreover, this decomposition can be found efficiently. Picking a perfect matching according to the discrete probability distribution determined by this convex combination yields the resulting allocation in the HZ mechanism. 

\section{The $\epsilon$-Approximate ADHZ Model}\label{sec.ADHZ}
In this paper we are interested in an exchange version of the HZ mechanism. Before defining it, we introduce the Arrow-Debreu (exchange) market under linear utility functions, which consists of
   a set $A$ of \emph{agents} and a set $G$ of \emph{goods}.
    Each agent $i$ comes to the market with an \emph{endowment} $e_{ij} \geq 0$ of each good $j$ and also has a utility $u_{ij} \geq 0$.
    Each good $j$ must be fully owned by the agents, i.e., $\sum_{i \in A}{e_{ij}} = 1$ for all $j \in G$.

\begin{definition}
    An \emph{Arrow-Debreu (AD) equilibrium} for a given AD market is a pair $(x, p)$ consisting of an \emph{allocation} $(x_{ij})_{i \in A, j \in G}$ and \emph{prices} $(p_j)_{j \in G}$ with the following properties.
    \begin{enumerate}
    \setlength\itemsep{-0.2em}
        \item Each agent spends at most the budget earned from the endowment, i.e., $\sum_{j}{p_j x_{ij}} \leq b_i \coloneqq \sum_{j}{p_j e_{ij}}$.
        \item Each agent $i$ gets an \emph{optimal bundle}, i.e., utility maximizing bundle at $p$. Formally:
                $\sum_{j \in G}{u_{ij} x_{ij}} = \max \left\{\sum_{j \in G}{u_{ij} y_j} \;\middle|\; y \in \bbR_{\geq 0}^G, \sum_{j \in G}{p_j y_j } \leq b_i\right\}.$
        \item The market clears, i.e., each good with positive price is fully allocated to the agents.
    \end{enumerate}
\end{definition}

The AD model generalizes Fisher model in the sense that any Fisher market can be easily transformed into an AD market by giving each agent a fixed proportion of every good. Clearly, AD equilibria satisfy the condition of individual rationality, defined below, since every agent could always buy back their endowment.

\begin{definition}
    An allocation in an AD market is \emph{individually rational} if for every agent $i$ we have $\sum_{j}{u_{ij} x_{ij}} \geq \sum_{j}{u_{ij} e_{ij}}$, i.e., no agent loses utility by participating in the market.
\end{definition}

However, individual rationality fundamentally clashes with envy-freeness.
Consider a market consisting of two agents each owning a distinct good.
Assume that both agents prefer the good of agent 2 over the good of agent 1, then in any allocation either agent 1 envies agent 2 or agent 2's individual rationality is violated.
For this reason we primarily consider a version of equal-type envy-freeness in exchange markets, which demands envy-freeness only for agents with the same initial endowment.

AD equilibria do not always exist. However, there is a simple necessary and sufficient condition for their existence based on \emph{strong connectivity of demand graph}, due to Gale \cite{Gale76}. 
An RCP for this problem was given by Devanur, Garg and V\'egh \cite{DevanurGV16}.

We now turn to the extension of the HZ mechanism to exchange markets.
    In the \emph{ADHZ market}, we have a set $A$ of \emph{agents} and a set $G$ of \emph{goods} with $|A| = |G| = n$.
    Each agent $i$ comes with an \emph{endowment} $e_{ij} \geq 0$ of each good $j$ and utilities $u_{ij} \geq 0$.
    The endowment vector $e$ is a fractional perfect matching.

\begin{definition}
    An \emph{ADHZ equilibrium} for a given ADHZ market is a pair $(x, p)$ consisting of an \emph{allocation} $(x_{ij})_{i \in A, j \in G}$ and \emph{prices} $(p_j)_{j \in G}$ with the following properties.
    \begin{enumerate}
    \setlength\itemsep{-0.2em}
        \item $x$ is a fractional perfect matching, i.e., $\sum_{j \in G}{x_{ij}} = 1$ for all $i$ and $\sum_{i \in A}{x_{ij}} = 1$ for all $j$.
        \item Each agent spends at most the budget earned from the endowment, i.e., $\sum_{j}{p_j x_{ij}} \leq b_i \coloneqq \sum_{j}{p_j e_{ij}}$.
        \item Each agent $i$ gets an \emph{optimal bundle}, which is defined to be a cheapest utility maximizing bundle, i.e., 
                $\sum_{j \in G}{u_{ij} x_{ij}} = \max \left\{\sum_{j \in G}{u_{ij} y_j} \;\middle|\; \sum_j y_j = 1;\ \sum_{j \in G}{p_j y_j } \leq b_i\right\}$  and $\sum_{j \in G}{p_j x_{ij}} = \min \left\{\sum_{j \in G}{p_j y_j} \;\middle|\; \sum_j y_j = 1;\  \sum_{j \in G}{u_{ij} y_j} \geq \sum_{j \in G}{u_{ij} x_{ij}}\right\}$.
    \end{enumerate}
\end{definition}

\begin{theorem}
    ADHZ equilibria are Pareto optimal, individually rational, and equal-type envy-free.
\end{theorem}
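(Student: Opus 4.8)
The plan is to verify each of the three properties in turn, mostly by adapting the standard arguments for Fisher/HZ equilibria to the exchange setting. Throughout, fix an ADHZ equilibrium $(x,p)$ and write $b_i = \sum_j p_j e_{ij}$ for the budget of agent $i$. Note first that, since $e$ is a fractional perfect matching, $\sum_i b_i = \sum_i \sum_j p_j e_{ij} = \sum_j p_j$, and since $x$ is also a fractional perfect matching, $\sum_i \sum_j p_j x_{ij} = \sum_j p_j$ as well; combined with the budget constraints $\sum_j p_j x_{ij} \le b_i$ this forces $\sum_j p_j x_{ij} = b_i$ for every $i$, i.e.\ every agent spends exactly their budget. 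This ``no surplus'' fact will be used repeatedly.

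\medskip
\noindent\textbf{Individual rationality.} This is the easiest part: the bundle $y := (e_{ij})_{j\in G}$ is a valid competitor in the optimal-bundle program for agent $i$, since it is a probability vector ($\sum_j e_{ij} = 1$ because $e$ is a fractional perfect matching) and it costs exactly $b_i = \sum_j p_j e_{ij}$, hence satisfies the budget constraint. Since $x_i$ is utility-maximizing over this feasible set, $\sum_j u_{ij} x_{ij} \ge \sum_j u_{ij} e_{ij}$.

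\medskip
\noindent\textbf{Equal-type envy-freeness.} Suppose agents $i$ and $i'$ have the same endowment, $e_{i\cdot} = e_{i'\cdot}$, hence the same budget $b_i = b_{i'}$. Then the bundle $x_{i'}$ is feasible for agent $i$'s optimization program: it is a probability vector, and $\sum_j p_j x_{i'j} = b_{i'} = b_i$ by the no-surplus fact. Utility-maximality of $x_i$ then gives $\sum_j u_{ij} x_{ij} \ge \sum_j u_{ij} x_{i'j}$, which is exactly the (equal-type) envy-freeness condition.

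\medskip
\noindent\textbf{Pareto optimality.} This is the main obstacle, and it is exactly where the ``cheapest'' clause in the definition of optimal bundle is needed (cf.\ the footnote in the excerpt). Suppose for contradiction that an allocation $x'$ (also a fractional perfect matching) weakly dominates $x$ and some agent strictly prefers $x'$. For each agent $i$, weak domination gives $\sum_j u_{ij} x'_{ij} \ge \sum_j u_{ij} x_{ij}$. I claim $\sum_j p_j x'_{ij} \ge b_i$ for all $i$, with strict inequality for the agent who strictly gains: indeed, if $\sum_j u_{ij} x'_{ij} > \sum_j u_{ij} x_{ij}$ then $x'_i$ achieves strictly more utility than the optimum achievable within budget $b_i$, so it must cost strictly more than $b_i$; and if $\sum_j u_{ij} x'_{ij} = \sum_j u_{ij} x_{ij}$, then $x'_i$ is a bundle with utility at least that of $x_i$, so by the \emph{cheapest}-bundle clause its cost is at least the cost of $x_i$, which is $b_i$. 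Summing over $i$ yields $\sum_j p_j \sum_i x'_{ij} > \sum_i b_i = \sum_j p_j$. But $x'$ is a fractional perfect matching, so $\sum_i x'_{ij} = 1$ for every good $j$ and the left side equals $\sum_j p_j$ — a contradiction. Hence no such $x'$ exists and $x$ is Pareto optimal.

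\medskip
I expect the bookkeeping around the no-surplus identity and the case split in the Pareto argument to be the only subtle points; everything else is a direct unfolding of the equilibrium definition. One thing to double-check is that prices are not required to be nonnegative or nonzero anywhere in a way that breaks these arguments — the market-clearing/fractional-perfect-matching structure is what makes the counting work, not sign conditions, so the proof should go through as stated.
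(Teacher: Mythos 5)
Your proof is correct, and the individual-rationality and equal-type envy-freeness parts are essentially the paper's own arguments (endowment is an affordable bundle; same endowment $\Rightarrow$ same budget $\Rightarrow$ the other agent's bundle is affordable, and for that step the budget constraint $\sum_j p_j x_{i'j}\le b_{i'}$ already suffices, so your no-surplus identity is not even needed there). Where you genuinely diverge is Pareto optimality: the paper disposes of it in one line by observing that an ADHZ equilibrium \emph{is} an HZ equilibrium for the budgets $b_i=\sum_j p_j e_{ij}$ and invoking the known Pareto optimality of HZ equilibria from \cite{HZ79}, whereas you reprove that fact from scratch via a first-welfare-theorem counting argument: strictly improving agents must overspend their budget, equally-well-off agents must spend at least $\sum_j p_j x_{ij}$ by the cheapest-bundle clause, and your (correct) observation that the two fractional-perfect-matching constraints force every agent to spend exactly $b_i$ turns this into $\sum_{i,j} p_j x'_{ij} > \sum_j p_j$, contradicting that $x'$ is a fractional perfect matching. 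Your route is longer but self-contained, and it makes explicit exactly where the ``cheapest'' clause is indispensable (which the paper only flags in a footnote with a counterexample); the paper's route is shorter and cleaner by treating HZ Pareto optimality as a black box. Both are valid; one could even drop your reliance on the exhausted-budget identity in the Pareto step by comparing against $\sum_j p_j x_{ij}$ directly rather than against $b_i$.
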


\begin{proof}
    Pareto optimality follows from the fact that any ADHZ equilibrium is an HZ equilibrium with certain budgets $b$.
    Since any HZ equilibrium is Pareto optimal, we get the same for ADHZ. 

    Note that the budget of any agent is always enough to buy back their initial endowment.
    Since they get an optimal bundle, they must get something which they value at least as high as their initial endowment.
    Thus individual rationality is guaranteed.

    If two agents, say 1 and 2, have the same endowment, then their budget will be the same and so agent 1 will never value the 2's bundle higher than their own.
    Thus ADHZ equilibria are equal-type envy-free.
\end{proof}

In addition, ADHZ equilibria also satisfy the following notion of core-stability.

\begin{definition}
    An allocation $x$ in an ADHZ market is \emph{weakly core-stable} if for any subsets $A' \subseteq A$ and $G' \subseteq G$, there does not exist an allocation $x' \in \mathbb{R}^{A' \times G'}_{\geq 0}$ such that
    \begin{itemize}
    \setlength\itemsep{-0.3em}
        \item $x'$ allocates at most one unit of goods to every agent in $A'$,
        \item every good $j \in G'$ is allocated at most to the extent of the endowments of the agents in $A'$, i.e.\ $\sum_{i \in A'}{x'_{ij}} \leq \sum_{i \in A'}{e_{ij}}$, and
        \item every agent in $A'$ receives strictly better utility in $x'$ than in $x$.
    \end{itemize}
\end{definition}

\begin{theorem}\label{thm:cs}
    ADHZ equilibria are weakly core-stable.
\end{theorem}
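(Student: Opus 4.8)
The plan is to argue by contradiction and reduce a hypothetical blocking coalition to a violation of the optimal-bundle condition. Suppose $(x,p)$ is an ADHZ equilibrium and that there is a blocking coalition: a nonempty $A' \subseteq A$, a subset $G' \subseteq G$, and $x' \in \mathbb{R}^{A'\times G'}_{\ge 0}$ with $\sum_{j\in G'} x'_{ij} \le 1$ for all $i \in A'$, with $\sum_{i\in A'} x'_{ij} \le \sum_{i\in A'} e_{ij}$ for all $j \in G'$, and with $\sum_{j\in G'} u_{ij} x'_{ij} > \sum_{j\in G} u_{ij} x_{ij}$ for all $i \in A'$. First extend $x'$ by zeros to a matrix in $\mathbb{R}^{A'\times G}_{\ge 0}$; all three conditions are preserved (condition two because endowments are nonnegative), so we may assume $G' = G$.

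The crux of the argument is to \emph{complete} $x'$ to an allocation $\bar x \in \mathbb{R}^{A'\times G}_{\ge 0}$ in which every agent of $A'$ receives exactly one unit of goods while the coalition still consumes no more than its own endowments. Let $d_i := 1 - \sum_j x'_{ij} \ge 0$ be the residual demand of agent $i$ and $r_j := \sum_{i\in A'} e_{ij} - \sum_{i\in A'} x'_{ij} \ge 0$ the residual supply of good $j$. Because $e$ is a fractional perfect matching, $\sum_j e_{ij} = 1$ for every $i$, so the total residual demand $\sum_{i\in A'} d_i = |A'| - \sum_{i\in A',\, j} x'_{ij}$ equals the total residual supply $\sum_j r_j = \sum_{i\in A',\, j} e_{ij} - \sum_{i\in A',\, j} x'_{ij}$. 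A complete bipartite transportation problem with equal total supply and demand is always feasible, so there exist $f_{ij} \ge 0$ with $\sum_j f_{ij} = d_i$ and $\sum_{i\in A'} f_{ij} = r_j$; set $\bar x := x' + f$. Then $\sum_j \bar x_{ij} = 1$ for all $i\in A'$, $\sum_{i\in A'} \bar x_{ij} = \sum_{i\in A'} e_{ij}$ for all $j$, and, since $u \ge 0$ and $f \ge 0$, $\sum_j u_{ij}\bar x_{ij} \ge \sum_j u_{ij} x'_{ij} > \sum_j u_{ij} x_{ij}$ for every $i \in A'$.

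Now invoke the optimal-bundle condition of the equilibrium: for each $i$, the value $\sum_j u_{ij}x_{ij}$ is the maximum utility attainable by a unit-mass bundle costing at most $b_i = \sum_j p_j e_{ij}$. Since $\bar x_i$ is a unit-mass bundle of strictly larger utility, it must fail the budget constraint, i.e.\ $\sum_j p_j \bar x_{ij} > b_i$ for every $i \in A'$. Summing over the nonempty set $A'$ gives $\sum_{i\in A'}\sum_j p_j \bar x_{ij} > \sum_{i\in A'} b_i$. But interchanging the order of summation and using $\sum_{i\in A'}\bar x_{ij} = \sum_{i\in A'} e_{ij}$ yields $\sum_{i\in A'}\sum_j p_j \bar x_{ij} = \sum_j p_j \sum_{i\in A'} e_{ij} = \sum_{i\in A'} b_i$, a contradiction. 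Hence no blocking coalition exists.

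I expect the only non-routine point to be the discrepancy between the blocking allocation, which is merely required to give each coalition member \emph{at most} one unit of goods, and the equilibrium optimal-bundle condition, which speaks about bundles of mass \emph{exactly} one; the completion step via the transportation argument is precisely what bridges this gap, and it is there that the hypothesis that $e$ is a fractional perfect matching is essential (it forces the residual supplies and demands to balance exactly, so that the completion both exists and exhausts the coalition's endowments). The remaining steps are straightforward price accounting.
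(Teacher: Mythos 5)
Your proof is correct and follows essentially the same route as the paper's: show that each coalition member's blocking bundle must cost strictly more than her budget $b_i=\sum_j p_j e_{ij}$, sum over $A'$, and contradict the fact that the coalition's consumption is priced at no more than its own endowment. The only difference is your transportation-style completion of $x'$ to exact unit-mass bundles $\bar x$; the paper instead asserts directly that the (possibly sub-unit) bundles $x'_i$ must exceed the budget, so your extra step in fact handles the ``at most one unit'' versus ``exactly one unit'' mismatch more carefully than the paper's own argument, at no cost to the overall structure.
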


\begin{proof}
    Let $(x, p)$ be some ADHZ equilibrium.
    For the sake of a contradiction, assume that there are $A' \subseteq A$, $G' \subseteq G$ and $x' \in \mathbb{R}^{A' \times G'}_{\geq 0}$ as excluded by the definition of weak core-stability.
    Now consider the total money spent \enquote{along allocation $x'$}, i.e., the quantity $\sum_{i \in A'} \sum_{j \in G'} {p_{j} x'_{ij}}$.

    On the one hand we know that only the endowment of the agents in $A'$ is allocated by $x'$.
    Thus
    \[
        \sum_{i \in A'} \sum_{j \in G'} {p_{j} x'_{ij}} \leq \sum_{i \in A'} \sum_{j \in G'} {p_{j} e_{ij}}.
    \]

    On the other hand, every agent $i$ receives strictly better utility from $x'$ than from $x$.
    But since agents buy optimal bundles in $(x, p)$, this implies that the bundles in $x'$ must be worth more than their budget, i.e.,
    \[
        \sum_{j \in G'} {p_{j} x'_{ij}} > \sum_{j \in G}{p_j e_{ij}} \geq \sum_{j \in G'}{p_j e_{ij}}.
    \]
    Summing this inequality over all $i \in A'$ yields a contradiction to the previous inequality.
\end{proof}

Like in the case of HZ, equilibrium prices in ADHZ are invariant under the operation of \emph{scaling} the difference of prices from $1$, as shown in the following lemma.  

\begin{lemma}\label{lem:scale}
Suppose $p$ be an equilibrium price vector. For any $r>0$, let $p'$ be such that $p_j' - 1 = r(p_j - 1)$ for all $j\in G$. Then $p'$ is also an equilibrium price vector. 
\end{lemma}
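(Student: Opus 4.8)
The plan is to verify that the pair $(x, p')$ satisfies all three defining conditions of an ADHZ equilibrium, using $(x, p)$ as the starting point with the \emph{same} allocation $x$. Condition~1 (that $x$ is a fractional perfect matching) is a statement about $x$ alone and is unaffected by the price change, so it carries over immediately. The crux is to understand how the budget constraint and the optimal-bundle condition transform under the affine reparametrization $p_j' = 1 + r(p_j - 1)$.

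The key observation I would establish first is the following identity: for \emph{any} vector $y \in \bbR_{\geq 0}^G$ with $\sum_j y_j = 1$, we have
\[
    \sum_{j \in G} p_j' y_j = \sum_{j \in G}\bigl(1 + r(p_j - 1)\bigr) y_j = 1 + r\Bigl(\sum_{j \in G} p_j y_j - 1\Bigr).
\]
In other words, on the affine slice $\{\sum_j y_j = 1\}$, the map $y \mapsto \sum_j p_j' y_j$ is just an increasing affine function (with slope $r > 0$) of $y \mapsto \sum_j p_j y_j$. Applying this with $y = x_{i\cdot}$ (legitimate since $\sum_j x_{ij} = 1$) and with $y = e_{i\cdot}$ (legitimate since $e$ is a fractional perfect matching, so $\sum_j e_{ij} = 1$), I get that $\sum_j p_j' x_{ij} \le b_i' \coloneqq \sum_j p_j' e_{ij}$ holds if and only if $\sum_j p_j x_{ij} \le b_i$, which is Condition~2. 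More importantly, since the affine map is strictly increasing, the sets
\[
    \Bigl\{ y \ge 0 : \textstyle\sum_j y_j = 1,\ \sum_j p_j' y_j \le b_i' \Bigr\}
    \quad\text{and}\quad
    \Bigl\{ y \ge 0 : \textstyle\sum_j y_j = 1,\ \sum_j p_j y_j \le b_i \Bigr\}
\]
are \emph{identical} (both equal $\{y \ge 0 : \sum_j y_j = 1,\ \sum_j p_j y_j \le b_i\}$, because the inequality $\sum_j p_j' y_j \le b_i'$ rewrites to $1 + r(\sum_j p_j y_j - 1) \le 1 + r(b_i - 1)$, i.e.\ $\sum_j p_j y_j \le b_i$). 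Hence the utility-maximizing value and the argmax set are unchanged, so the first half of Condition~3 (cheapest utility-maximizing bundle: the ``max'' part) holds for $(x,p')$ exactly because it held for $(x,p)$.

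For the second half of Condition~3 --- that $x_{i\cdot}$ is also a \emph{cheapest} bundle among all $y$ on the simplex achieving at least the same utility --- I again use the same identity: minimizing $\sum_j p_j' y_j$ over $\{y \ge 0 : \sum_j y_j = 1,\ \sum_j u_{ij} y_j \ge \sum_j u_{ij} x_{ij}\}$ is, by the affine relation with positive slope $r$, equivalent to minimizing $\sum_j p_j y_j$ over the same set; and $x_{i\cdot}$ achieves that minimum under $p$ by hypothesis, hence also under $p'$. Combining, $(x, p')$ satisfies all three conditions, so $p'$ is an equilibrium price vector, with the same allocation $x$ and budgets $b_i' = 1 + r(b_i - 1)$.

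I do not anticipate a genuine obstacle here; the one point requiring a little care is that the identity $\sum_j p_j' y_j = 1 + r(\sum_j p_j y_j - 1)$ relies crucially on $\sum_j y_j = 1$, which is why it is essential that both $x$ and $e$ are fractional perfect matchings --- the lemma would fail for a general Arrow--Debreu market without the matching constraint. I would make sure to invoke $\sum_j e_{ij} = 1$ explicitly when handling the budget $b_i'$.
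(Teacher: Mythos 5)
Your proposal is correct and follows essentially the same route as the paper: both exploit that on the simplex $\{\sum_j y_j = 1\}$ (using $\sum_j e_{ij}=1$) the map $y \mapsto \sum_j p_j' y_j$ is an increasing affine function of $y \mapsto \sum_j p_j y_j$, so the budget sets and hence optimal bundles are unchanged. Your write-up is in fact slightly more complete, since you also verify explicitly the budget inequality and the ``cheapest bundle'' (min) half of the optimal-bundle condition, which the paper's proof leaves implicit.
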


\begin{proof}
Let $x$ be an equilibrium allocation at prices $p$. For any agent $i$, we have $\sum_{j\in G} x_{ij}p_j \le \sum_{j\in G} e_{ij}p_j$. We show that the pair $(x, p')$ is also an equilibrium. 

Since $(x,p)$ is an equilibrium, we have
{\small \[\forall i\in A:\ \ \sum_{j\in G} u_{ij}x_{ij} = \max\left\{\sum_{j\in G} u_{ij}y_j\ |\ y\in\mathbb{R}^G_{\ge 0},\ \sum_{j\in G} y_j = 1,\ \sum_{j\in G} y_jp_j \le \sum_{j\in G} e_{ij}p_j\right\}.\] }
Replacing $p_j$ by $(p_j' - 1)/r + 1$ for all $j\in G$, we get:
{\small \[\forall i\in A:\ \ \sum_{j\in G} u_{ij}x_{ij} = \max\left\{\sum_{j\in G} u_{ij}y_j\ |\ y\in\mathbb{R}^G_{\ge 0},\ \sum_{j\in G} y_j = 1,\ \sum_{j\in G} y_j \left(\frac{p_j'-1}{r} + 1 \right) \le \sum_{j\in G} e_{ij} \left(\frac{p_j'-1}{r}+1 \right) \right\}.\] }
Simplifying the above using $\sum_{j\in G} e_{ij} = 1$ and $\sum_{j\in G} y_j = 1$ for all $i\in A$, we get:
{\small \[\forall i\in A:\ \ \sum_{j\in G} u_{ij}x_{ij} = \max\left\{\sum_{j\in G} u_{ij}y_j\ |\ y\in\mathbb{R}^G_{\ge 0},\ \sum_{j\in G} y_j = 1,\ \sum_{j\in G} y_jp_j' \le \sum_{j\in G} e_{ij}p_j'\right\}.\] }
The above implies that $x$ gives each agent an optimal bundle at prices $p'$. This, together with the fact that $x$ is a fractional perfect matching, shows that $(x,p')$ is also an equilibrium. 
\end{proof}

Unlike HZ, which always admits an equilibrium, ADHZ has instances which do not admit an equilibrium, as observed by Hylland and Zeckhauser \cite{HZ79}. 
Below we give a counterexample in which the demand graph is strongly connected and utilities are dichotomous.

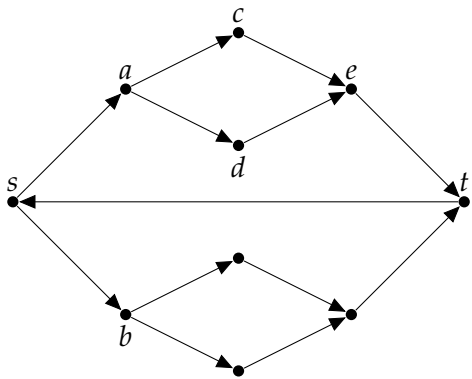
\begin{figure}[t]
\vskip -1cm
    \begin{center}
        \begin{tikzpicture}[scale=1.5]
                \node[fill, circle, inner sep=1.5pt] (v1) at (0, 0) {};
                \node[fill, circle, inner sep=1.5pt] (v2) at (4, 0) {};

                \node[fill, circle, inner sep=1.5pt] (v3) at (1, 1) {};
                \node[fill, circle, inner sep=1.5pt] (v4) at (1, -1) {};

                \node[fill, circle, inner sep=1.5pt] (v5) at (2, 1.5) {};
                \node[fill, circle, inner sep=1.5pt] (v6) at (2, 0.5) {};
                \node[fill, circle, inner sep=1.5pt] (v7) at (2, -0.5) {};
                \node[fill, circle, inner sep=1.5pt] (v8) at (2, -1.5) {};

                \node[fill, circle, inner sep=1.5pt] (v9) at (3, 1) {};
                \node[fill, circle, inner sep=1.5pt] (v10) at (3, -1) {};

                \draw[arrows={-triangle 45}] (v1) -- (v3);
                \draw[arrows={-triangle 45}] (v1) -- (v4);
                \draw[arrows={-triangle 45}] (v3) -- (v5);
                \draw[arrows={-triangle 45}] (v3) -- (v6);
                \draw[arrows={-triangle 45}] (v4) -- (v7);
                \draw[arrows={-triangle 45}] (v4) -- (v8);
                \draw[arrows={-triangle 45}] (v5) -- (v9);
                \draw[arrows={-triangle 45}] (v6) -- (v9);
                \draw[arrows={-triangle 45}] (v7) -- (v10);
                \draw[arrows={-triangle 45}] (v8) -- (v10);
                \draw[arrows={-triangle 45}] (v9) -- (v2);
                \draw[arrows={-triangle 45}] (v10) -- (v2);
                \draw[arrows={-triangle 45}] (v2) -- (v1);

                \node[above] at (v1) {$s$};
                \node[above] at (v2) {$t$};

                \node[above] at (v3) {$a$};
                \node[below] at (v4) {$b$};
                \node[above] at (v5) {$c$};
                \node[below] at (v6) {$d$};
                \node[above] at (v9) {$e$};
        \end{tikzpicture}
    \end{center}
\vskip -0.5cm
    \caption{The demand graph of an ADHZ market with dichotomous utilities and no equilibrium. Each node represents an agent as well as the good possessed by this agent in her initial endowment. An arrow from $i$ to $j$ represents $u_{ij} = 1$; the rest of the  edges have utility 0. \label{fig:no_equilibrium}}
\end{figure}

\begin{proposition}\label{prop:counter_example}
    The ADHZ market with dichotomous utilities in Figure~\ref{fig:no_equilibrium} does not admit an equilibrium.
\end{proposition}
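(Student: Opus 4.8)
The strategy is a proof by contradiction: assume an ADHZ equilibrium $(x,p)$ exists for the instance in Figure~\ref{fig:no_equilibrium}, and derive contradictory constraints on the prices. Since this is an exchange market, each node $v$ is simultaneously an agent and the good she owns in her endowment, so her budget is $b_v = p_v$ (with the endowment a perfect matching, agent $v$ owns exactly one unit of good $v$). The key structural observation is that the demand graph has a very specific layered ``diamond'' shape from $s$ to $t$ and then a single back edge $t\to s$; I would exploit this rigidity together with the fact that utilities are dichotomous ($0$/$1$), so an optimal bundle for agent $i$ puts all its mass on the cheapest goods that $i$ likes (value $1$), unless even those are too expensive, in which case $i$ must buy some goods it values at $0$.

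First I would normalize using Lemma~\ref{lem:scale}: equilibrium prices are invariant under scaling differences from $1$, so I may fix the scale by, say, setting some convenient price or the gap between two prices. Then I would reason locally. Agent $t$ only likes good $s$; with budget $p_t$ she wants to buy as much of $s$ as possible. Agent $s$ only likes goods $a$ and $b$. Agents $a,b$ each like a pair of the middle-layer goods ($a$ likes $c,d$; $b$ likes $e$ and the fourth middle node — call it $f$), and so on down to $e$ (and its twin) liking $t$. I would trace the forced chain of ``who must spend money where'' and ``whose good must therefore be fully bought,'' using the market-clearing condition: every good with positive price is fully sold. Because each agent along the two symmetric branches is forced to spend its entire budget on the next layer, and the two branches reconverge, I expect to derive a cascade of price inequalities of the form $p_s \le$ (something) $\le \cdots \le p_t \le p_s$, but with at least one strict inequality forced somewhere — e.g. because some agent would have slack and could improve, or because the ``cheapest optimal bundle'' clause forces a strict price drop — yielding $p_s < p_s$. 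The symmetry between the two branches should let me argue that the two copies of each middle good have equal price, collapsing the case analysis.

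The main obstacle, I expect, is handling the case where prices are such that some agent cannot afford any good she values at $1$ and is therefore forced to spend on a $0$-valued good; one must check that this still creates a contradiction rather than an escape hatch, and in particular that market clearing for the $0$-priced or cheaply-priced goods doesn't absorb the inconsistency. I would deal with this by a short case split on whether each ``hub'' agent ($s$, the middle agents, $t$) gets positive utility or not, showing that the zero-utility scenarios violate individual rationality (Theorem on individual rationality of ADHZ equilibria) or leave a good unsold at a positive price. Pinning down which single edge in the cycle carries the forced strict inequality — and confirming it is forced no matter how the budgets compare — is the delicate combinatorial heart of the argument; everything else is bookkeeping with the equilibrium conditions.
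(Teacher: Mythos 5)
Your plan does not contain the idea that actually makes this instance fail, and the step you defer as ``the delicate combinatorial heart'' --- forcing a strict inequality somewhere in a cyclic chain $p_s \le \cdots \le p_t \le p_s$ --- is precisely the part that cannot be carried out as described. Non-existence here is not a cycle phenomenon at all (a plain directed cycle economy, each agent liking exactly the next good, does admit an ADHZ equilibrium with all prices equal); it is driven by the branching and reconverging of the diamond. The paper's argument is a zero-price cascade: after using Lemma~\ref{lem:scale} to normalize the minimum price to $0$ (so that, by the cheapest-bundle condition, zero-utility ``filler'' is only ever bought at price $0$), observe that goods $a$ and $b$ together supply two units but are liked only by $s$, who buys one unit in total; hence some positive amount of $a$ or $b$ is bought by a non-liker, so w.l.o.g.\ $p_a=0$. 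Then agent $a$ has zero budget, and since $c,d$ are liked only by $a$ and must nevertheless be fully sold (the allocation is a fractional perfect matching), $p_c=p_d=0$; for the same reason $p_e=0$. Now each of $c$ and $d$ has a unique cheapest optimal bundle, namely one full unit of the free good $e$, so $e$ is demanded twice --- contradiction. Your sketch never isolates this supply/demand imbalance at $\{a,b\}$ versus the single buyer $s$, which is the engine of the whole proof.

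Two further points in your fallback plan would fail. First, individual rationality gives you nothing in this instance: no agent likes her own endowment, so every agent's disagreement utility is $0$ and zero-utility outcomes are individually rational. Second, ``a good left unsold at a positive price'' cannot occur in ADHZ, because the equilibrium allocation is required to be a fractional perfect matching; the correct lever is not market clearing of positively priced goods but the fact that a positively priced good liked only by budgetless agents cannot appear in anyone's cheapest optimal bundle, even though it must be fully allocated.
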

\begin{proof}    
Assume there is an equilibrium $(x,p)$ in this market. Further, using Lemma~\ref{lem:scale}, we can assume that the minimum price is zero at $p$. This implies that no agent will buy a zero utility good at a positive price. 

Each agent buys a total of one unit of goods and $s$ is the only agent having positive utility for goods $a$ and $b$. Therefore, at least one of these goods is not fully sold to $s$ and must be sold to an agent deriving zero utility from it. Therefore this good must have zero price.  Without loss of generality, assume $p_a = 0$. Since $a$ has no budget and $c$ and $d$ are desired only by $a$, $p_c = p_d = 0$, otherwise $c$ and $d$ cannot be sold.
    For the same reason, $p_e = 0$.
     Now observe that both agents $c$ and $d$ have a utility 1 edge to a good of price zero, namely $e$. Therefore, the optimal bundle of both $c$ and $d$ is $e$.
    But then $e$ would have to be matched twice which is a contradiction.
\end{proof}

Even if ADHZ equilibria \emph{do} exist, computing them is at least as hard as computing HZ equilibria. This follows from the following reduction.

\begin{proposition}
\label{prop.reduction}
    Consider an HZ market with unit budgets.
    Define an ADHZ market by giving every agent as endowment an equal amount of every good.
    Then every HZ equilibrium in which the prices sum up to $n$ is an ADHZ equilibrium and every ADHZ equilibrium yields an HZ equilibrium by rescaling all prices by $n / \sum_{j \in G}{p_j}$.
\end{proposition}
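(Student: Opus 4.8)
The plan is to build both directions on one trivial computation: in the constructed ADHZ market every agent is endowed with $e_{ij} = 1/n$, so for \emph{any} price vector $p$ the ADHZ budget of agent $i$ is $b_i = \sum_{j} p_j e_{ij} = \frac{1}{n}\sum_{j \in G} p_j$, which is the same for all agents and depends only on the price sum $\sum_j p_j$. I would also record at the outset that the endowment $e$ is itself a fractional perfect matching, so the ADHZ market is well-defined.

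For the forward direction, let $(x,p)$ be an HZ equilibrium with unit budgets and $\sum_j p_j = n$. Then the identity above gives $b_i = 1$ for every agent, so the three ADHZ-equilibrium requirements --- $x$ a fractional perfect matching, $\sum_j p_j x_{ij} \le b_i$, and $x_{i\cdot}$ a cheapest utility-maximizing bundle in the budget set $\{y : \sum_j y_j = 1,\ \sum_j p_j y_j \le b_i\}$ --- become word-for-word the HZ-equilibrium requirements at unit budget, which hold by assumption. Hence $(x,p)$ is an ADHZ equilibrium; this direction is pure bookkeeping.

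For the reverse direction, let $(x,p)$ be an ADHZ equilibrium, set $P := \sum_j p_j$, assume $P > 0$, and let $p' := (n/P)\,p$, so $\sum_j p'_j = n$. By the identity above the ADHZ budget $P/n$ of every agent is scaled to exactly $1$ under this reweighting. The point to verify is that every price-dependent condition in the ADHZ-equilibrium definition survives this simultaneous scaling of prices and budget by the common positive factor $n/P$: the budget inequality $\sum_j p_j x_{ij}\le b_i$ scales to $\sum_j p'_j x_{ij} \le 1$; the budget set $\{y:\sum_j y_j=1,\ \sum_j p_j y_j\le b_i\}$ is unchanged as a set (multiply the price constraint through by $n/P$), so the utility-maximization condition is preserved; and in the cost-minimization identity $\sum_j p_j x_{ij} = \min\{\sum_j p_j y_j : \sum_j y_j=1,\ \sum_j u_{ij} y_j \ge \sum_j u_{ij} x_{ij}\}$ both sides scale by $n/P$ while the feasible region (which involves no prices) is untouched. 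Therefore $(x,p')$ meets all three HZ-equilibrium conditions at unit budget.

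The only place needing care --- the step I would flag as the genuine, if minor, obstacle --- is the degenerate case $P = 0$, where the rescaling factor is undefined. I would dispose of it by observing that one may assume $P>0$ without loss: if all prices are zero then each agent's optimal-bundle condition forces $x_{i\cdot}$ to be supported on $i$'s favorite goods, and the very same $x$ is then an HZ equilibrium with, say, all prices equal to $1$ (at those prices every feasible bundle costs exactly $1$, so the utility-maximization problem is identical to the zero-price one). With that caveat handled, everything else is routine substitution using $e_{ij} = 1/n$, $\sum_j e_{ij} = 1$, and $\sum_j x_{ij} = 1$.
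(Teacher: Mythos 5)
Your proof is correct; the paper in fact states Proposition~\ref{prop.reduction} without proof, treating it as immediate, and your argument is exactly the routine verification intended: with $e_{ij}=1/n$ every agent's ADHZ budget is $\frac{1}{n}\sum_j p_j$, so the two equilibrium definitions coincide after the common positive rescaling. Your separate treatment of the degenerate case $\sum_j p_j = 0$ (where the stated rescaling factor is undefined) is a sensible extra precaution that the paper's statement glosses over, and your fix — reprice all goods at $1$ and note the budget set becomes the full simplex — is valid.
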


\cite{VY-HZ} gave an instance of HZ with four agents and four goods which has one equilibrium in which all agents fully spend their budgets, and allocations and prices are irrational. Since this example satisfies the conditions of Proposition~\ref{prop.reduction}, we get that the modification of the example of \cite{VY-HZ}, as stated in the Proposition, is an instance for ADHZ having only irrational equilibria.

\subsection{Existence and Properties of \texorpdfstring{$\epsilon$}{epsilon}-Approximate ADHZ Equilibria}
\label{sec.epsilon}

Since ADHZ equilibria do not always exist we study the following approximate equilibrium notion instead.

\begin{definition}
    An \emph{$\epsilon$-approximate ADHZ equilibrium} is an HZ equilibrium $(x, p)$ for a budget vector $b$ with
    \[
        (1 - \epsilon) \sum_{j \in G}{p_j e_{ij}} \leq b_i \leq \epsilon + \sum_{j \in G}{p_j e_{ij}} \ \ \ \text{ for all } i\in A\enspace . 
    \]
    We also require that if two agents have the same endowment, then their budget should also be the same.\footnote{
    The additive error term on the upper bound is necessary as otherwise the counterexample 
    from Proposition~\ref{prop:counter_example} still works.
    On the other hand, the multiplicative lower bound is useful to get approximate individual rationality.
    However, one can always find approximate equilibria in which the sum of prices is bounded by $n$, so this implies
$      \sum_{j \in G}{p_j e_{ij}} - \epsilon' \leq b_i \leq \sum_{j \in G}{p_j e_{ij}} + \epsilon'
    \ \ \ \  \text{for}\ \ \epsilon' \coloneqq n \epsilon.$
    Further, by using the instance in Figure~\ref{fig:no_equilibrium} as a gadget, it is easy to construct very restricted ADHZ markets which require a constant fraction of agents to get money over their ADHZ budget in order for an equilibrium to exist.
    Thus, it is not possible to simply inject some money into one vertex of every strongly connected component in order to obtain an equilibrium.
}

\end{definition}

In our notion of approximate equilibrium, we do not relax the fractional perfect matching constraints or the optimum bundle condition.
We only allow the budgets of agents to be slightly different from the money they would normally obtain in an ADHZ market. Hence the step of randomly rounding the equilibrium allocation to an integral perfect matching is the same as in the HZ scheme.

\begin{theorem}
    Any $\epsilon$-approximate ADHZ equilibrium is Pareto optimal, $\epsilon$-approximately individually rational, equal-type envy-free.
\end{theorem}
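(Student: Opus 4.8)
The plan is to verify each of the three properties in turn, leveraging the corresponding facts already established for exact ADHZ equilibria. The key observation is that an $\epsilon$-approximate ADHZ equilibrium $(x,p)$ is, by definition, an honest HZ equilibrium for some budget vector $b$. So for Pareto optimality I would simply invoke the fact (used already in the proof that ADHZ equilibria are Pareto optimal) that \emph{every} HZ equilibrium is Pareto optimal, regardless of the budget vector; since the definition of $\epsilon$-approximate ADHZ equilibrium retains the cheapest-optimal-bundle condition, nothing is lost and Pareto optimality transfers verbatim.

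For $\epsilon$-approximate individual rationality, I would first use Lemma~\ref{lem:scale} together with the remark (in the footnote to the definition of $\epsilon$-approximate ADHZ equilibrium) that one may assume the prices sum to $n$, so that the two-sided bound on $b_i$ becomes $\sum_{j} p_j e_{ij} - \epsilon' \le b_i \le \sum_j p_j e_{ij} + \epsilon'$ with $\epsilon' = n\epsilon$; more directly, the multiplicative lower bound $b_i \ge (1-\epsilon)\sum_j p_j e_{ij}$ is exactly what is needed. The argument then mirrors the individual-rationality proof for exact ADHZ: agent $i$ buys a utility-maximizing bundle subject to spending at most $b_i$, and with budget $b_i \ge (1-\epsilon)\sum_j p_j e_{ij}$ the agent can afford the scaled-down endowment $(1-\epsilon)e_i$ — but this is not immediately a valid bundle since it sums to $1-\epsilon < 1$. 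The fix is to note that the agent can afford a bundle that puts mass $1-\epsilon$ on $e_i$ and the remaining $\epsilon$ on a cheapest good (price $0$ after normalization, or in general a good of minimum price, whose cost is at most $\epsilon \cdot \frac{1}{n}\sum_j p_j \le \epsilon$ under the sum-to-$n$ normalization and is anyway absorbed into the additive slack $b_i \le \epsilon + \sum_j p_j e_{ij}$). Since the agent's realized utility is at least that of any affordable fractional matching, $\sum_j u_{ij} x_{ij} \ge (1-\epsilon)\sum_j u_{ij} e_{ij}$, which is exactly $\epsilon$-approximate individual rationality. I would be slightly careful here about whether the additive or multiplicative slack is the one doing the work, and present the cleanest version.

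For equal-type envy-freeness, the argument is essentially unchanged from the exact ADHZ case: the definition of $\epsilon$-approximate ADHZ equilibrium explicitly requires that two agents with the same endowment receive the same budget $b_i = b_{i'}$. Then, since each gets a utility-maximizing bundle subject to the \emph{same} budget constraint at the \emph{same} prices, agent $i$'s bundle is at least as good for $i$ as $i'$'s bundle, i.e.\ $\sum_j u_{ij} x_{ij} \ge \sum_j u_{ij} x_{i'j}$, so $i$ does not envy $i'$.

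The main obstacle is the individual-rationality step: one must argue carefully that the budget slack (multiplicative on the low side, additive on the high side) is enough to afford a genuine fractional perfect-matching bundle that dominates a $(1-\epsilon)$-fraction of the endowment, which requires handling the leftover $\epsilon$ unit of probability mass and confirming it can be parked on a cheap good without violating the budget. Once the normalization $\sum_j p_j = n$ (or the existence of a zero-price good after applying Lemma~\ref{lem:scale}) is in place, this is routine, but it is the one place where the approximation parameters genuinely enter.
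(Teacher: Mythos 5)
Your Pareto-optimality and equal-type envy-freeness arguments are correct and are exactly the paper's: the pair is by definition an HZ equilibrium, and agents with identical endowments face the same prices and, by the definition, the same budget. The trouble is the individual-rationality step, and you have located the right difficulty, but your patch does not close it. Lemma~\ref{lem:scale} is proved for exact ADHZ equilibria, where $b_i=\sum_j p_je_{ij}$; for an HZ equilibrium with a \emph{fixed} budget vector, the rescaling $p_j\mapsto r(p_j-1)+1$ forces the budgets to transform affinely as $b_i\mapsto r(b_i-1)+1$, and this does not preserve the multiplicative bound $b_i\ge(1-\epsilon)\sum_j p_je_{ij}$, so you cannot assume a zero minimum price; likewise $\sum_j p_j\le n$ is, in the footnote, an existence statement about \emph{some} approximate equilibria, not a property of an arbitrary given one. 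Finally, the additive slack $b_i\le \epsilon+\sum_j p_je_{ij}$ is an upper bound on $b_i$ and supplies no purchasing power: if $b_i$ sits at the lower bound $(1-\epsilon)\sum_j p_je_{ij}$, the cost $\epsilon\,p_{\min}$ of parking the leftover mass on a cheapest good is simply unaffordable unless $p_{\min}=0$.

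This is not a removable technicality: the multiplicative guarantee can genuinely fail for an arbitrary $\epsilon$-approximate ADHZ equilibrium. Take $\epsilon=\tfrac15$; goods $g,h_1,h_2$ with prices $1,\tfrac1{10},\tfrac1{10}$; agents $1,2$ like only $g$, agent $3$ likes $h_1,h_2$; endowments $e_1=g$, $e_2=h_1$, $e_3=h_2$; budgets $b=(\tfrac45,\tfrac3{10},\tfrac1{10})$, which lie in the required windows (agent $2$'s is exactly at its cap $\epsilon+\tfrac1{10}$). The allocation $x_{1g}=\tfrac79$, $x_{1h_1}=\tfrac29$, $x_{2g}=\tfrac29$, $x_{2h_1}=\tfrac79$, $x_{3h_2}=1$ is an HZ equilibrium for $b$: every good clears and each agent exhausts her budget on a cheapest utility-maximizing unit bundle (e.g.\ agent $1$'s maximum affordable amount of $g$ is $(\tfrac45-\tfrac1{10})/(1-\tfrac1{10})=\tfrac79$). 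Yet agent $1$'s utility is $\tfrac79<\tfrac45=(1-\epsilon)\sum_j u_{1j}e_{1j}$. To be fair, the paper's own one-line proof has the same gap --- it silently assumes the $(1-\epsilon)$-scaled endowment can be completed to a unit bundle at no cost --- so your instinct to be \enquote{slightly careful here} was exactly right; but as written the step fails, and it can only be repaired by adding a hypothesis (a zero-priced good, as in the equilibria the paper's algorithm actually outputs), by weakening to an additive error in the spirit of the footnote's $\epsilon'=n\epsilon$ reformulation, or by stating a bound that accounts for the cheapest price, such as $\sum_j u_{ij}x_{ij}\ge\frac{b_i-p_{\min}}{\sum_j p_je_{ij}-p_{\min}}\sum_j u_{ij}e_{ij}$.
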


\begin{proof}
    Pareto optimality follows just as for the non-approximate ADHZ setting from the fact that an $\epsilon$-approximate ADHZ equilibrium is first and foremost an HZ equilibrium.
    For approximate individual rationality note that every agent gets a budget of at least $(1 - \epsilon)$ times the cost of their endowment.
    Hence their utility can decrease by at most a factor of $(1 - \epsilon)$.
    Equal-type envy-freeness follows immediately from the condition that agents with the same endowment have the same budget.
\end{proof}

One can also define a suitably $\epsilon$-approximate notion of weak core-stability, where instead of demanding that every agent strictly improves in the seceding coalition, we instead require that every agent improves by a factor of more than $\frac{1}{1 - \epsilon}$.

\begin{theorem}
    Any $\epsilon$-approximate ADHZ equilibrium is $\epsilon$-approximately weak-core stable.
\end{theorem}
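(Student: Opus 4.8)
The plan is to mimic the proof of Theorem~\ref{thm:cs} essentially verbatim, tracking where the approximation factors enter. Let $(x,p)$ be an $\epsilon$-approximate ADHZ equilibrium with budget vector $b$, and suppose for contradiction that there exist $A' \subseteq A$, $G' \subseteq G$, and $x' \in \mathbb{R}^{A'\times G'}_{\ge 0}$ witnessing a violation of the $\epsilon$-approximate weak-core-stability condition, i.e.\ $x'$ allocates at most one unit to each agent in $A'$, $\sum_{i\in A'} x'_{ij} \le \sum_{i\in A'} e_{ij}$ for every $j\in G'$, and every agent $i\in A'$ gets utility in $x'$ that is more than a $\frac{1}{1-\epsilon}$ factor times their utility in $x$. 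As before, I would estimate the total money $\sum_{i\in A'}\sum_{j\in G'} p_j x'_{ij}$ spent along $x'$ from two sides.

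First, the upper bound: since $x'$ allocates only the endowment of the agents in $A'$ on the goods in $G'$, we get $\sum_{i\in A'}\sum_{j\in G'} p_j x'_{ij} \le \sum_{i\in A'}\sum_{j\in G'} p_j e_{ij} \le \sum_{i\in A'}\sum_{j\in G} p_j e_{ij}$. This step is identical to the exact case and uses only nonnegativity of prices.

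Second, the lower bound, which is where the factor $\frac{1}{1-\epsilon}$ is used. For each $i\in A'$, the bundle $x'_i$ restricted to $G'$ is a feasible bundle of total mass at most one and attains utility strictly greater than $\frac{1}{1-\epsilon}\sum_{j\in G} u_{ij}x_{ij}$. Here I must be slightly careful: the HZ optimal-bundle condition is about bundles of total mass exactly one, whereas $x'_i$ may have mass less than one; but since prices are nonnegative one can pad $x'_i$ up to mass one at a cheapest (minimum-price) good without decreasing utility or increasing cost beyond that of a mass-one optimal bundle — or, more simply, one observes that the maximum utility achievable under budget $b_i$ with mass-one bundles is monotone, so a cheaper-or-equal-cost mass-$\le 1$ bundle achieving more utility than $x_i$ would contradict optimality of $x_i$ if its cost were $\le b_i$. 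Hence $\sum_{j\in G'} p_j x'_{ij} > b_i$, because otherwise $x'_i$ (suitably padded) would be affordable and strictly better than the optimal bundle $x_i$. Now invoke the lower-budget bound from the definition of $\epsilon$-approximate equilibrium: $b_i \ge (1-\epsilon)\sum_{j\in G} p_j e_{ij}$, so $\sum_{j\in G'} p_j x'_{ij} > (1-\epsilon)\sum_{j\in G} p_j e_{ij}$. Summing over $i\in A'$ gives $\sum_{i\in A'}\sum_{j\in G'} p_j x'_{ij} > (1-\epsilon)\sum_{i\in A'}\sum_{j\in G} p_j e_{ij}$.

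These two bounds are not yet contradictory, so the final step is to use the price-scaling invariance (Lemma~\ref{lem:scale}) together with the remark in the definition's footnote that one may assume $\sum_{j\in G} p_j = n$, hence rescale so that… — actually the cleaner route is: the utility improvement is strict by a factor $>\frac{1}{1-\epsilon}$, so in fact $\sum_{j\in G'} p_j x'_{ij}$ exceeds $b_i$ by using that $x_i$ is a \emph{cheapest} optimal bundle, giving $\sum_{j\in G'} p_j x'_{ij} > b_i \ge (1-\epsilon)\sum_{j} p_j e_{ij}$; but the upper bound says the sum is $\le \sum_{i\in A'}\sum_j p_j e_{ij}$, and the strictness combined with the factor $(1-\epsilon)$ does \emph{not} immediately close the gap. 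The resolution — and the step I expect to be the main obstacle — is that the "$\frac{1}{1-\epsilon}$ improvement" must be leveraged against the cheapest-bundle condition: if agent $i$ in the coalition attains utility $> \frac{1}{1-\epsilon} u_i(x)$ at cost $c_i := \sum_{j\in G'} p_j x'_{ij}$, then scaling the bundle $x'_i$ down by a factor $(1-\epsilon)$ (and padding with a zero-price good to restore mass one) yields a mass-one bundle of utility $> u_i(x)$ at cost $(1-\epsilon)c_i$; by optimality (cheapest maximizer) of $x_i$ this forces $(1-\epsilon) c_i > b_i \ge (1-\epsilon)\sum_j p_j e_{ij}$, i.e.\ $c_i > \sum_j p_j e_{ij} \ge \sum_{j\in G'} p_j e_{ij}$. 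Summing over $i\in A'$ contradicts the upper bound $\sum_{i\in A'} c_i \le \sum_{i\in A'}\sum_{j\in G'} p_j e_{ij}$. The one technical point to nail down carefully is the padding/scaling argument ensuring a min-price (ideally zero-price) good is available to restore unit mass without extra cost; if no zero-price good exists one instead argues directly that the optimal-bundle value is a nondecreasing, positively-homogeneous-of-degree-one function of the budget, so that a bundle of cost $(1-\epsilon)b_i/\text{(something)}$ beating $u_i(x)$ is impossible — I would work this out via Lemma~\ref{lem:scale} to normalize prices so that the minimum price is $0$, exactly as in the proof of Proposition~\ref{prop:counter_example}.
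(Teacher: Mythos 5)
Your final argument coincides with the paper's: a utility gain by a factor exceeding $\frac{1}{1-\epsilon}$ forces agent $i$ to spend more than $\frac{b_i}{1-\epsilon} \geq \sum_{j \in G} p_j e_{ij}$, after which the endowment-based money-counting of Theorem~\ref{thm:cs} yields the contradiction, exactly as in the paper. The scale-down-by-$(1-\epsilon)$-and-pad justification you labor over is precisely the step the paper asserts without proof, and the zero-price-good/padding caveat you flag is equally implicit in the paper's own proof of Theorem~\ref{thm:cs}, so your write-up is essentially the same proof, spelled out more carefully.
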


\begin{proof}
    Let $(x, p)$ be an $\epsilon$-approximate ADHZ equilibrium for some budget vector $b$.
    Then in order for some other allocation $x'$ to improve agent $i$'s utility by a factor of more than $\frac{1}{1 - \epsilon}$, $i$ must spend more than $\frac{b_i}{1 - \epsilon}$.
    But note that
        $\frac{b_i}{1 - \epsilon} \geq \sum_{j \in G}{p_j e_{ij}}.$
    From here the proof is identical to that of Theorem~\ref{thm:cs}. 
\end{proof}

While approximate equilibrium notions are more amenable to computation, they generally do not lend themselves well to existence proofs.
However, our notion of $\epsilon$-approximate ADHZ equilibrium is a slight relaxation of the notion of an $\alpha$-slack equilibrium introduced in \cite{Echenique2019constrained}.

\begin{definition}
    An \emph{$\alpha$-slack ADHZ equilibrium} for $\alpha \in (0, 1]$ is an HZ equilibrium $(x, p)$ for a budget vector $b$ in which $b_i = \alpha + (1 - \alpha) \sum_{j \in G}{p_j e_{ij}}$ for all $i \in A$.
\end{definition}

\begin{theorem}[Theorem 2 in \cite{Echenique2019constrained}]
    In any ADHZ market, $\alpha$-slack equilibria always exist if $\alpha > 0$.
\end{theorem}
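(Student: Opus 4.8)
The plan is to recast the statement as a market--equilibrium existence problem and solve it by a Kakutani / Gale--Nikaido--Debreu fixed-point argument, the key structural feature being that the $\alpha$-slack budgets $b_i(p) = \alpha + (1-\alpha)\sum_{j \in G} p_j e_{ij}$ are \emph{strictly positive} (bounded below by $\alpha$) and continuous in $p$. First I would fix the price domain to the scaled simplex $P := \{p \in \bbR^G_{\ge 0} : \sum_{j \in G} p_j = n\}$, which is compact and convex; the normalization $\sum_j p_j = n$ is the natural one because in any candidate equilibrium every good is fully sold, so $\sum_j p_j = \sum_{i,j} p_j x_{ij} \le \sum_i b_i(p) = \alpha n + (1-\alpha)\sum_j p_j$, and it is exactly the value that makes total budget equal total price (used below). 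On $P$ one has $b_i(p) \in [\alpha,\, n-\alpha(n-1)]$, and the cheapest single good is always affordable: its cost $\min_j p_j \le 1$ satisfies $b_i(p) \ge \alpha + (1-\alpha)\min_j p_j \ge \min_j p_j$, with \emph{strict} inequality unless $\min_j p_j = 1$, i.e.\ unless $p = \mathbf 1$. This strict Slater slack, available precisely because $\alpha > 0$, holds everywhere on $P \setminus \{\mathbf 1\}$.

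Next I would introduce, for each $p \in P$ and agent $i$, the indirect utility $v_i(p) := \max\{\langle u_i, y\rangle : y \in \Delta_G,\ \langle p, y\rangle \le b_i(p)\}$ and the cheapest-optimal-bundle correspondence $\delta_i(p) := \argmin\{\langle p, y\rangle : y \in \Delta_G,\ \langle u_i, y\rangle \ge v_i(p)\}$. One checks: every element of $\delta_i(p)$ is automatically budget-feasible (the budget-constrained utility maximizer witnesses this); $\delta_i(p)$ is nonempty, compact and convex (a face of a polytope); and, using Berge's maximum theorem together with the Slater condition above, $\delta_i$ is upper hemicontinuous on $P \setminus \{\mathbf 1\}$. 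The aggregate excess-demand correspondence $Z(p) := \{\sum_{i \in A} y^i - \mathbf 1 : y^i \in \delta_i(p)\ \forall i\}$ then inherits nonempty/compact/convex values and upper hemicontinuity off $\mathbf 1$, and the goal reduces to producing $p^* \in P$ and $z^* \in Z(p^*)$ with $z^* \le 0$.

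The engine is the weak Walras inequality: for any $p \in P$ and $z = \sum_i y^i - \mathbf 1 \in Z(p)$,
\[
\langle p, z\rangle = \sum_{i \in A}\langle p, y^i\rangle - \sum_{j \in G} p_j \;\le\; \sum_{i \in A} b_i(p) - n \;=\; \Big(\alpha n + (1-\alpha)\sum_{j \in G} p_j\Big) - n \;=\; 0,
\]
using $\sum_i e_{ij} = 1$ and $\sum_j p_j = n$. Feeding $Z$ and the response map $z \mapsto \arg\max_{q \in P}\langle q, z\rangle$ into Kakutani's theorem (the standard Gale--Nikaido--Debreu construction) yields a fixed point $(p^*, z^*)$ with $z^* \in Z(p^*)$ and $\langle q, z^*\rangle \le \langle p^*, z^*\rangle \le 0$ for all $q \in P$; taking $q = n\,\mathbf 1_{\{j\}}$ gives $z^*_j \le 0$ for each $j$. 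Since $\sum_j z^*_j = \sum_i \|y^i\|_1 - n = 0$, in fact $z^* = 0$, so the matrix $x := (y^i_j)_{i \in A, j \in G}$ has all row sums and all column sums equal to $1$, i.e.\ is doubly stochastic, and each row $x_i$ lies in $\delta_i(p^*)$. Hence $(x, p^*)$ is an HZ equilibrium for the budget vector $b_i = \alpha + (1-\alpha)\sum_j p^*_j e_{ij}$, i.e.\ an $\alpha$-slack ADHZ equilibrium.

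\emph{Main obstacle.} The crux — and the source of the non-triviality of the proof in \cite{Echenique2019constrained} — is precisely the excluded point: $\delta_i$ fails upper hemicontinuity at the degenerate price $p = \mathbf 1$ (all goods priced $1$), where the budget inequality $\langle p, y\rangle \le b_i(p) = 1$ collapses onto the simplex constraint $\sum_j y_j = 1$, Slater's condition fails, and demand can jump (an agent who could not quite afford their favorite good for $p$ near $\mathbf 1$ can afford it at $\mathbf 1$). I would handle this by a regularization argument: for each $\tau > 0$ solve the market with enlarged budgets $b_i(p) + \tau$, for which the strict Slater slack holds throughout $P$ (including at $\mathbf 1$, where budgets become $1 + \tau > 1$) so that the Kakutani argument above goes through verbatim, and then extract a convergent subsequence of the resulting equilibria as $\tau \to 0$ using compactness of the Birkhoff polytope $\times\, P$; the delicate point is verifying that the limiting allocation is still a cheapest optimal bundle at the limiting prices even when those prices approach $\mathbf 1$. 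The role of $\alpha > 0$ is exactly to confine the bad set to this single point (for $\alpha = 0$ the budget equals $\sum_j p_j e_{ij}$, and Slack fails for agent $i$ at every price whose support of minimal-price goods contains $e_i$'s support), which is in keeping with the fact that exact ADHZ equilibria need not exist at all, as shown by Proposition~\ref{prop:counter_example}.
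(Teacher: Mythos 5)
First, note that the paper does not prove this statement at all: it is imported verbatim as Theorem 2 of \cite{Echenique2019constrained}, and the paper only remarks that Echenique, Miralles and Zhang establish it ``via a non-trivial proof, using the Kakutani Fixed Point Theorem.'' So you are attempting to reprove their theorem from scratch, and your sketch must be judged as such.

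Your skeleton is reasonable and broadly in the spirit of the cited proof (Kakutani / Gale--Nikaido--Debreu on the scaled simplex, the $\alpha$-slack identity $\sum_i b_i(p) = \alpha n + (1-\alpha)\sum_j p_j = n$, and the pleasant observation that $z^* \le 0$ together with $\sum_j z^*_j = 0$ forces exact double stochasticity). But there is a genuine gap, and it sits exactly where you park it. The entire difficulty of the theorem is the failure of continuity of the (cheapest) demand correspondence at $p = \mathbf{1}$, and your proposed fix --- regularize budgets to $b_i(p) + \tau$, apply the fixed-point argument, and let $\tau \to 0$ --- is not carried out at the one point where it could fail. If $p^{\tau} \to \mathbf{1}$, the Slater slack $b_i(p^\tau)+\tau - \min_j p^\tau_j$ and the affordability violation of a candidate better bundle both tend to $0$ at uncontrolled relative rates, so the standard ``mix with a cheapest good'' perturbation does not show that the limit allocation is utility-maximizing at the limit prices; and at $p^* = \mathbf{1}$ equilibrium would require every agent to receive only maximum-utility goods, which is impossible in general (e.g.\ when all agents share a unique favorite good). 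So you must either rule out $p^{\tau} \to \mathbf{1}$ or give a separate argument for that limit, and neither is done --- you explicitly label it ``the delicate point'' and stop. Without it the proof is incomplete, since this is precisely the content that makes the EMZ theorem non-trivial. Two further loose ends: (i) with the enlarged budgets the Walras inequality is $\langle p, z\rangle \le n\tau$, not $\le 0$, so the Kakutani step does not go through ``verbatim''; you only get $z^*_j \le \tau$ (exact clearing only at interior fixed points), and this approximate clearing must also be tracked through the $\tau \to 0$ limit; (ii) upper hemicontinuity of the \emph{cheapest}-optimal-bundle correspondence is not a direct application of Berge, because the second-stage constraint set $\{y : \langle u_i, y\rangle \ge v_i(p)\}$ need not be lower hemicontinuous in $p$; it can be proved under the Slater condition by an explicit perturbation toward a favorite good, but that argument needs to be supplied rather than asserted.
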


Note that any $\alpha$-slack equilibrium is automatically also an $\alpha$-approximate equilibrium.
Thus we get: 

\begin{theorem}
    In any ADHZ market, $\epsilon$-approximate equilibria always exist if $\epsilon > 0$.
\end{theorem}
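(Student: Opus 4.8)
The statement to prove is: in any ADHZ market, $\epsilon$-approximate equilibria always exist whenever $\epsilon > 0$. The plan is to derive this as an immediate corollary of the two preceding results, namely the existence of $\alpha$-slack equilibria (Theorem~2 of \cite{Echenique2019constrained}) and the observation that an $\alpha$-slack equilibrium is itself an $\alpha$-approximate equilibrium. First I would fix an arbitrary $\epsilon > 0$ and set $\alpha \coloneqq \min\{\epsilon, 1\}$, so that $\alpha \in (0,1]$. By the cited theorem, the given ADHZ market admits an $\alpha$-slack ADHZ equilibrium $(x, p)$, where by definition $b_i = \alpha + (1-\alpha)\sum_{j \in G}{p_j e_{ij}}$ for every agent $i$.

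The only real content is verifying that this $(x,p)$ meets the definition of an $\epsilon$-approximate ADHZ equilibrium, i.e.\ checking the two-sided budget bound and the equal-endowment condition. For the lower bound, write $m_i \coloneqq \sum_{j \in G}{p_j e_{ij}}$; then $b_i = \alpha + (1-\alpha) m_i \geq (1-\alpha) m_i \geq (1 - \epsilon) m_i$, using $\alpha \leq \epsilon$. For the upper bound, $b_i = \alpha + (1-\alpha) m_i = m_i + \alpha(1 - m_i) \leq m_i + \alpha \leq m_i + \epsilon$, where $\alpha(1-m_i) \leq \alpha$ holds because $m_i = \sum_j p_j e_{ij} \geq 0$ (prices and endowments are nonnegative), so $1 - m_i \leq 1$. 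Finally, the equal-endowment requirement is automatic: if agents $i$ and $i'$ have the same endowment vector then $m_i = m_{i'}$, and since $b_i$ depends on the endowment only through $m_i$ via the same formula, $b_i = b_{i'}$. Hence $(x,p)$ is an $\epsilon$-approximate ADHZ equilibrium, and since the market was arbitrary the claim follows.

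There is essentially no obstacle here — the heavy lifting is entirely in the cited $\alpha$-slack existence theorem, whose proof (via Kakutani) is external to this excerpt. The one subtlety worth stating explicitly, which the proof above handles, is why the additive (rather than multiplicative) slack on the \emph{upper} side is the right relaxation: it is precisely the term $\alpha(1 - m_i)$, which is bounded by $\alpha$ uniformly in $i$ regardless of how large $m_i$ is, that lets a single choice $\alpha = \epsilon$ work for all agents simultaneously; a multiplicative upper bound would fail for agents whose endowment is cheap under $p$. I would keep the write-up to two or three sentences, simply noting $\alpha \coloneqq \min\{\epsilon,1\}$, invoking the previous theorem, and remarking that an $\alpha$-slack equilibrium satisfies all the defining inequalities of an $\epsilon$-approximate equilibrium as already observed.
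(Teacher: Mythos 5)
Your proposal is correct and follows exactly the paper's route: invoke the $\alpha$-slack existence theorem of \cite{Echenique2019constrained} and observe that an $\alpha$-slack equilibrium with $\alpha = \min\{\epsilon,1\}$ satisfies the defining budget inequalities (and the equal-endowment condition) of an $\epsilon$-approximate ADHZ equilibrium. The only difference is that you spell out the two-sided budget verification that the paper leaves implicit in its remark that any $\alpha$-slack equilibrium is automatically an $\alpha$-approximate equilibrium.
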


\section{Algorithm for $\epsilon$-approximate ADHZ under Dichotomous Utilities}
\label{sec.alg}

Before we can tackle the ADHZ setting, let us first give an algorithm that can compute HZ equilibria with non-uniform budgets.
This is an extension of the algorithm presented in \cite{VY-HZ}.
In the following, fix some HZ market consisting of $n$ agents and goods with $u_{ij} \in \{0, 1\}$ for all $i \in A$ and $j \in G$.
If $u_{ij} = 1$, we will say that $i$ \emph{likes} $j$ (and \emph{dislikes} otherwise).
We assume that every agent likes at least one good.
\footnote{
    Any HZ equilibrium $(x, p)$ for the utilities $u_{ij}$ is also an equilibrium for $\tilde{u}_{ij}$ where
    $\tilde{u}_{ij} = a_i \text{ if } u_{ij} = 0$ and $b_i$ if $u_{ij} = 1$ for 
    for all agents $i$, goods $j$, and arbitrary $0 \leq a_i < b_i$ for every agent.
    This is because
   $     \sum_{j \in G}{\tilde{u}_{ij} x_{ij}} = a_i + (b_i - a_i) \sum_{j \in G}{u_{ij} x_{ij}}$
    since $x$ is a fractional perfect matching.
    Hence utility function $\tilde{u}$ is an affine transformation of utility function \ $u$; the former is called a {\em bi-valued utility function}.
}\label{footnote1}

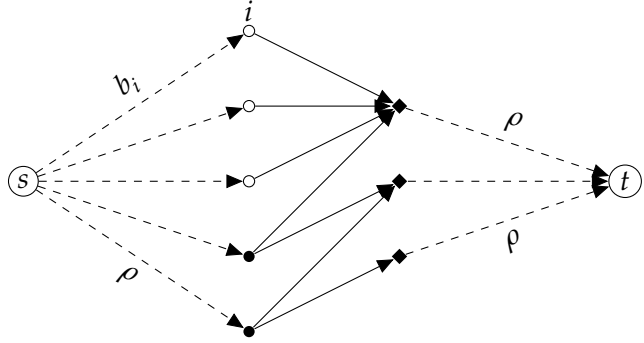
\begin{figure}[hbt]
\vskip -0.5cm
    \begin{center}
        \begin{tikzpicture}[scale=1.0]
            \node[draw, circle, inner sep=1.5pt] (s) at (-1, 0) {$s$};

            \node[draw, circle, inner sep=1.5pt] (b1) at (2, 2) {};
            \node[draw, circle, inner sep=1.5pt] (b2) at (2, 1) {};
            \node[draw, circle, inner sep=1.5pt] (b3) at (2, 0) {};
            \node[fill, circle, inner sep=1.5pt] (s1) at (2, -1) {};
            \node[fill, circle, inner sep=1.5pt] (s2) at (2, -2) {};

            \node[fill, diamond, inner sep=1.5pt] (c1) at (4, -1) {};
            \node[fill, diamond, inner sep=1.5pt] (c2) at (4, 0) {};
            \node[fill, diamond, inner sep=1.5pt] (c3) at (4, 1) {};

            \node[draw, circle, inner sep=1.5pt] (t) at (7, 0) {$t$};

            \draw[dashed, arrows={-triangle 45}] (s) edge node[midway, sloped, above] {$b_i$} (b1);
            \draw[dashed, arrows={-triangle 45}] (s) -- (b2);
            \draw[dashed, arrows={-triangle 45}] (s) -- (b3);
            \draw[dashed, arrows={-triangle 45}] (s) -- (s1);
            \draw[dashed, arrows={-triangle 45}] (s) edge node[midway, sloped, below] {$\rho$} (s2);

            \draw[dashed, arrows={-triangle 45}] (c1) edge node[midway, sloped, below] {$\rho$} (t);
            \draw[dashed, arrows={-triangle 45}] (c2) -- (t);
            \draw[dashed, arrows={-triangle 45}] (c3) edge node[midway, sloped, above] {$\rho$} (t);

            \draw[arrows={-triangle 45}] (b1) -- (c3);
            \draw[arrows={-triangle 45}] (b2) -- (c3);
            \draw[arrows={-triangle 45}] (b3) -- (c3);
            \draw[arrows={-triangle 45}] (s1) -- (c3);
            \draw[arrows={-triangle 45}] (s1) -- (c2);
            \draw[arrows={-triangle 45}] (s2) -- (c2);
            \draw[arrows={-triangle 45}] (s2) -- (c1);

            \node[above] at (b1) {$i$};
        \end{tikzpicture}
    \end{center}
\vskip -0.5cm
    \caption{Shown is the flow network which corresponds to finding an equilibrium allocation in price class $\rho$. Filled circles represent agents in $A(\rho)$ with $b_i < \rho$, empty circles are agents in $A(\rho)$ with $b_i \geq \rho$, and diamond vertices are goods in $G(\rho)$. The contiguous edges represent all utility 1 edges and have infinity capacity (utility 0 edges are not part of the network). Dashed edges to empty circle vertices $i$ have capacity $b_i$ whereas the other dashed edges have capacity $\rho$.\label{fig:flow_network}}
\end{figure}

\begin{lemma}\label{lem:flow_implies_allocation}
    Let $(p_j)_{j \in G}$ be non-negative prices.
    For any $\rho \geq 0$, let $G(\rho)$ be the goods which are sold at price $\rho$ and let $A(\rho)$ be those agents for which the cheapest price of any liked good is $\rho$.
    Assume that
    \begin{itemize}
    \setlength\itemsep{-0.2em}
        \item there is a matching in the utility 1 edges on $A(0) \cup G(0)$ which covers all agents in $A(0)$ and
        \item if $\rho > 0$ is equal to the price of some good, then the flow network shown in Figure~\ref{fig:flow_network} has a maximum flow of size $\rho |G(\rho)|$.
    \end{itemize}
    Then we can find a fractional perfect matching $x$ which makes $(x, p)$ an HZ equilibrium in polynomial time.
\end{lemma}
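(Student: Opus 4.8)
The plan is to use the prices to read off the bundle that each agent must receive in any HZ equilibrium at those prices, and then show that the two hypotheses let us glue these bundles into a doubly-stochastic matrix.

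First I would record the shape of a cheapest utility-maximizing bundle under dichotomous utilities. Writing $\rho_i := \min\{p_j : u_{ij}=1\}$ (so $i\in A(\rho_i)$), maximizing $\sum_j u_{ij}y_j$ over $\{y\ge 0 : \sum_j y_j = 1,\ \sum_j p_j y_j\le b_i\}$ amounts to pushing as much mass as possible onto liked goods, the cheapest of which cost $\rho_i$ apiece; hence the maximum utility is $\min(1,\,b_i/\rho_i)$. So the optimal bundle for $i$ is: (a) a full unit of a liked free good when $\rho_i=0$, at cost $0$; (b) a full unit spread over liked goods of price exactly $\rho_i$ when $0<\rho_i\le b_i$, at cost $\rho_i$; and (c) mass $b_i/\rho_i$ on liked goods of price $\rho_i$ together with mass $1-b_i/\rho_i$ on price-$0$ goods when $b_i<\rho_i$, at cost exactly $b_i$. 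In all three cases the spend is $\min(b_i,\rho_i)$ and the support lies in $\bigl(G(\rho_i)\cap\{j:u_{ij}=1\}\bigr)\cup G(0)$; and for $\rho>0$ a good of $G(\rho)$ is bought only by agents of $A(\rho)$ — an agent of $A(\rho')$ with $\rho'>\rho$ does not like it, and one with $\rho'<\rho$ has a cheaper liked good to spend on — so the allocation decouples over price classes.

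Next I would assemble $x$ class by class. At price $0$, use the hypothesized matching on utility-$1$ edges to give each $i\in A(0)$ a full unit of a distinct liked free good. For each positive price $\rho$ occurring in $p$, take a maximum flow $f^\rho$ in the network of Figure~\ref{fig:flow_network} and set $x_{ij} := f^\rho(i\to j)/\rho$ for $i\in A(\rho)$, $j\in G(\rho)$, and $0$ for every other pair with $j\in G(\rho)$. Since the value of $f^\rho$ is $\rho|G(\rho)|$, which is the total capacity of the edges into $t$, every good of $G(\rho)$ is fully sold at price $\rho$; and since (this is the crux, below) a maximum flow of that value also saturates every source edge, each $i\in A(\rho)$ gets mass exactly $\min(b_i,\rho)/\rho$ of liked $\rho$-priced goods and spends exactly $\min(b_i,\rho)\le b_i$. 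Finally, distribute the price-$0$ goods left unused by the $A(0)$-matching so as to top every agent with $b_i<\rho_i$ up to a full unit of mass; supply matches demand exactly, because the total mass that must sit on $G(0)$ is $n-\sum_{\rho>0}|G(\rho)| = |G(0)|$, and this equals $|A(0)|$ plus $\sum_{i:\,b_i<\rho_i}(1-b_i/\rho_i)$, the masses demanded by the matching and by the top-ups.

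It then remains to check the HZ-equilibrium conditions, which are now immediate: $x$ is a fractional perfect matching (every agent has total mass $1$ after the top-up; every positive-priced good has mass $1$ from its flow; every free good is exhausted by the $A(0)$-matching together with the top-ups); every agent spends $\min(b_i,\rho_i)\le b_i$; and by the first step each agent's bundle is precisely a cheapest utility-maximizing bundle at $p$. The running time is dominated by at most $n$ maximum-flow computations. I expect the main obstacle to be the claim, used above, that a maximum flow of value $\rho|G(\rho)|$ saturates the \emph{agent} side of the network, so that each agent of $A(\rho)$ really ends up with the mass its optimal bundle prescribes: good-side saturation is free (the value equals the capacity into $t$), but the agent side is saturated only when the source cut $\{s\}$ — whose capacity is the total amount the agents of $A(\rho)$ are allowed to spend in class $\rho$ — also has value $\rho|G(\rho)|$, i.e., only when the money $A(\rho)$ brings to class $\rho$ exactly balances the revenue of $G(\rho)$. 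That balance holds for any price vector that supports an equilibrium, and it is what the hypothesis must be made to yield (the hypothesis is effectively being used in the form ``a maximum flow saturates both sides'' of the network); establishing it, and squaring it with the price-$0$ bookkeeping, is where essentially all the work lies, the rest being routine.
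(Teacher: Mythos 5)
Your construction coincides with the paper's: match $A(0)$ into $G(0)$, scale a maximum flow by $1/\rho$ in each positive price class, and top agents up with the leftover zero-priced goods; your description of cheapest utility-maximizing bundles is exactly the (implicit) reason the paper gives that the result is an equilibrium. The step you leave open --- that the flow also saturates the \emph{agent}-side edges, so that each $i \in A(\rho)$ really receives mass $\min(b_i,\rho)/\rho$ of liked price-$\rho$ goods --- is precisely the step the paper does not argue either: its proof simply asserts that ``no agent exceeds their budget'' and that any agent not fully allocated to liked goods spends all of their budget. You are right to flag it, and right that the hypothesis as literally stated (maximum flow value $\rho\,|G(\rho)|$, i.e.\ sink-side saturation only) does not deliver it: take two agents with budgets at least $\rho$ whose only liked good is a single good of price $\rho$ (plus one further agent and two free goods to complete the instance); both hypotheses of the lemma hold, yet no HZ equilibrium exists at those prices, since each of the two agents would need a full unit of that good in an optimal bundle. (Incidentally, the intended source capacities are $\min(b_i,\rho)$, matching your reading; the capacity labels in the caption of Figure~\ref{fig:flow_network} appear to be swapped.)

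What closes the gap is not the lemma's stated hypothesis but the way it is invoked: in the proof of Theorem~\ref{thm:hz_compute} the price-raising procedure maintains $\sum_{i \in \Gamma(S)} \beta_i \geq p(S)$ for every $S \subseteq G_1$ together with equality for $S = G_1$, where $\beta_i = \min(b_i,\rho)$ for $i \in A(\rho)$. Applying the inequality to $S = G(\rho)$ for each positive price $\rho$ and summing over price classes forces equality class by class, i.e.\ $\sum_{i \in A(\rho)} \min(b_i,\rho) = \rho\,|G(\rho)|$, so total source capacity equals total sink capacity and any flow of value $\rho\,|G(\rho)|$ saturates every source edge --- exactly the ``both sides saturate'' reading you anticipate. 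With that balance condition added to the hypotheses (or derived from the calling context as above), your argument, including the price-$0$ bookkeeping, is complete and is essentially the paper's proof; the only thing missing from your write-up is this one-line derivation of per-class balance rather than leaving it as ``where the work lies.''
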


\begin{proof}
    Allocate every agent in $A(0)$ to some good in $G(0)$ according to the matching which exists by assumption.
    Let $\rho > 0$, be the price of some good.
    Then we compute the maximum flow $f^{(\rho)}$ in the flow network from Figure~\ref{fig:flow_network} and allocate $x_{ij} = f^{(\rho)}_{i, j} / \rho$ for all $i \in A(\rho)$ and $j \in G(\rho)$.
    Lastly, extend $x$ to a fractional perfect matching by matching the remaining capacity of the agents to the remaining capacity of goods in $G(0)$.

    Clearly, no agent exceeds their budget.
    To see that this yields an HZ equilibrium, note that every agent only spends money on cheapest liked goods and if they do not get allocated entirely to liked goods, then they additionally spend all of their budget.
    This ensures that every agent gets an optimum bundle.
\end{proof}

\begin{theorem}\label{thm:hz_compute}
    For any rational budget vector $b$, we can compute an HZ equilibrium in polynomial time.
\end{theorem}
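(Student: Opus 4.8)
The plan is to reduce everything to Lemma~\ref{lem:flow_implies_allocation}: it suffices to compute in polynomial time a vector of nonnegative rational prices $p$ such that (i) the graph of utility-$1$ edges restricted to $A(0)\cup G(0)$ contains a matching saturating $A(0)$, and (ii) for every $\rho>0$ equal to the price of some good, the network of Figure~\ref{fig:flow_network} has a maximum flow of value $\rho\,|G(\rho)|$; an equilibrium allocation is then produced by the construction in that lemma. I would build such a $p$ by determining the distinct price levels together with the set of goods sitting at each level, working from the highest level downward, in the spirit of the algorithm of \cite{VY-HZ}.

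For the current top level, I would look for the \emph{most over-demanded} set of goods: for a nonempty $S\subseteq G$ let $D(S)$ be the set of agents all of whose liked goods lie in $S$, and let $\rho(S)$ be the largest price at which $S$ can be fully sold to $D(S)$ in a network as in Figure~\ref{fig:flow_network}, so $\rho(S)$ is a bottleneck ratio read off a minimum cut. I would then take a set $S^\star$ attaining $\rho^\star \coloneqq \max_S \rho(S)$, freeze every good in $S^\star$ at price $\rho^\star$, delete $S^\star$ together with every agent whose liked goods all lie in $S^\star$ (an agent of $D(S^\star)$ with too small a budget is only partially served inside this class, and has the remainder of its unit allocation filled later from $G(0)$, as permitted by Lemma~\ref{lem:flow_implies_allocation}), and recurse on the residual market. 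When no remaining nonempty set of goods has positive bottleneck ratio, I would set all remaining prices to $0$ and finish by computing the required matching of $A(0)$ into $G(0)$. Finding $S^\star$ and $\rho^\star$ is a parametric max-flow computation: for a trial value $\rho$ a single max-flow together with its min-cut tests whether some $S$ has $\rho(S)\geq\rho$ and exhibits such an $S$; and since every candidate value of $\rho^\star$ is a ratio whose numerator is a subset-sum of the $b_i$ and whose denominator is an integer at most $n$, binary search isolates $\rho^\star$ using only numbers of polynomially bounded encoding length. Each phase freezes at least one good, so there are at most $n$ phases, each running in time polynomial in $n$ and the encoding length of $b$, and all computed prices have polynomially bounded encoding length.

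To turn this into a proof I would verify three things. \emph{Correctness:} the resulting $p$ satisfies the hypotheses of Lemma~\ref{lem:flow_implies_allocation} — the definition of $\rho^\star$ directly supplies the saturating flow for the current top class, and I must check that every agent deleted at level $\rho^\star$ really has its cheapest liked good priced $\rho^\star$ while every surviving agent still likes a surviving good, so that the price classes constructed recursively coincide with the sets $A(\rho),G(\rho)$ appearing in the lemma. \emph{Termination and running time,} as sketched above. \emph{Base case:} once all sets of positive bottleneck ratio have been peeled off, Hall's condition holds on $A(0)\cup G(0)$, so the needed matching exists. I expect the main obstacle to be correctness, and within it the assertion that the bottleneck ratios $\rho^\star$ produced in successive phases are non-increasing: this monotonicity is exactly what guarantees that each agent is assigned to the price class of its \emph{cheapest} liked good, that no good is over-allocated, and that no agent is forced to overspend, and proving it calls for an uncrossing argument showing that maximum-density sets can be chosen nested across phases.
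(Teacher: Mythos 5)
Your high-level route is genuinely different from the paper's (the paper goes bottom-up: a minimum vertex cover splits off the zero-price goods $G_2$, prices on $G_1$ start at $\min_i b_i$ and are raised uniformly with \emph{effective budgets} $\beta_i=\min(b_i,\text{cheapest liked price})$, freezing a set exactly when it becomes tight, DPSV-style), but as written your top-down peeling has a genuine gap, and it is exactly the part you defer. First, your invariant $\rho(S)=$ ``largest price at which $S$ can be fully sold to $D(S)$'' is the wrong quantity for equilibrium: selling all of $S^\star$ is necessary but not sufficient. For $(x,p)$ to be an HZ equilibrium, every agent whose cheapest liked good is priced $\rho^\star$ must be \emph{saturated}, i.e.\ must receive exactly $\min(b_i,\rho^\star)/\rho^\star$ units of liked goods (an unsaturated agent with $b_i\geq\rho^\star$ fails to get a full unit of affordable cheapest liked goods, and an unsaturated agent with $b_i<\rho^\star$ fails to spend her budget, so in either case the optimal-bundle condition is violated). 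This forces the exact balance $\rho^\star|S^\star|=\sum_{i\in D(S^\star)}\min(b_i,\rho^\star)$, whereas at your $\rho^\star=\max_S\rho(S)$ only \emph{some} min-cut inside the network of Figure~\ref{fig:flow_network} is tight; the maximizer may well have aggregate slack, in which case freezing all of $S^\star$ at $\rho^\star$ is simply not an equilibrium price class. Repairing this (e.g.\ peeling only the tight side of the binding cut, or redefining $\rho(S)$ through exact balance) changes the recursion and must be argued, not assumed. The paper's tight-set rule $\sum_{i\in\Gamma(S)}\beta_i=\sum_{j\in S}p_j$ builds this balance in by construction, which is precisely why the max-flow/min-cut verification of the hypotheses of Lemma~\ref{lem:flow_implies_allocation} goes through there.

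Second, the monotonicity of successive levels, which you correctly identify as the crux, is not a routine uncrossing exercise in your setup, because deleting $S^\star$ \emph{enlarges} the captive-demander sets of the surviving goods: an agent who liked goods both in $S^\star$ and in a surviving set $S$ belongs to the residual $D(S)$ but not to the original one, so $\rho_{\mathrm{phase}\,2}(S)\geq\rho_{\mathrm{phase}\,1}(S)$ and the maximality of $\rho^\star$ in phase~1 gives no direct bound; the other natural comparison, via the union $S\cup S^\star$, also fails, since feasibility of the union at a price above $\rho^\star$ breaks on subsets of $S^\star$. So the key claims that the classes you construct coincide with the sets $A(\rho),G(\rho)$ of Lemma~\ref{lem:flow_implies_allocation} and that no agent is left unsaturated are exactly the content still missing from the argument. (By contrast, in the paper monotonicity is automatic: prices only rise and frozen sets never unfreeze.) As it stands, your write-up is an algorithmic plan plus an accurate list of what would have to be proved, rather than a proof of Theorem~\ref{thm:hz_compute}.
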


\begin{proof}
    We start in the same way is the algorithm in \cite{VY-HZ}: by computing a minimum vertex cover in the graph of utility 1 edges, we partition $A = A_1 \cup A_2$ and $G_1 \cup G_2$ such that
    \begin{itemize}
    \setlength\itemsep{-0.2em}
        \item every agent in $A_2$ can be matched to a distinct liked good in $G_2$,
        \item every agent in $A_1$ only has liked goods in $G_1$, and
        \item for every $S \subseteq G_2$ we have $|N^{-}(S)| \geq |S|$ where $N^{-}(S)|$ are the agents that have a liked good in $S$.
    \end{itemize}

    Set $p_j = 0$ for all $j \in G_2$ and $p_j = \min_{i \in A_1}{b_i}$ for all $j \in G_1$.
    Now we run a DPSV-like \cite{DPSV08} algorithm on $A_1 \cup G_1$ to raise prices until certain sets of goods become tight.

    For each $i \in A$, let $\beta_i$ be its \emph{effective} budget at current prices $p$, that is the minimum of its actual budget $b_i$ and the price of its cheapest liked good.
    The algorithm will now raise all prices $p$ at the same rate until there is a set $S \subseteq G_1$ which goes \emph{tight} in the sense that
$        \sum_{i \in \Gamma(S)}{\beta_i} = \sum_{j \in S}{p_j}$ 
    where $\Gamma$ is the collection of agents which have a cheapest liked good in $S$.
    At this point, we freeze the prices of the goods in $S$.
    If all prices have been frozen we are done.
    Otherwise, we continue raising all unfrozen prices of goods in $G_1$.

    It is easy to see that if the prices keep rising, eventually each agents' effective budget will be their real budget and so a set must become tight at some point.
    We will not go into detail here but it is possible to find the next set which will go tight in polynomial time similar as in DPSV.
    Finally, since we never unfreeze prices, there will be at most $n$ iterations of the algorithm and hence it runs in polynomial time overall.

    We observe that as in the proof of the DPSV algorithm, for any $S \subseteq G_1$, we have that $\sum_{i \in \Gamma(S)}{\beta_i} \geq \sum_{j \in A}{p_j}$ and $\sum_{i \in A_1}{\beta_i} = \sum_{j \in G_1}{p_j}$.
    It is then easy to show that this implies that for any price $\rho$ above 0, the corresponding flow network from Figure~\ref{fig:flow_network} supports a flow of value $\rho |G(\rho)|$ by the max-flow min-cut theorem.
    Thus we can apply Lemma~\ref{lem:flow_implies_allocation} to get an equilibrium allocation.
\end{proof}

\begin{lemma}\label{lem:next_hz_compute}
    Let $b$ and $b'$ be two budget vectors with $0 \leq b \leq b'$.
    Assume we are given an HZ equilibrium $(x, p)$ for the budgets $b$.
    Then we can compute in polynomial time a new HZ equilibrium $(x', p')$ with $p \leq p'$ for the budgets $b'$.
\end{lemma}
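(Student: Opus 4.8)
The plan is to prove Lemma~\ref{lem:next_hz_compute} by re-running the ascending-price procedure from the proof of Theorem~\ref{thm:hz_compute}, but ``warm-started'' at the given equilibrium $(x,p)$ and executed with respect to the larger budget vector $b'$, so that prices are only ever raised and hence $p' \ge p$ comes for free.

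First I would read off the static structure of $(x,p)$. Let $G^0 := \{j : p_j = 0\}$, $G^+ := G \setminus G^0$, and let $A(0)$ be the set of agents whose cheapest liked good has price $0$. Since $(x,p)$ is an HZ equilibrium, each agent in $A(0)$ is allocated entirely among its liked goods of $G^0$, so the restriction of $x$ to $A(0) \times G^0$ is a fractional matching saturating $A(0)$; by integrality of the bipartite matching polytope this yields an integral matching of $A(0)$ into liked goods of $G^0$, which is exactly the first hypothesis of Lemma~\ref{lem:flow_implies_allocation}. Moreover, every agent that likes a good of $G^0$ lies in $A(0)$ and therefore has effective budget $0$ whether we compute it with $b$ or with $b'$; consequently no good of $G^0$ is ever demanded at a positive price, and we may keep all prices of $G^0$ frozen at $0$. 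Hence only the prices of $G^+$ and only the agents $A^+ := A \setminus A(0)$ take part in the update.

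Next I would run the DPSV-like subroutine of the proof of Theorem~\ref{thm:hz_compute} on $A^+ \cup G^+$, but initialized with the current prices $p|_{G^+}$ rather than the default uniform initialization, and with effective budgets taken with respect to $b'$, i.e.\ $\beta_i := \min\{b'_i,\ \min_{j : u_{ij}=1} p_j\}$: raise all unfrozen prices of $G^+$ at a common multiplicative rate, freeze any set $S \subseteq G^+$ the instant it becomes tight (meaning $\sum_{i \in \Gamma(S)} \beta_i = \sum_{j \in S} p_j$), and continue until all of $G^+$ is frozen. Prices only increase, so the output $p'$ satisfies $p' \ge p$; and since a price is never unfrozen there are at most $n$ freezing events, each of which can be located in polynomial time as in the proof of Theorem~\ref{thm:hz_compute}, so the whole procedure is polynomial.

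The crux is to verify that the invariant driving the subroutine already holds at the warm start, namely $\sum_{i \in \Gamma(S)} \beta_i \ge \sum_{j \in S} p_j$ for every $S \subseteq G^+$; this is precisely where $b \le b'$ enters. Because $(x,p)$ is an equilibrium for $b$, every good in $S$ is fully sold and is paid for only by agents for which it is a cheapest liked good, so $\sum_{j \in S} p_j$ equals the money spent on $S$, which is at most $\sum_{i \in \Gamma(S)} (\text{money spent by } i)$; and the amount any agent $i$ spends is at most $\min\{b_i, \text{price of its cheapest liked good}\} \le \min\{b'_i, \text{price of its cheapest liked good}\} = \beta_i$. Thus no set is over-tight initially, the raising step is well defined, and the invariant is maintained throughout; at termination it holds with equality for $S = G^+$, so, exactly as in the proof of Theorem~\ref{thm:hz_compute}, the max-flow/min-cut theorem gives, at every positive price level $\rho$ appearing in $p'$, a flow of value $\rho\,|G(\rho)|$ in the network of Figure~\ref{fig:flow_network}. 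Together with the $A(0)$-matching from the first step, both hypotheses of Lemma~\ref{lem:flow_implies_allocation} hold for $p'$, and that lemma outputs in polynomial time a fractional perfect matching $x'$ making $(x', p')$ an HZ equilibrium for $b'$. The main obstacle I anticipate is exactly this warm-start invariant check --- cleanly arguing that in the given HZ equilibrium every agent spends money only on its cheapest liked goods, and that this, combined with $b \le b'$, yields the required inequality for all $S \subseteq G^+$ --- together with the bookkeeping that keeps $G^0$ and $A(0)$ frozen out without disturbing the fractional-perfect-matching structure when Lemma~\ref{lem:flow_implies_allocation} extends the flow allocations.
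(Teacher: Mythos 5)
Your proposal is essentially the paper's own proof: warm-start the ascending-price procedure of Theorem~\ref{thm:hz_compute} at the given prices $p$ with the larger budgets $b'$, freeze sets as they become tight (so prices only rise, giving $p' \geq p$), and invoke the max-flow/min-cut argument together with Lemma~\ref{lem:flow_implies_allocation} to build $x'$. The only differences are cosmetic --- you raise all unfrozen prices simultaneously where the paper raises the lowest price level until it merges with the next, and you spell out the warm-start invariant ($\sum_{i \in \Gamma(S)} \beta_i \geq \sum_{j \in S} p_j$ from $b \leq b'$ and equilibrium of $(x,p)$), which the paper leaves implicit --- so no substantive gap.
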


\begin{proof}
    We will simply run the same algorithm as in the proof of Theorem~\ref{thm:hz_compute}, except that this time we start with the prices $p$.
    More precisely, we increase the lowest non-zero price until a set goes tight or it becomes equal to the next higher price, then repeat this process until we once again get $\sum_{i \in \Gamma(S)}{\beta_i} \geq \sum_{j \in A}{p_j}$ and $\sum_{i \in \Gamma(G_1)}{\beta_i} = \sum_{j \in G_1}{p_j}$ where $G_1$ is now defined as the set of goods with positive prices in $(x, p)$.
    As in the proof of Theorem~\ref{thm:hz_compute}, this will freeze all prices in polynomial time at which point we can use a max-flow min-cut argument to construct the new equilibrium allocation $x'$ in polynomial time.
\end{proof}

Let us now return to the approximate ADHZ setting.
Instead of budgets, fix now some fractional perfect matching of endowments $(e_{ij})_{i \in A, j \in G}$.

\begin{theorem}
    An $\epsilon$-approximate ADHZ equilibrium for rational $\epsilon \in (0, 1)$, can be computed in time polynomial in $\frac{1}{\epsilon}$ and $n$, i.e.\ by a fully polynomial time approximation scheme.
\end{theorem}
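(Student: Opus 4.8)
The plan is to compute an $\epsilon$-approximate ADHZ equilibrium by a monotone fixed-point iteration that marries the $\alpha$-slack budget rule of \cite{Echenique2019constrained} with the incremental HZ-solver of Lemma~\ref{lem:next_hz_compute}. Set $\alpha := \epsilon$. Start from the uniform budget vector $b^{(0)}_i := \epsilon$ and compute an HZ equilibrium $(x^{(0)}, p^{(0)})$ via Theorem~\ref{thm:hz_compute}. Then iterate: given an HZ equilibrium $(x^{(k-1)}, p^{(k-1)})$ for budgets $b^{(k-1)}$, define the next budgets by the $\alpha$-slack formula $b^{(k)}_i := \epsilon + (1-\epsilon)\sum_{j \in G} p^{(k-1)}_j e_{ij}$, and use Lemma~\ref{lem:next_hz_compute} to obtain an HZ equilibrium $(x^{(k)}, p^{(k)})$ for $b^{(k)}$ with $p^{(k)} \ge p^{(k-1)}$. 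We halt at the first step $k$ at which $(x^{(k)}, p^{(k)})$ together with $b^{(k)}$ satisfies the $\epsilon$-approximate ADHZ conditions, and output that triple.

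First I would check that the iteration is well-defined, monotone, and bounded. Inductively $b^{(k)} \ge b^{(k-1)}$: for $k = 1$ this is immediate since $b^{(1)}_i - b^{(0)}_i = (1-\epsilon)\sum_j p^{(0)}_j e_{ij} \ge 0$, and for $k \ge 2$ the increment equals $(1-\epsilon)$ times the (nonnegative) increment $\sum_j (p^{(k-1)}_j - p^{(k-2)}_j) e_{ij}$, using $p^{(k-1)} \ge p^{(k-2)}$; hence Lemma~\ref{lem:next_hz_compute} applies at every step and delivers $p^{(k)} \ge p^{(k-1)}$, closing the induction. Since the endowment vector $e$ is a fractional perfect matching, $\sum_i \sum_j p_j e_{ij} = \sum_j p_j$ for any price vector $p$, and from the tightness equality maintained by the DPSV-like subroutine (of the form $\sum_{i \in \Gamma(G_1)} \beta_i = \sum_{j \in G_1} p_j$, with $G_2$ priced at $0$ and effective budgets $\beta_i \le b_i$) we read off $\sum_j p^{(k)}_j \le \sum_i b^{(k)}_i = n\epsilon + (1-\epsilon)\sum_j p^{(k-1)}_j$. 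Together with $\sum_j p^{(0)}_j \le \sum_i b^{(0)}_i = n\epsilon \le n$, this gives $\sum_j p^{(k)}_j \le n$ for all $k$ by induction. Finally, agents with identical endowments receive identical budgets at every step, since $b^{(k)}_i$ depends on $i$ only through $e_i$.

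Next I would verify the stopping test and bound the iteration count. The upper ADHZ inequality always holds: $b^{(k)}_i = \epsilon + (1-\epsilon)\sum_j p^{(k-1)}_j e_{ij} \le \epsilon + \sum_j p^{(k)}_j e_{ij}$ by $p^{(k-1)} \le p^{(k)}$. The lower inequality $(1-\epsilon)\sum_j p^{(k)}_j e_{ij} \le b^{(k)}_i$ rearranges to $\sum_j (p^{(k)}_j - p^{(k-1)}_j) e_{ij} \le \tfrac{\epsilon}{1-\epsilon}$, i.e., agent $i$'s endowment value grew by at most $\tfrac{\epsilon}{1-\epsilon}$ in the last step. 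So if step $k$ is not terminal, some agent's endowment value grew by more than $\tfrac{\epsilon}{1-\epsilon} > \epsilon$; as all agents' endowment values are nondecreasing in $k$ and sum to $\sum_j p^{(k)}_j$, we get $\sum_j p^{(k)}_j > \sum_j p^{(k-1)}_j + \epsilon$. Since $\sum_j p^{(k)}_j$ remains in $[0, n]$, there are at most $\lfloor n/\epsilon \rfloor$ non-terminal steps, so the algorithm stops after $O(n/\epsilon)$ iterations, each a polynomial-time call to Theorem~\ref{thm:hz_compute} or Lemma~\ref{lem:next_hz_compute}; all quantities stay of polynomially bounded bit length because $\epsilon$ and the $e_{ij}$ are rational. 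At termination $(x^{(k)}, p^{(k)})$ is by construction an HZ equilibrium for $b^{(k)}$, and $b^{(k)}$ satisfies both ADHZ inequalities and the equal-endowment-equal-budget condition, so it is a genuine $\epsilon$-approximate ADHZ equilibrium; this is the desired FPTAS.

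The main obstacle is precisely the termination analysis rather than the construction: a plain iteration of the $\alpha$-slack map converges only in the limit and possibly to an irrational fixed point, so the crux is to show that the additive $\epsilon$ slack in the definition of an $\epsilon$-approximate ADHZ equilibrium is exactly what lets us halt in $O(n/\epsilon)$ steps. This is where the identity $\sum_i\sum_j p_j e_{ij} = \sum_j p_j$ (from $e$ being a fractional perfect matching), the monotonicity $p^{(k)} \ge p^{(k-1)}$ supplied by Lemma~\ref{lem:next_hz_compute}, and the a priori bound $\sum_j p^{(k)}_j \le n$ must be combined. A secondary point to get right is that a priori bound itself, $\sum_j p^{(k)}_j \le \sum_i b^{(k)}_i$, which should be extracted carefully from the tightness equality that the price-raising subroutine of Theorem~\ref{thm:hz_compute}/Lemma~\ref{lem:next_hz_compute} preserves.
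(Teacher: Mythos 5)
Your construction is exactly the paper's: iterate the slack-budget map $b_i \mapsto \alpha + (1-\alpha)\sum_j p_j e_{ij}$, recompute an HZ equilibrium incrementally via Lemma~\ref{lem:next_hz_compute} so that prices only rise, and use $\sum_j p_j \le \sum_i b_i \le n$ to keep everything bounded; all of these steps check out, including the inductive monotonicity of budgets and the observation that only the lower budget inequality can fail. Where you diverge is the termination analysis, and your variant is sound and in fact a bit sharper. The paper takes slack $\epsilon/2$ and stops at the first iteration where prices grow componentwise by at most a factor $\frac{1-\epsilon/2}{1-\epsilon}$, bounding the number of iterations multiplicatively by $O\bigl(\frac{n}{\epsilon}\log\frac{n}{\epsilon}\bigr)$ (the halved slack is needed precisely so that the multiplicative criterion implies the $(1-\epsilon)$ lower bound). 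You instead take slack exactly $\epsilon$, test the $\epsilon$-approximate ADHZ conditions directly, and argue additively: a failed test means some agent's endowment value rose by more than $\frac{\epsilon}{1-\epsilon} > \epsilon$, and since $\sum_i e_{ij} = 1$ this forces $\sum_j p_j$ to rise by more than $\epsilon$, giving at most $n/\epsilon$ non-terminal iterations. This buys a slightly better iteration bound, $O(n/\epsilon)$ versus $O\bigl(\frac{n}{\epsilon}\log\frac{n}{\epsilon}\bigr)$, and avoids the $\epsilon/2$ bookkeeping; what it does not change is the per-iteration work, and like the paper you are somewhat terse about bit-length growth across iterations (denominators of budgets and prices can compound), though since at most polynomially many bits are added per iteration and there are $O(n/\epsilon)$ iterations, this still fits the FPTAS claim.
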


\begin{proof}
    We will iteratively apply Lemma~\ref{lem:next_hz_compute}.
    Start by setting $b^{(1)}_i \coloneqq \frac{\epsilon}{2}$ for all $i \in A$ and computing an HZ equilibrium $(x^{(1)}, p^{(1)})$ according to Theorem~\ref{thm:hz_compute}.
    Beginning with $k \coloneqq 1$, we run the following algorithm.
    \begin{enumerate}
    \setlength\itemsep{-0.1em}
        \item Let $b^{(k + 1)}_i \coloneqq \frac{\epsilon}{2} + (1 - \frac{\epsilon}{2}) \sum_{j \in G}{p^{(k)}_j e_{ij}}$ for all $i \in A$.
        \item Compute a new HZ equilibrium $(x^{(k + 1)}, p^{(k + 1)})$ for budgets $b^{(k + 1)}$ according to Lemma~\ref{lem:next_hz_compute} using the old equilibrium $(x^{(k)}, p^{(k)})$ as the starting point.
            Note that since $p^{(k)} \geq p^{(k - 1)}$ we always have $b^{(k + 1)} \geq b^{(k)}$ and so this is well-defined.
        \item Set $k \coloneqq k + 1$ and go back to step 1.
    \end{enumerate}

    Note that
    \begin{align*}
        \sum_{i \in A}{b^{(k + 1)}_i} &= \frac{\epsilon}{2} n + \left(1 - \frac{\epsilon}{2}\right) \sum_{j \in G}{p^{(k)}_j} \ \ \
                                      \leq \frac{\epsilon}{2} n + \left(1 - \frac{\epsilon}{2}\right) \sum_{i \in A}{b^{(k)}_i}
    \end{align*}
    and thus
    \[
        \sum_{j \in G}{p^{(k)}_j} \leq \sum_{i \in A}{b^{(k)}_i} \leq n
    \]
    as otherwise we would get $\sum_{i \in A}{b^{(k + 1)}_i} < \sum_{i \in A}{b^{(k)}_i}$.

    Let $K$ be the first iteration such that $p^{(K)} \leq \frac{1 - \epsilon / 2}{1 - \epsilon} p^{(K - 1)}$.
    Note that
    \[
        K \leq n \log_{\frac{1 - \epsilon / 2}{1 - \epsilon}}\left(\frac{n}{\epsilon}\right) = O\left(\frac{n}{\epsilon} \log\left(\frac{n}{\epsilon}\right)\right)
    \]
    since all non-zero prices are initialized to at least $\epsilon$ but are bounded by $n$.
    Then $(x^{(K)}, p^{(K)})$ is an $\epsilon$-approximate ADHZ equilibrium with budget vector $b^{(K)}$ because for all $i \in A$ we have
    \begin{align*}
        b^{(K)}_i &= \frac{\epsilon}{2} + \left(1 - \frac{\epsilon}{2}\right) \sum_{j \in G}{p^{(K - 1)}_j e_{ij}} 
                  \ \ \ \in \left[(1 - \epsilon) \sum_{j \in G}{p^{(K)}_j e_{ij}}, \epsilon + \sum_{j \in G}{p^{(K)}_j e_{ij}}\right].
    \end{align*}

    Lastly, we note that since the number of iterations is bounded by $O(\frac{n}{\epsilon} \log(\frac{n}{\epsilon}))$ and each iteration runs in polynomial time, the total runtime is polynomial in $\frac{1}{\epsilon}$ and $n$ as claimed.
\end{proof}

\section{The Model {\em 1DLAD}}
\label{sec.1DLAD}

In this section, we define the model {\em Nash-bargaining-based one-sided matching market with initial endowments for dichotomous utilities}, which we abbreviate to {\em 1DLAD}. We start with the setting of the ADHZ scheme under dichotomous utilities and enhance it as follows: we are given a fractional perfect matching $x_E$ which specifies the initial endowments of all agents, each agent getting a total of one unit of goods. Clearly $x_E$ and the utility functions of all agents define the utility accrued by each agent from her initial endowment. We will take this to be agent $i$'s disagreement point $c_i$ and will interpret the problem as a Nash bargaining problem; see Appendix~\ref{sec.Nash} for the details on the Nash bargaining problem. 

The feasible set $\CN$ is defined as follows. Let $x$ be a fractional perfect matching over the agents $A$ and goods $G$ and let $v_x$ be an $n$-dimensional vector whose components are the utilities derived by the agents under the allocation $x$. Then $\CN$ is the set of vectors $v_x$ corresponding to all fractional perfect matchings $x$.

As in the Nash bargaining problem, we will assume that the problem is feasible, i.e., there is a fractional perfect matching, defining a redistribution of the goods, under which each agent $i$ derives strictly more utility than $c_i$. Henceforth, we consider the slightly more general problem in which the disagreement point $c$ is specified without reference to initial endowments $x_E$. In fact, there is no guarantee that $c$ comes from a valid fractional perfect matching of initial endowments. The only restriction is that the problem is feasible.

In Section~\ref{sec:rcp-1dlad}, we present a rational convex program (RCP) for {\em 1DLAD} under dichotomous utilities.  In Section \ref{sec.Variable}, we use insights obtained from the RCP to define a market model whose equilibria capture optimal solutions to RCP and hence to {\em 1DLAD}. The model is a variant of the linear Fisher market; we will call it the {\em variable-budget market model} and denote by $\CM$. In Section \ref{sec.mechanism-ADNB}, we will then present an efficient combinatorial algorithm for computing an equilibrium for $\CM$; this is also the mechanism for {\em 1DLAD}. In Appendix~\ref{sec.DSIC-ADNB} we prove that this mechanism is strategyproof under the assumption that the disagreement utilities $c_i$'s are public knowledge and therefore are truthfully reported. If they can also be tampered with, we leave the problem of obtaining a strategyproof mechanism open; see Section \ref{sec.discussion}.

\subsection{Rational Convex Program}\label{sec:rcp-1dlad}
In this section we will show that program (\ref{eq.CP-LAD-copy}) for the model {\em 1DLAD} is a RCP for the case of dichotomous utilities. This will establish the useful property the model always admits an optimal solution using rational numbers --- a pre-requisite for seeking a combinatorial, efficient algorithm. In addition, it will provide important insights into the nature of the dual variables $p_j$ and $q_i$; these will help in defining a market model whose equilibria correspond to the optimal solutions of \eqref{eq.CP-LAD-copy}. Finally, it gives a property of optimal solutions, presented in Corollary \ref{cor.concave}.
\begin{equation}\label{eq.CP-LAD-copy}
\begin{aligned}
\max &\sum_{i \in A}  {\log (v_i - c_i)} & \\ 
\text{ s.t. } & v_i= \sum_{j} {u_{ij} x_{ij}}, & \forall i \in A \\
& \sum_{j} x_{ij}\le 1, & \forall i \in A \\
& \sum_{i} x_{ij}\le 1, & \forall j \in G \\
& x_{ij}\geq 0, & \forall i \in A, \forall j \in G 
\end{aligned}
\end{equation}

The KKT conditions for program (\ref{eq.CP-LAD-copy}) are:

\begin{enumerate}
    \setlength\itemsep{-0.2em}
\item $\forall i \in A: \ q_i \geq 0$.
\item $ \forall j \in G: \ p_j \geq0$.
\item $\forall i \in A: \ q_i > 0  \  \implies \ \sum_{j} {x_{ij}} = 1$.

\item $ \forall j \in G: \ p_j > 0  \  \implies \ \sum_{i} {x_{ij}} = 1$.

\item $ \forall i \in A, \ \forall j \in G: \ p_j + q_i \geq {u_{ij} \over {v_i - c_i}}$.

\item $ \forall i \in A, \ \forall j \in G:  x_{ij} > 0 \ \implies p_j + q_i = {u_{ij} \over {v_i - c_i}}$.
\end{enumerate}

The proof of the following theorem is given in Appendix~\ref{sec.RCP-1LAD}.

\begin{theorem}
\label{thm.RCP-1}
Program (\ref{eq.CP-LAD-copy}) is a rational convex program for the model Nash-bargaining-based one-sided matching market with initial endowments, {\em 1DLAD}, under dichotomous utilities. 
\end{theorem}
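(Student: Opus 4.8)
The plan is to show that the convex program \eqref{eq.CP-LAD-copy} is a rational convex program for dichotomous utilities by exhibiting, from the KKT conditions, an optimal primal-dual pair in which all quantities ($v_i$, $x_{ij}$, $p_j$, $q_i$) are rational numbers of bounded bit-complexity. The strategy I would follow is the one standard for RCPs (cf.\ the treatment of linear Fisher/Arrow-Debreu markets in \cite{DevanurGV16}): show that the support structure of an optimal solution forces the $v_i$'s, and hence everything else, to satisfy a full-rank rational linear system. First I would fix an optimal solution $(x^*, v^*)$ together with optimal duals $(p^*, q^*)$, which exist since the program is feasible (by assumption the disagreement point is strictly dominated, so $v_i - c_i > 0$ is achievable and the $\log$ objective is finite) and the feasible region is a bounded polytope. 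Because utilities are $0/1$, KKT condition (6) reads: whenever $x^*_{ij} > 0$ we have $u_{ij} = 1$ and $p^*_j + q^*_i = 1/(v^*_i - c^*_i)$. So the ``active'' bipartite graph $H$ — agents $i$ joined to goods $j$ with $x^*_{ij} > 0$ — lives entirely inside the liked-edges graph, and along every edge the equation $p_j + q_i = 1/(v_i - c_i)$ holds exactly.

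The heart of the argument is to decompose $H$ into connected components and argue rationality component by component. Within one component $C$, once we know the utility value $v_i$ of a single agent in $C$, the edge equations $p_j + q_i = 1/(v_i-c_i)$ together with $v_i = \sum_j u_{ij} x_{ij}$ and the matching constraints propagate rational relations across the whole component. More carefully: I would use KKT (3) and (4) — agents with $q_i>0$ are fully matched, goods with $p_j>0$ are fully sold — to pin down which inequalities $\sum_j x_{ij}\le 1$, $\sum_i x_{ij}\le 1$ are tight, and then argue that the $x^*_{ij}$ restricted to a component, subject to the tight degree constraints, either is forced or can be chosen to be a vertex of the corresponding transportation polytope, hence rational given rational $v_i$'s. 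The remaining task is to show the $v_i$'s themselves are rational: the collection of equations $v_i=\sum_j u_{ij}x_{ij}$, the tight matching constraints, and the dual equations $p_j+q_i=1/(v_i-c_i)$ form a system; clearing denominators by introducing variables $w_i := 1/(v_i - c_i)$ turns the dual equations into \emph{linear} equations $p_j + q_i = w_i$, and then $v_i = c_i + 1/w_i$. So the full system in the variables $(x_{ij}, p_j, q_i, w_i)$ is linear except for the single defining relation $v_i = c_i + 1/w_i$ feeding into $v_i = \sum_j u_{ij}x_{ij}$, i.e.\ $\sum_j u_{ij} x_{ij} = c_i + 1/w_i$. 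I would argue that at an optimal solution the active constraints can be taken to determine a $0$-dimensional face, so that $w_i$ is an algebraic number of degree forced to be $1$ — or, better, follow \cite{DevanurGV16}: show the optimum is the unique solution of a rational system because the objective $\sum_i \log(v_i - c_i)$ is strictly concave in the $v_i$'s, so the optimal utility vector $v^*$ is \emph{unique}, and then a dimension/rank count on the active edges shows $v^*$ solves a nonsingular rational linear system after the substitution $w_i = 1/(v_i-c_i)$.

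Concretely the key steps, in order: (i) existence of optimal primal and dual solutions and strict concavity $\Rightarrow$ uniqueness of $v^*$; (ii) translate KKT (5)–(6) for $0/1$ utilities into: active graph $\subseteq$ liked graph and $p_j + q_i = w_i$ on active edges with $w_i := 1/(v^*_i - c^*_i)$; (iii) use KKT (3)–(4) to identify tight degree constraints and set up the transportation subpolytope per connected component of the active graph; (iv) show the system ``tight degree constraints $+$ $v_i = \sum_j u_{ij} x_{ij}$ $+$ $v_i = c_i + 1/w_i$ $+$ $p_j + q_i = w_i$ on active edges'' has, by a rank argument, a \emph{unique} solution, which must therefore coincide with the (rational-coefficient-determined) solution obtained by Gaussian elimination in the $w_i$ variables, yielding rational $w_i$, hence rational $v_i$, $p_j$, $q_i$; (v) conclude $x^*$ may be taken rational as a vertex of the resulting rational transportation polytope, and bound bit-sizes polynomially in the input. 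I expect step (iv) — the rank/uniqueness argument showing the active-edge linear system pins down $w^*$ exactly, so that no genuine nonlinearity survives — to be the main obstacle; the subtlety is that a component could a priori be ``underdetermined'' in $x$ while still determining $v^*$, and one has to handle that by invoking uniqueness of $v^*$ (step i) rather than of $x^*$, and by a careful count of independent active constraints versus free variables in each connected component, exactly as in the Fisher-market RCP proof of \cite{DevanurGV16}.
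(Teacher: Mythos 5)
Your overall frame (KKT conditions for \eqref{eq.CP-LAD-copy}, the observation that dichotomous utilities force $p_j+q_i = 1/(v_i-c_i)$ on support edges, and a decomposition of the support graph into connected components) matches the paper's proof, but the decisive step is missing. You defer exactly the point that carries the theorem: your step (iv) proposes that uniqueness of $v^*$ (strict concavity) plus a ``dimension/rank count on the active edges'' yields a nonsingular rational linear system after substituting $w_i = 1/(v_i-c_i)$. That does not follow as stated. The nonlinear coupling $\sum_j u_{ij}x_{ij} = c_i + 1/w_i$ is never actually eliminated by your substitution, and a $0$-dimensional active face does not force the solution to have algebraic degree $1$ --- indeed, this very paper recalls HZ/ADHZ instances whose \emph{unique} equilibrium is irrational, so uniqueness of $v^*$ cannot be the engine of rationality; the dichotomous structure must enter in a concrete way that your sketch does not supply.

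What the paper actually does at this point is an aggregation argument that linearizes the system, and it is this idea your proposal lacks. First, agents with $q_i>0$ are handled separately: by KKT (3) and case analysis they are fully allocated liked goods, so $v_i = 1$ is rational and they (with a matched good) can be removed. For the remaining agents $q_i=0$, every support edge with $u_{ij}=1$ gives $p_j = 1/(v_i-c_i)$, so within a connected component $C$ of the support graph restricted to positively priced goods, \emph{all} prices equal a common value $p_C$ and agent $i$'s spending is exactly $1 + c_i p_C$. Market clearing (KKT (4)) then equates aggregate spending with aggregate price in the component, $|G_C'|\, p_C = |A_C'| + c(A_C')\, p_C$, a single rational linear equation determining $p_C$, hence $v_i = c_i + 1/p_C$ is rational, and the allocation can be completed rationally (using that leftover agents and leftover goods are joined only by zero-utility edges). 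Without identifying the ``all prices equal in a component'' fact and this budget-equals-price balance, your planned Gaussian elimination in the $w_i$ variables has no well-defined rational linear system to act on, so the proposal as written has a genuine gap at its core.
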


\subsection{The Variable-Budget Market Model}
\label{sec.Variable}

We will define this model for arbitrary utilities, i.e., not just dichotomous utilities, even though the algorithm will be only for the latter case. In doing so, we will gain an understanding of the fairness property of the Nash bargaining solution concept. Additionally, it will open up further algorithmic questions; see Section \ref{sec.discussion}.

Let $I = (n, A, G, u, c)$ be the given instance of {\em 1DLAD}, where $u$ gives the utilities of the $n$ agents $A$ for the $n$ goods $G$ and $c$ is the disagreement point. As stated in Section \ref{sec.1DLAD}, we will assume that $I$ is feasible. In the corresponding variable-budget market $\CM =(n, A, G, u, c)$, $c_i$ specifies a {\em strict lower bound on the utility} that agent $i$ must derive from an equilibrium allocation. 

In the market $\CM$, the total utility derived by $i$ from allocation $x$ is the same as in a linear Fisher market, i.e., $\sum_j {u_{ij} x_{ij}}$. An interesting feature of $\CM$ is that in addition to each good $j$ having a {\em price} $p_j \geq 0$,
 and each agent $i$ has a {\em price-offset} $q_i \in \Qplus$. The {\em cost of one unit of $j$ for $i$} is defined to be $p_j + q_i$, and the {\em cost of $i$'s bundle under allocation $x$} is defined to be
$\sum_j {(p_j + q_i) x_{ij}} .$

Define the {\em price-adjusted bang-per-buck} of $i$ for good $j$ to be $ {u_{ij}} \over {p_j + q_i}$. 

\begin{definition}
\label{def.bpb}
	Given prices $p$ and price-offset $q_i$, define the {\em maximum price-adjusted bang-per-buck} of $i$ to be
	$\gamma_i = \max_j \left\{ {u_{ij} \over {p_j + q_i}} \right\} .$ 
We will say that goods in the set 
$S_i = \arg \max_j  \left\{ {u_{ij} \over {p_j + q_i}} \right\}$ 
constitute $i$'s {\em maximum price-adjusted bang-per-buck goods}.
\end{definition}

By the KKT conditions for program (\ref{eq.CP-LAD-copy}), 
an optimal allocation for $i$ must consist of goods from $S_i$. Furthermore, from KKT Condition 6 we derive the important condition:
\begin{equation}
	\label{eq.def-Gamma}
	 \gamma_i = v_i - c_i .
\end{equation}

The {\em money $m_i$ of agent $i$} is not fixed, but is a function of $v_i$, the eventual utility of $i$; the name given to this market model follows from this fact. We will define $m_i$ as follows.
	\begin{equation}
		\label{eq.money}
 m_i \ = \ {{v_i} \over {(v_i - c_i)}} \ = \  1 + {{c_i} \over {(v_i - c_i)}}   .
	\end{equation}
	
By KKT Condition 6, 
$\ \sum_{j \in G} {(p_j + q_i) x_{ij}} \ = \ {1 \over {(v_i - c_i)}}  \sum_{j \in G} {u_{ij} x_{ij}} \ = \ {{v_i} \over {(v_i - c_i)}} \ = \ m_i .$
\medskip

Hence, an {\em optimal bundle for $i$} consists of goods from $S_i$ costing $m_i$ money. Furthermore, the utility derived by $i$ from this bundle is given by  
\[  \ {m_i \cdot \gamma_i} \ = \ {{v_i} \over {(v_i - c_i)}} \cdot (v_i - c_i) \ =  v_i .\]

\begin{definition}
	\label{def.equilibrium}
	Allocation and prices $(x, p, q)$ are said to be an {\em equilibrium} for market $\CM$ if each agent receives an optimal bundle of goods and the market clears, i.e., all goods are sold and the money of agents is fully spent.
\end{definition}

\begin{remark}
	\label{rem.remarks}
By (\ref{eq.money}), the cost of $c_i$ units of utility for $i$ at equilibrium is 
${c_i \over \gamma_i} = {c_i \over {v_i - c_i}} = m_i - 1 .$
Therefore every agent spends one dollar more than the money she must spend just to accrue her disagreement utility. Hence, every agent's net utility, over and above her disagreement utility, is worth exactly one dollar. This is an aspect of the {\em fairness} of the Nash bargaining solution.

\end{remark}

We have stated above how an instance $I = (n, u, c)$ is transformed to an instance $\CM = (n, u, c)$ of a variable-budget market. By the assertions made above about money and optimal bundles, we get:

\begin{lemma}
\label{lem.reduce}
$(x, p, q)$ is a solution to instance $I$ of {\em 1DLAD} if and only if it is an equilibrium for the variable-budget market $\CM$.
\end{lemma}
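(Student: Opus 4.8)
The plan is to verify the two directions of the equivalence using the KKT characterization of optimal solutions to program \eqref{eq.CP-LAD-copy} (which, by Theorem~\ref{thm.RCP-1} together with convexity, exactly characterizes the solutions to \emph{1DLAD}) and the definition of a $\CM$-equilibrium (Definition~\ref{def.equilibrium}). I would set up a dictionary: the primal variables $x$ and $v$ of \eqref{eq.CP-LAD-copy} are the allocation and induced-utility vector; the dual variables $q_i$ (agents) and $p_j$ (goods) are precisely the price-offsets and prices of $\CM$; and the money $m_i$ is defined from $v_i$ via \eqref{eq.money}. Note first that feasibility of $I$ guarantees $v_i - c_i > 0$ at any optimum, so $\gamma_i = v_i - c_i$ from \eqref{eq.def-Gamma} is well-defined and positive, and all the quantities ${u_{ij}}/{(p_j+q_i)}$, $m_i$ are finite and positive.

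For the forward direction, suppose $(x,p,q)$ is a solution to $I$, i.e., $x$ (with $v_i = \sum_j u_{ij}x_{ij}$) is optimal for \eqref{eq.CP-LAD-copy} with KKT multipliers $p,q$. KKT Conditions 5 and 6 say that for every $i$, the goods with $x_{ij}>0$ are exactly those maximizing ${u_{ij}}/{(p_j+q_i)}$, and that this maximum equals $\gamma_i = v_i - c_i$; hence every agent's bundle lies in $S_i$, which is the first requirement for an optimal bundle. The cost computation already displayed in the text, $\sum_j (p_j+q_i)x_{ij} = \frac{1}{v_i-c_i}\sum_j u_{ij}x_{ij} = \frac{v_i}{v_i-c_i} = m_i$, shows the bundle costs exactly $m_i$, so $i$'s money is fully spent on $S_i$-goods: this is an optimal bundle. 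KKT Conditions 3 and 4 give market clearing: any agent with $q_i>0$ has $\sum_j x_{ij}=1$ and any good with $p_j>0$ has $\sum_i x_{ij}=1$ — and for goods with $p_j = 0$ there is nothing to clear in the usual market sense, while agents with $q_i=0$ still spend $m_i$ by the cost identity (here I should double-check the edge case $q_i = 0$ and $p_j = 0$ simultaneously, which can only happen if some $u_{ij}$ is correspondingly large; feasibility rules out the degenerate $\gamma_i = \infty$ situation). Thus $(x,p,q)$ is a $\CM$-equilibrium.

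For the converse, suppose $(x,p,q)$ is a $\CM$-equilibrium. Each agent gets an optimal bundle, so $x_{ij}>0$ implies $j \in S_i$, i.e., $p_j + q_i = {u_{ij}}/{\gamma_i}$, and for all $j$, $p_j+q_i \ge {u_{ij}}/{\gamma_i}$; using $\gamma_i = v_i - c_i$ (which holds because the utility extracted from an optimal bundle of cost $m_i$ is $m_i\gamma_i = v_i$, forcing $\gamma_i(v_i - c_i) \cdot \frac{1}{v_i-c_i}\cdots$ — more directly, $m_i\gamma_i = v_i$ and $m_i = v_i/(v_i-c_i)$ together give $\gamma_i = v_i - c_i$) recovers KKT Conditions 5 and 6. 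Market clearing gives KKT Conditions 3 and 4: goods are all sold ($p_j>0 \Rightarrow \sum_i x_{ij}=1$, and more), and money is fully spent, which combined with the cost identity forces $\sum_j x_{ij} = 1$ whenever $q_i > 0$; the remaining KKT Conditions 1 and 2 are just $p,q\ge 0$, part of the model. Feasibility in \eqref{eq.CP-LAD-copy} ($\sum_j x_{ij}\le 1$, $\sum_i x_{ij}\le 1$, $x\ge 0$) holds because $x$ is a fractional perfect matching in $\CM$. Hence $x$ satisfies the KKT system and, by convexity of \eqref{eq.CP-LAD-copy} and Theorem~\ref{thm.RCP-1}, is optimal, so $(x,p,q)$ solves $I$.

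I expect the main obstacle to be bookkeeping around the boundary cases — agents or goods with zero price/offset, and making the ``market clears'' clause of Definition~\ref{def.equilibrium} line up cleanly with KKT Conditions 3 and 4 when some prices vanish — rather than any deep step; the algebraic identities relating $m_i$, $\gamma_i$, $v_i$, and $c_i$ are exactly the ones already derived in the surrounding text, so the proof is essentially an organized restatement of that derivation together with the standard fact that KKT is necessary and sufficient for this convex program.
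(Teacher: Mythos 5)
Your proposal is correct and follows essentially the same route as the paper: Lemma~\ref{lem.reduce} is justified there by the KKT-based derivation immediately preceding it (identifying the dual variables $p_j, q_i$ with prices and price-offsets, using $\gamma_i = v_i - c_i$, the money formula \eqref{eq.money}, and the cost identity $\sum_j (p_j+q_i)x_{ij} = m_i$), which is precisely the dictionary you spell out in both directions. One small quibble: in the converse, KKT Condition~3 follows simply because a $\CM$-equilibrium allocation is a fractional perfect matching (so $\sum_j x_{ij}=1$ for every agent), not from the money-fully-spent/cost-identity argument, which holds regardless of $\sum_j x_{ij}$; this does not affect correctness.
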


\subsection{Algorithm for \emph{1DLAD} under Dichotomous Utilities}
\label{sec.mechanism-ADNB}
{\bf Notation:}
We will denote by $H = (A, G, E)$ the bipartite graph on vertex sets $A$ and $G$, and edge set $E$, with $(i, j) \in E$ iff $u_{ij} = 1$. For $A' \subseteq A$ and $G' \subseteq G$, we will denote by $H[A', G']$ the restriction of $H$ to vertex set $A' \cup G'$. If $\nu$ is a matching in $H$, $\nu \subseteq E$, and $(i, j) \in \nu$ then we will say that $\nu(i) = j$ and $\nu(j) = i$. For any subset $S \subseteq A$ ($S \subseteq G$), will denote by $N(S)$ the set of neighbors of vertices in $S$. For a set $S \subseteq G$, we will denote $\sum_{i \in S} {c_i}$ by $c(S)$.

The mechanism is given in Mechanism \ref{alg.ADNB}. 
Step 1 is executed if $H$ has a perfect matching. If so, each agent $i$ accrues unit utility, i.e., $v_i = 1$. Therefore, $m_i = {1 \over {1 - c_i}}$. We have set $p_j = 0$ for each good $j$ and $q_i = m_i$ for each agent. In Step 2, again agents in $A_2$ derive unit utility and the manner of setting $p_j, \ j \in G_2$ and $q_i, \ i \in A_2$ is similar.

We will set $q_i = 0$ for each $i \in A_1$. Let $S \subseteq G_1$ and let $T = N(S) \cap A_1$. Assume that the prices $p_j$ of all goods in $S$ are $\theta$. For $i \in T$, we can express $\gamma_i$, $m_i$ and $v_i$ in terms of $\theta$ as follows. By \eqref{eq.def-Gamma}, 
\[ \gamma_i = {1 \over \theta} \ \ \ \ \mbox{and} \ \ \ \ m_i = \ 1 + {c_i \over {(v_i - c_i)}} \ = \ 1 + {c_i \over \gamma_i} \ = \ 1 + \theta c_i .\]
Finally, 
 $v_i \ = \ m_i \cdot \gamma_i = c_i + {1 \over \theta} .$ 
We will say that set $S$ is {\em tight} if the total prices of $S$ equal the total money of $T$, i.e., $p(S) = m(T)$. Substituting in terms of $\theta$ we get that $S$ is tight if
\[ p(S) \ = \ |S| \cdot \theta \ = \ m(T) \ = \ |T| + c(T) \cdot \theta \]
Hence we get that $S$ is tight if 
\[ \theta = {{|T|} \over {|S| - c(T)}} . \] 

Step 2a finds such a vertex cover, say $G_1 \cup A_2$, where $G_1 \subseteq G$ and $A_2 \subseteq A$. Lemma \ref{lem.expand} is a variant of Lemma 10 in \cite{VY-HZ}, making a change in the first part, so as to support the new vertex cover found. 

\begin{lemma}\label{lem.expand}
The following hold:
\begin{enumerate}
    \setlength\itemsep{-0.2em}
	\item $ \forall \ S \subseteq A_2, \ |N(S) \cap G_2| > |S|. $
	\item For any set $S \subseteq G_1$, $|N(S) \cap A_1| \geq |S|$.
\end{enumerate}
\end{lemma}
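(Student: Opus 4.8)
The statement is about the structure of the vertex cover $G_1 \cup A_2$ found in Step 2a of the mechanism, and it closely mirrors Lemma 10 in \cite{VY-HZ}. The plan is to argue that the vertex cover is chosen to be a \emph{minimum} vertex cover of $H$, and to exploit the corresponding maximum matching via König's theorem and Hall-type counting.

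\textbf{Part 2 (the easier half).} Let $S \subseteq G_1$. Since $G_1 \cup A_2$ is a vertex cover of $H$, every edge incident to $S$ is covered either by a vertex of $S \subseteq G_1$ or by a vertex of $A_2$; but edges incident to $S$ that go to $A_1 = A \setminus A_2$ are \emph{not} covered by any $A_2$ vertex, so they must be covered on the $G$-side, which they already are. The real content is the counting bound $|N(S) \cap A_1| \ge |S|$. I would obtain this from minimality of the vertex cover: if some $S \subseteq G_1$ had $|N(S) \cap A_1| < |S|$, then replacing $S$ inside the cover by $N(S) \cap A_1$ yields a strictly smaller vertex cover (one checks every edge incident to $S$ is still covered, since its other endpoint lies either in $A_2$ or in $N(S)\cap A_1$), contradicting minimality. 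This is the standard ``deficient set'' argument and should go through cleanly.

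\textbf{Part 1 (the strict inequality).} Here I want $|N(S) \cap G_2| > |S|$ for every nonempty $S \subseteq A_2$ — note the \emph{strict} inequality, which is the modification relative to \cite{VY-HZ}. First, $|N(S)\cap G_2| \ge |S|$ follows again from minimality of the cover by the symmetric deficiency argument (if it failed we could shrink the cover by swapping $S$ for $N(S)\cap G_2$). The strictness is where the specific construction of Step 2a matters: the vertex cover is presumably chosen among all minimum vertex covers so that $A_2$ is as small as possible / $G_1$ is maximal, or equivalently so that no good of $G_2$ is ``forced'' — I would need to invoke exactly how Step 2a breaks ties (the analogue of how \cite{VY-HZ} picks its cover). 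Concretely, if some nonempty $S \subseteq A_2$ had $|N(S)\cap G_2| = |S|$ exactly, then by König/Hall the set $S$ together with $N(S)\cap G_2$ forms a ``balanced'' block that could be moved from the ``matched freely'' side to the other side of the partition, producing an alternative minimum vertex cover that contradicts the extremal choice in Step 2a (e.g. it would have a larger $G_1$, or would make $A_1$ larger). So the proof is: (i) both $\ge$ inequalities from minimality of the cover, and (ii) upgrade Part 1 to strict using the tie-breaking rule of Step 2a.

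\textbf{Main obstacle.} The delicate point is Part 1's strict inequality: it hinges on precisely which minimum vertex cover Step 2a selects, and making the extremality/tie-breaking argument airtight (rather than just asserting ``as in Lemma 10 of \cite{VY-HZ}'') is where care is needed. In particular one must ensure the swap that would contradict minimality/extremality genuinely produces a valid vertex cover and genuinely improves the extremal quantity; the boundary case where $N(S)\cap G_2$ overlaps with goods already matched to $A_2\setminus S$ needs to be handled. Everything else — the deficiency arguments for the non-strict bounds, and the observation that $G_1\cup A_2$ being a cover immediately restricts neighborhoods — is routine.
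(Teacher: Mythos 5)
Your proposal is correct and follows essentially the same route as the paper: both parts are proved by the cover-swap (deficiency) argument, replacing $S$ by $N(S)\cap A_1$ (resp.\ adding $N(S)\cap G_2$ and deleting $S$), with minimality of the cover giving the weak inequalities and the Step~2a tie-break --- which is exactly the rule you guessed, namely among minimum vertex covers pick one minimizing the number of vertices from $A$ --- upgrading Part~1 to strict. The ``boundary case'' you worry about does not arise, since the swap is a pure vertex-cover argument with no reference to the matching, so no K\H{o}nig/Hall bookkeeping is needed.
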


\begin{proof}
For first, if $|N(S) \cap G_2| \leq |S|$ then $(G_1 \cup N(S)) \cup (A_2 - S)$ is either a smaller vertex cover or has fewer vertices from $A$, leading to a contradiction. For second, if $|N(S) \cap A_1| \leq |S|$ then $(G_1 - S) \cup (A_2 \cup N(S))$ is a smaller vertex cover, leading to a contradiction. 
\end{proof}

By Lemma \ref{lem.expand}, $|T| \geq |S|$. Furthermore, $c(T) \geq 0$. Therefore, when $S$ goes tight, $\theta \geq 1$. 

We initialize the prices of all goods in $G_1$ to $\theta = 1$ and start raising $\theta$ until a set, say $S_1$ goes tight when $\theta = \theta_1$. The prices of goods in $S_1$ are frozen at $\theta_1$ and $S_1$ and $N(S_1)$ are removed from consideration. The prices of the rest of the goods in $G_1$ are raised uniformly, starting from $\theta_1$ until another set goes tight. Computation of tight sets is done by the subroutine FLOW, which is a substantial generalization of the Simplified DPSV Algorithm~\cite{DPSV08}, and is given in Section \ref{sec.flow}. 

Assume this process finds $k$ tight sets, $S_1, \ldots S_k$, in that order when $\theta$ is $\theta_1, \theta_2 \ldots , \theta_k$, respectively. Clearly, $1 \leq \theta_1 < \theta_2 < \ldots < \theta_k$, and  
\[ \mbox{for} \ \  1 \leq j \leq k, \ \ \ \theta_j = {{|N(S_j)|} \over {|S_j| - c(N(S_j))}} . \] 

Since agents in $N(S_j)$ derive utility from goods in $S_j$, the total utility derived by them is $|S_j|$. Since this instance is guaranteed to be feasible, the total disagreement utility of $N(S_j)$ is strictly smaller. Therefore, the denominator, ${|S_j| - c(N(S_j))} > 0$. 

If $i \in N(S_l)$, then $i$ is allocated ${m_i \over \theta_j}$ amount of goods from $S_l$. Since $\theta_i \geq 1$, the amount of goods allocated to an agent is at most one unit and therefore, these will be partial allocations. As before, the rest of the allocations of these agents come from the unmatched goods of $G_2$. Note that these goods have zero prices and agents in $A_1$ derive zero utility from them.

By definition of neighborhood of sets, if $i \in N(S_l)$, then $i$ cannot have edges to $S_1, \ldots S_{l-1}$; however, $i$ can have edges to $S_{l+1}, \ldots , S_k$. Therefore, the cheapest goods from which she accrues unit utility are in $S_l$. Since $i$ has been allocated goods from this set, exhausting her money, she gets an optimal bundle. Hence, equilibrium allocations and prices have been computed.

\begin{figure}[t]
	\begin{wbox}
		\begin{alg}\label{alg.ADNB}
		{\bf Algorithm for {\em 1DLAD} under Dichotomous Utilities}
		\begin{enumerate}
			\item If $H$ has a perfect matching, say $\nu$, then do:
			 \begin{enumerate}
			 	\item $\forall i \in A: \ x_{i \nu(i)} \leftarrow 1$.
			 	\item $\forall i \in A: \ q_i \leftarrow {1  \over {1 - c_i}}$.
			 	\item $\forall j \in G$: \ $p_j \leftarrow 0$.  \ Go to Step 4. 
			 \end{enumerate}
		
			\item Else do:
			\begin{enumerate}
			\item Find a minimum vertex cover in $H$ that minimizes the number of vertices picked from $A$, say $G_1 \cup A_2$. Let $A_1 = A - A_2$ and $G_2 = G - G_1$. 
			\item Find a maximum matching in $H[A_2, G_2]$, say $\nu$. 
			\item $\forall i \in A_2: \ x_{i \nu(i)} \leftarrow 1$.
			\item $\forall i \in A_2: \ q_i \leftarrow {1  \over {1 - c_i}}$.
			 \item $\forall j \in G_2$: $p_j \leftarrow 0$. 
			\end{enumerate}

			\item
			\begin{enumerate}
			 \item $\forall i \in A_1: \ q_i \leftarrow 0$.
			\item Run subroutine FLOW on $H[A_1, G_1]$ to obtain partial allocation $x$ for agents in $A_1$ and prices $p$ for goods in $G_1$.
			\item $\forall i \in A_1$: Allocate unmatched goods of $G_2$ to satisfy the size constraint. 
			\end{enumerate}
			\item 	Output $(x, p, q)$ and Halt.

		\end{enumerate} 
		\end{alg}
	\end{wbox}
\end{figure}

\subsubsection{Subroutine FLOW}
\label{sec.flow}

This subroutine is the workhorse of Mechanism \ref{alg.ADNB}. Via flow computations, it allocates goods in $G_1$ to agents in $A_1$. Since $|G_1| < |A_1|$, these will be partial allocations. The rest of the allocations of $A_1$ come from the unallocated goods in $G_2$ which are exactly $|A_1| - |G_1|$ in number. Note that agents in $A_1$ derive zero utility from goods in $G_2$. Therefore, the entire equilibrium utility assigned to agents in $A_1$ comes from goods in $G_1$, amounting to a total of $|G_1|$ units of utility. FLOW also computes prices of goods in $G_1$.

Since the given instance of {\em 1DLAD} is feasible, $\forall i \in A$, $v_i > c_i$, where $v_i$ is the utility assigned to $i$ in the Nash bargaining solution. Therefore feasibility guarantees that $\sum_{i \in A_1} {v_i} = |G_1| > c(A_1)$. 

The high level idea of FLOW is similar to that of \cite{DPSV08}, i.e., prices of goods in $G_1$ are initialized in such a way that their equilibrium price is not exceeded. Prices are raised simultaneously and continuously until goods in some subset $G' \subseteq G_1$ attain equilibrium prices. FLOW then assigns these goods to agents who desire these goods, i.e., agents in $N(G')$. FLOW then removes $G'$ and $N(G')$ from $H[A_1, G_1]$, and proceeds with the rest of the graph in the next iteration, until the entire graph is dealt with.  

Throughout the first iteration, all goods in $G_1$ will have the same price; let $\theta$ denote the {\em current price} of goods. We will initialize $\theta$ to 1. By the second part of Lemma \ref{lem.expand}, the total price of all goods in any set $S \subseteq G_1$ is at most the money of agents in $N(S)$, i.e., agents who desire these goods. Therefore, with this initialization, we have not exceeded equilibrium prices of goods in $G_1$.

Observe that at any point in the iteration, the maximum price-adjusted bang-per-buck of agent $i \in A_1$ is $ \gamma_i = {1 / \theta}$, since $q_i = 0$. Therefore, the  {\em current money} of $i \in A_1$ is 
\[ m_i \ = \ 1 + {c_i \over \gamma_i} \ = \ 1+  c_i \theta  . \] 

FLOW increases $\theta$ continuously and as a result, in general, the money of agents also increases. This is a major departure from the Fisher market model in which the money of agents is fixed.

At any point in the iteration, the total money of all agents in $A_1$ is $|A_1| + c(A_1) \theta$ and the total value of all goods in $G_1$ is $|G_1| \theta$. Since $c(A_1) < |G_1|$, the money of agents is increasing at a slower rate than the value of goods. Therefore, the two will become equal for a finite value of $\theta$. A similar statement holds for any set $S \subseteq G_1$ and $N(S)$, which is the set of buyers interested in goods in $S$. 

At any point in the iteration, for $S \subseteq G_1$ let us define the total price of $S$ to be $p(S) = |S| \theta$. And for $T \subseteq A_1$, define the total money of $T$ to be $m(T) = |T| + c(T) \theta$. Let $\theta^*$ be the minimum value of $\theta$ at which there is $S \subseteq G_1$  such that $p(S) = m(N(S))$, i.e., the total price of $S$ equals the total money of agents who are interested in buying goods in $S$. Let $S^*$ be the maximal such set; it will be called a {\em tight set}. Lemma \ref{lem.Ss} shows how to find $\theta^*$ and $S^*$, using at most $n$ max-flow computations. It also shows how to allocate goods in $S^*$ to agents in $N(S^*)$ so as to exhaust their money. As stated above, $S^*$ and $N(S^*)$ will be removed from the graph and FLOW will move to the next iteration.
 
Using $H[A_1,G_1]$, we construct a network $\cT$ which helps us find $\theta^*$ in polynomial time. Pick $s$ and $t$ to be the source and sink of $\cT$ and add $\forall j \in G_1$, directed edges $(s, j)$ having capacity $\theta$, $\forall (i, j)$ s.t. $u_{ij} = 1$, directed edge $(j, i)$ having infinite capacity, and $\forall i \in A_1$, directed edge $(i, t)$  having capacity $1 + c_i \theta$. 

Since at the start of the iteration, $\forall S \subseteq G_1$, $p(S) \leq m(N(S))$, we have that $(\{s\}, \ G_1 \cup A_1 \cup \{t \})$ is a minimum $s-t$ cut in $\cT$. The next lemma is crucial for finding $\theta^*$. For this purpose, we will denote an arbitrary minimum $s-t$ cut in $\cT$ by $(\{s\} \cup C_1 \cup D_1, \ C_2 \cup D_2 \cup \{t\})$, where $C_1 \subseteq G_1, \ D_1 \subseteq A_1, \ C_2 = G_1 - C_1, \ D_2 = A_1 - D_1$. 

\begin{lemma}
	\label{lem.theta}
	The following hold:
	\begin{enumerate}
    \setlength\itemsep{-0.2em}
		\item If $\theta \leq \theta^*$ then $(\{s\}, \ G_1 \cup A_1 \cup \{t \})$ is a minimum $s-t$ cut in $\cT$.
		\item If $\theta > \theta^*$ then $(\{s\}, \ G_1 \cup A_1 \cup \{t \})$ is not a minimum $s-t$ cut in $\cT$. Furthermore, if $(\{s\} \cup C_1 \cup D_1, \ C_2 \cup D_2 \cup \{t\})$ is a minimum $s-t$ cut, then $S^* \subseteq C_1$.
	\end{enumerate}
\end{lemma}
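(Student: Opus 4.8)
The plan is to analyze the min-cut capacity of $\cT$ as a function of $\theta$ and relate it to the tightness condition $p(S) = m(N(S))$. Recall that the cut $(\{s\},\, G_1 \cup A_1 \cup \{t\})$ has capacity exactly $\sum_{j \in G_1} \theta = |G_1|\theta = p(G_1)$, since it cuts only the source edges. Any other cut $(\{s\} \cup C_1 \cup D_1,\, C_2 \cup D_2 \cup \{t\})$ with $C_1 \subseteq G_1$, $D_1 \subseteq A_1$ must, to have finite capacity, satisfy $N(C_1) \cap A_1 \subseteq D_1$ (otherwise an infinite-capacity edge $(j,i)$ with $j \in C_1$, $i \notin D_1$ crosses the cut). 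For such a finite cut, its capacity is $\sum_{j \in C_2} \theta + \sum_{i \in D_1} (1 + c_i \theta) = p(C_2) + m(D_1) = p(G_1) - p(C_1) + m(D_1)$. Hence the trivial cut is minimum if and only if for every feasible choice, $m(D_1) \ge p(C_1)$; and since the cheapest feasible $D_1$ for a given $C_1$ is $D_1 = N(C_1) \cap A_1$, the trivial cut is minimum iff $p(S) \le m(N(S))$ for all $S \subseteq G_1$.

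First I would prove part~1: for $\theta \le \theta^*$, by definition of $\theta^*$ as the minimum value at which some set goes tight, we have $p(S) \le m(N(S))$ for all $S \subseteq G_1$ (with equality possible only when $\theta = \theta^*$), so by the computation above the trivial cut is minimum. Next, for part~2, when $\theta > \theta^*$: I would first argue that the tight set $S^*$ (maximal at $\theta^*$) now strictly violates the inequality, i.e.\ $p(S^*) > m(N(S^*))$ at the larger $\theta$ — this follows because $p(S) - m(N(S)) = (|S| - c(N(S)))\theta - |N(S)|$ is \emph{strictly increasing} in $\theta$ (the coefficient $|S^*| - c(N(S^*)) > 0$ by feasibility, exactly as argued in the main text), and it equals $0$ at $\theta = \theta^*$. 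So the cut using $D_1 = N(S^*) \cap A_1$, $C_1 = S^*$ has capacity strictly below $p(G_1)$, proving the trivial cut is no longer minimum.

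For the second assertion of part~2 — that $S^* \subseteq C_1$ for \emph{every} minimum cut — I would use a submodularity / uncrossing argument on cut sets. Given any minimum cut $(\{s\}\cup C_1 \cup D_1, \dots)$, consider the set $C_1 \cup S^*$ together with its required neighborhood on the $A_1$ side; using the standard fact that the function $S \mapsto p(S) - m(N(S)\cap A_1)$ is supermodular (since $|S| - c(N(S)\cap A_1)$ involves a submodular neighborhood term with a nonnegative multiplier $\theta$, and $|S|$ is modular), one shows that if $C_1$ gives a min-cut and $S^*$ strictly violates the inequality, then $C_1 \cup S^*$ gives a cut of capacity no larger, and strictly smaller unless $S^* \subseteq C_1$ already. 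Combined with minimality of $C_1$'s cut, this forces $S^* \subseteq C_1$. The main obstacle I anticipate is getting the uncrossing/supermodularity bookkeeping exactly right — in particular handling the neighborhood sets $N(\cdot) \cap A_1$ correctly under union and making sure the strict inequality propagates — rather than the (routine) capacity computations in parts~1 and the first half of part~2.
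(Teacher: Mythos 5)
Your reduction of the lemma to the function $f(S) \coloneqq p(S) - m(N(S)\cap A_1)$ is sound: the trivial cut has capacity $p(G_1)$, every finite cut can be assumed to have $D_1 = N(C_1)\cap A_1$ and capacity $p(G_1) - f(C_1)$, so min cuts correspond to maximizers of $f$, and your proofs of part~1 and of the first claim of part~2 coincide with the paper's. The gap is in the last step. Supermodularity of $f$ gives $f(C_1\cup S^*) \geq f(C_1) + f(S^*) - f(C_1\cap S^*)$, so to conclude that uncrossing strictly improves the cut (unless $S^*\subseteq C_1$) you need $f(S^*) > f(C_1\cap S^*)$, and this does \emph{not} follow from ``$S^*$ strictly violates the inequality,'' i.e.\ from $f(S^*)>0$ alone: at $\theta>\theta^*$ proper subsets of $S^*$ can also have strictly positive $f$ (for instance, any smaller set that was simultaneously tight at $\theta^*$), so nothing in your argument as written rules out $f(C_1\cap S^*)\geq f(S^*)$, in which case the union yields no improvement and minimality of $C_1$ gives no contradiction. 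This is exactly the ``strict inequality propagates'' issue you flag as bookkeeping, but it is the mathematical crux of the second assertion, not bookkeeping.

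The missing sub-lemma is that for every proper subset $S'\subsetneq S^*$ and every $\theta>\theta^*$ one has $f_\theta(S') < f_\theta(S^*)$. It is true, but needs the defining properties of $\theta^*$ and $S^*$: writing $f_\theta(S) = \theta\bigl(|S|-c(N(S))\bigr) - |N(S)|$, one uses $f_{\theta^*}(S')\leq 0 = f_{\theta^*}(S^*)$ together with a comparison of the growth rates $|S|-c(N(S))$ (ruling out a larger rate for $S'$ via $N(S')\subseteq N(S^*)$, and handling the equality case through the tightness equation, which forces $N(S')=N(S^*)$ and $|S'|=|S^*|$, a contradiction). With that sub-lemma your lattice/uncrossing argument closes. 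The paper avoids stating it by uncrossing only the ``new'' part: it sets $S_2 = S^*\cap C_2$, $T_2 = N(S_2)\cap D_2$, shows $m(T_2)\leq p(S_2)$ at $\theta^*$ (using tightness of $S^*$ and $m(N(S_1))\geq p(S_1)$ from the minimality of $\theta^*$), upgrades this to strict at $\theta>\theta^*$ by the rate argument, and then moves $S_2$ and $T_2$ across the cut for a strict decrease in capacity. So either supply the sub-lemma above or switch to the paper's direct exchange; as it stands, your proposal does not yet prove $S^*\subseteq C_1$.
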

\begin{proof}
	{\bf 1).} Since $\theta \leq \theta^*$, $\forall S \subseteq G_1$, $p(S) \leq m(N(S))$. Therefore, $(\{s\}, \ G_1 \cup A_1 \cup \{t \})$ is a minimum $s-t$ cut in $\cT$.
	
	{\bf 2).} At $\theta = \theta^*$, $p(S^*) = m(N(S^*))$. By feasibility, $c(N(S^*)) < |S^*|$. Therefore, as $\theta$ increases beyond $\theta^*$, $p(S^*)$ increases at a faster rate than $m(N(S^*))$. Therefore, for $\theta > \theta ^*$, $p(S^*) > m(N(S^*))$, and	hence the cut $(\{s\} \cup S^* \cup N(S^*), (G_1 - S^*) \cup (A_1 - N(S^*)) \cup \{t\})$ has a smaller capacity than $(\{s\}, \ G_1 \cup A_1 \cup \{t \})$, implying that the latter is not a minimum $s-t$ cut. 

	At $\theta > \theta^*$, let $(\{s\} \cup C_1 \cup D_1, \ C_2 \cup D_2 \cup \{t\})$ be a minimum $s-t$ cut with $S^* \cap C_1 = S_1$ and $S^* \cap C_2 = S_2$. Assume for contradiction that $S_2 \neq \emptyset$. Let $N(S_1) = T_1$. Now $T_1 \cap D_2 = \emptyset$, since otherwise there will be an infinite capacity edge crossing the min-cut. Therefore $T_1 \subseteq D_1$. Let $N(S_2) \cap D_2 = T_2$.
	
	Consider the situation when $\theta = \theta^*$. We have
	\[ {{m(N(S^*))} \over {p(S^*)}} = 1 \geq  {{m(T_1) + m(T_2)} \over {p(S_1) + p(S_2)}} .    \]
	By definition of $\theta^*$, ${{m(T_1)} \over {p(S_1)}} \geq 1$. Therefore, we get that ${{m(T_2)} \over {p(S_2)}} \leq 1$. Again by feasibility, at $\theta > \theta^*$, $m(T_2) < p(S_2)$. Therefore, moving $S_2$ into $C_1$ and $T_2$ into $D_2$ will lead to a smaller cut, contradicting the minimality of the cut being considered. Therefore $S_2 = \emptyset$ and hence $S^* \subseteq C_1$.
\end{proof}

\begin{lemma}
	\label{lem.Ss}
	$n$ max-flow computations suffice for computing	$\theta^*$ and $S^*$, and 
	also for allocating goods in $S^*$ to agents in $N(S^*)$.
\end{lemma}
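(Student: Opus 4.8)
The plan is to mimic the structure of the Simplified DPSV procedure, using Lemma~\ref{lem.theta} as the key structural tool. First I would observe that by part~1 of Lemma~\ref{lem.theta}, the cut $(\{s\}, G_1\cup A_1\cup\{t\})$ stays a minimum cut precisely while $\theta\le\theta^*$, so $\theta^*$ is the threshold at which the value of a maximum $s$--$t$ flow in $\cT$ falls below $|G_1|\theta$. Because the capacities of $\cT$ are affine in $\theta$ (source edges have capacity $\theta$, sink edges have capacity $1+c_i\theta$, the rest are infinite), the max-flow value, as a function of $\theta$, is a concave piecewise-linear function whose pieces correspond to distinct min-cut ``shapes'' $(C_1,D_1)$. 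The first step is therefore to locate $\theta^*$: I would binary-search on $\theta$, but more cleanly, for a candidate cut shape $(\{s\}\cup C_1\cup D_1,\ C_2\cup D_2\cup\{t\})$ the cut capacity is $\theta|C_2| + \sum_{i\in D_1}(1+c_i\theta)$, which equals $\theta|G_1|$ (the ``full'' cut value) exactly when $\theta = |D_1| / (|C_2\cap\text{?}| - c(D_1))$ — rearranging, for the tight set we must have $N(C_1)\cap A_1\subseteq D_1$ so this matches the formula $\theta^* = |N(S^*)|/(|S^*|-c(N(S^*)))$ established earlier. Concretely, I would compute a max flow at the current $\theta$, and if the full cut is still minimum, raise $\theta$ to the next breakpoint; each breakpoint is detected by solving for the value of $\theta$ at which some alternative min-cut's affine capacity crosses $\theta|G_1|$, which can be read off from one max-flow computation via the standard parametric-flow / "min-cut moves" argument as in DPSV. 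Since each iteration permanently moves at least one good from $C_2$ into $C_1$ (once it joins the source side it never leaves, by the monotonicity in part~2 of Lemma~\ref{lem.theta}), there are at most $|G_1|\le n$ such steps.

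Next I would extract $S^*$: once $\theta=\theta^*$ is reached, compute a max flow and take $S^*$ to be the maximal $C_1$ over all min-cuts; maximality is handled in the usual way — among all min-cuts, the source side is closed under the residual-reachability relation, so one more reachability computation on the residual graph after a single max-flow call yields the maximal tight set. Then, to allocate goods in $S^*$ to agents in $N(S^*)$ exhausting their money: at $\theta=\theta^*$ the max flow saturates every source edge $(s,j)$ for $j\in S^*$ (value $\theta^*$ each) and, because $p(S^*)=m(N(S^*))$, also saturates every sink edge $(i,t)$ for $i\in N(S^*)$ (value $m_i = 1+c_i\theta^*$ each); the flow $f_{ji}$ on the interior edges then directly defines the allocation $x_{ij} = f_{ji}/\theta^*$, which is consistent with the formula $x_{ij}=m_i/\theta^*$ summed appropriately. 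This is all one max-flow computation. So the bound is: at most $n-1$ flow computations to march $\theta$ up to $\theta^*$ through its breakpoints, plus one final flow computation to read off $S^*$ and the allocation, for a total of at most $n$.

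The main obstacle I expect is the breakpoint-detection step — arguing rigorously that from a single max-flow at the current $\theta$ one can compute the exact next value of $\theta$ at which the min-cut changes, and that only $O(n)$ such changes occur. This is exactly the content of the Simplified DPSV analysis generalized to affine (rather than constant) sink capacities; the care needed is that the ``money'' edges $1+c_i\theta$ grow with $\theta$, so one must verify that the relevant cut-capacity function $\theta\mapsto \theta|C_2| + |D_1| + c(D_1)\theta$ still crosses $\theta|G_1|$ at a well-defined point and that feasibility ($c(N(S^*))<|S^*|$, hence a strictly positive denominator) guarantees this crossing is finite and that, past it, goods only migrate toward the source side — which is precisely what part~2 of Lemma~\ref{lem.theta} supplies. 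Once that monotonicity is in hand, the counting argument and the final allocation extraction are routine.
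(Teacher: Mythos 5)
There is a genuine gap at the heart of your plan: the step that actually computes $\theta^*$ is never carried out. While $\theta \le \theta^*$, part~1 of Lemma~\ref{lem.theta} says the minimum cut is just the trivial cut $(\{s\},\ G_1\cup A_1\cup\{t\})$, so a max-flow computed at any such $\theta$ certifies only that no set has yet gone tight; it carries no information about which set will go tight first or at what price. Consequently your proposed move --- ``if the full cut is still minimum, raise $\theta$ to the next breakpoint \dots read off from one max-flow computation'' --- has nothing to read off: the next (indeed the only relevant) breakpoint is $\theta^*$ itself, and determining it amounts to solving $\min_{S\subseteq G_1}\ |N(S)|/(|S|-c(N(S)))$, which is exactly what the lemma asks you to do. Your appeal to parametric-flow monotonicity also does not apply here: in $\cT$ both the source capacities ($\theta$) and the sink capacities ($1+c_i\theta$) grow with $\theta$, so the standard nested-min-cut structure is not available, and part~2 of Lemma~\ref{lem.theta} only gives $S^*\subseteq C_1$ for $\theta>\theta^*$, not the ``once a good joins the source side it never leaves'' claim on which your count of at most $n$ steps rests (moreover, below $\theta^*$ the min cut has no goods on the source side at all, so that counting argument has no traction there). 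A plain binary search on $\theta$ would be polynomial but would not give the claimed bound of $n$ max-flow computations.

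The paper resolves this by iterating from above rather than below. Set $\theta$ to the value at which the \emph{entire} current good set would go tight, i.e.\ solve $|A_1| + \theta c(A_1) = \theta |G_1|$; by the definition of $\theta^*$ this $\theta \ge \theta^*$. Compute one max-flow/min-cut at this $\theta$. If the trivial cut is still minimum (equivalently, has the same capacity as $(\{s\}\cup G_1\cup A_1,\ \{t\})$), then $\theta=\theta^*$, $S^*$ is the whole current good set, and this same max-flow already yields the allocation --- your description of extracting $x_{ij}$ from the saturated flow is fine for this step. Otherwise part~2 of Lemma~\ref{lem.theta} guarantees $S^*\subseteq C_1\subsetneq G_1$, so one recurses on the network restricted to $C_1\cup D_1$, recomputing the candidate $\theta$ for this smaller good set. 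Each iteration costs one max-flow and strictly decreases the number of goods, which is what gives the bound of $n$. If you want to keep your bottom-up picture, you would need to supply the missing subroutine that finds the first tight set from below --- and once written out, that subroutine essentially becomes the top-down shrinking argument above.
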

\begin{proof}
	Start by finding the value of $\theta$ for which $m(A_1) = p(G_1)$, i.e., $|A_1| + \theta c(A_1) = \theta |G_1|$. By definition of $\theta^*$, $\theta \geq \theta^*$. Now $\theta = \theta^*$ if and only if the cuts $(\{s\}, \ G_1 \cup A_1 \cup \{t \})$ and $(\{s\} \cup G_1 \cup A_1, \ \{t \})$ have equal capacity. If so, both will be minimum $s-t$ cut in $\cT$  and $S^* = G_1$. A max-flow in the network will yield allocations of goods in $S^*$ to agents in $N(S^*)$.
	
	If $\theta > \theta^*$, then by Lemma \ref{lem.theta}, $(\{s\}, \ G_1 \cup A_1 \cup \{t \})$ will not be a minimum $s-t$ cut in $\cT$ and $S^* \subseteq C_1 \subset G_1$. In this case, we will iterate with the network restricted to $C_1 \cup D_1$, which has fewer goods.
\end{proof}

Observe how the feasibility of the given instance, together with Lemmas \ref{lem.theta} and \ref{lem.Ss}, help get around the ``chicken-and-egg'' problem, mentioned in the first paragraph of Section \ref{sec.mechanism-ADNB}. This problem is inherent in the variable-budget market model. 
All steps of Mechanism \ref{alg.ADNB} and FLOW, namely finding a maximum matching, a minimum vertex cover and running polynomially many max-flow computations, can be executed in strongly polynomial time. Hence we get:

\begin{theorem}
	\label{thm.ADNB}
	There is a combinatorial, strongly polynomial time algorithm for the dichotomous utilities case of the market {\em 1DLAD}.
\end{theorem}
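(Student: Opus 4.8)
The plan is to verify that Mechanism~\ref{alg.ADNB} together with its subroutine FLOW outputs an equilibrium of the variable-budget market $\CM$, running in strongly polynomial time; the theorem then follows from Lemma~\ref{lem.reduce}, which identifies equilibria of $\CM$ with solutions of \emph{1DLAD}.

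First I would dispatch the easy case where $H$ has a perfect matching $\nu$ (Step~1): each agent $i$ is allocated its matched good, so $v_i = 1$, every price is $0$, and $q_i = 1/(1-c_i) = m_i$ by~\eqref{eq.money}; feasibility ($c_i < 1$) makes $m_i$ well defined and positive. Then $i$'s price-adjusted bang-per-buck is $1/q_i$ on every liked good and $0$ elsewhere, the cost of $i$'s bundle is $q_i = m_i$, and all goods are sold, so $(x,p,q)$ is an equilibrium. When $H$ has no perfect matching, the same argument applies verbatim to the agents $A_2$ and goods $G_2$ via the maximum matching of $H[A_2,G_2]$ (Step~2); the substantive content lies in Step~3 on $H[A_1,G_1]$.

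For Step~3 I would argue correctness of FLOW. With $q_i = 0$ for $i \in A_1$, at the common current price $\theta$ of the goods still in play we have $\gamma_i = 1/\theta$ and $m_i = 1 + c_i\theta$, so a set $S \subseteq G_1$ with $T = N(S)\cap A_1$ is tight exactly when $\theta = |T|/(|S|-c(T))$. By Lemma~\ref{lem.expand}(2), $|T|\ge |S|$, and by feasibility $c(T) < |S|$ (the agents of $T$ accrue total utility $|S|$ from $S$ but have strictly smaller total disagreement utility), so the denominator is positive and the first tight price is $\ge 1$ --- this justifies initializing $\theta = 1$ without overshooting any equilibrium price. Lemmas~\ref{lem.theta} and~\ref{lem.Ss} then show that the minimum tight price $\theta^*$ and the maximal tight set $S^*$ can be found with at most $n$ max-flow computations in the network $\cT$, and that a max flow there allocates $S^*$ among $N(S^*)$ exhausting each of their budgets $m_i = 1 + c_i\theta^*$; since $\theta^* \ge 1$, each such agent gets at most one unit, so these are genuine partial allocations. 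Removing $S^*$ and $N(S^*)$ and iterating, after at most $n$ rounds all of $G_1$ is priced and all of $A_1$ is partially allocated. I would then observe, as in the text, that an agent in $N(S_\ell)$ has no liked good in $S_1,\dots,S_{\ell-1}$, so $S_\ell$ contains her cheapest liked goods, and since she has spent all her money on $S_\ell$ she holds an optimal bundle whose utility equals $m_i\gamma_i = v_i$, consistent with~\eqref{eq.def-Gamma}. Step~3(c) then fills her remaining capacity with the $|A_1|-|G_1|$ unmatched zero-price goods of $G_2$, which she values at $0$; this preserves optimality and clears those goods, so $(x,p,q)$ is an equilibrium of $\CM$.

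The running-time claim is a bookkeeping step: a minimum vertex cover minimizing the $A$-side and a maximum matching are computable in strongly polynomial time, FLOW performs at most $n$ iterations each using at most $n$ max-flow computations (Lemma~\ref{lem.Ss}), and all arithmetic stays polynomially bounded (the rationality guaranteed by Theorem~\ref{thm.RCP-1} controls numerators and denominators). I expect the main obstacle to be the correctness argument for FLOW --- specifically, confirming that the iterative ``freeze a tight set and recurse'' process is consistent with the self-referential definition of $m_i$, which depends on the utility $v_i$ the agent ultimately receives (the ``chicken-and-egg'' issue noted after the proof of Lemma~\ref{lem.Ss}). This is exactly where feasibility, through Lemmas~\ref{lem.theta} and~\ref{lem.Ss}, does the work, and I would be careful to check that a price frozen in an earlier round is never disturbed and that every agent's eventual $\gamma_i$ equals $v_i - c_i$ as~\eqref{eq.def-Gamma} demands.
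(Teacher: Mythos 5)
Your proposal is correct and takes essentially the same route as the paper: the theorem is exactly the conclusion of verifying that Mechanism~\ref{alg.ADNB} with subroutine FLOW produces an equilibrium of the variable-budget market (via Lemmas~\ref{lem.expand}, \ref{lem.theta}, \ref{lem.Ss} and the reduction in Lemma~\ref{lem.reduce}), together with the observation that a maximum matching, a minimum vertex cover, and polynomially many max-flow computations suffice, all strongly polynomial. Your handling of the tightness condition $\theta = |T|/(|S|-c(T))$, the $\theta \geq 1$ bound, the optimal-bundle check for agents in $N(S_\ell)$, and the zero-price fill-in from $G_2$ matches the paper's argument.
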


\section{Discussion}
\label{sec.discussion}

The following question naturally arises: Is there a way of suitably modifying our algorithm for 1DLAD to obtain a mechanism that is strategyproof, as was done for the dichotomous case of HZ~\cite{Bogomolnaia2004random,Aziz20}? In Appendix~\ref{sec.DSIC-ADNB} we prove that our mechanism for 1DLAD, given in Section \ref{sec.1DLAD}, is strategyproof under the assumption that the disagreement utilities $c_i$'s are public knowledge and therefore are truthfully reported. If they can also be tampered with, we leave the problem of obtaining a strategyproof mechanism open.

Is $\epsilon$-approximate ADHZ in PPAD and is it PPAD-hard? Is program \eqref{eq.CP-LAD-copy} for {\em 1DLAD} a rational convex program for arbitrary utilities? We believe not and leave the problem of finding an instance which has only irrational optimal solutions. Note though that the variable-budget market model holds for arbitrary utilities. Is there an efficient algorithm for computing an approximate equilibrium for this model? 

\section{Acknowledgement} \label{sec.ack}

We wish to thank Federico Echenique for valuable discussions.

\bibliographystyle{abbrv}
\bibliography{refs,references}

\appendix
\section{Preliminaries}
\subsection{Ordinal vs Cardinal Utilities}\label{sec.ord-car}
 
The two ways of expressing utilities of goods -- ordinal and cardinal -- have their own pros and cons and neither dominates the other. On the one hand, the former is easier to elicit from agents and on the other, the latter is far more expressive, enabling an agent to not only report if she prefers good $A$ to good $B$ but also by how much. \cite{Abdulkadirouglu-Cardinal} exploit this greater expressivity of cardinal utilities to give mechanisms for school choice which are superior to ordinal-utility-based mechanisms. 

The following example illustrates the advantage of cardinal vs ordinal utilities. The instance has three types of goods, $T_1, T_2, T_3$, and these goods are present in the proportion of $(1\%, \ 97\%, \ 2\%)$. Based on their utility functions, the agents are partitioned into two sets $A_1$ and $A_2$, where $A_1$ constitute $1\%$ of the agents and $A_2$, $99\%$. The utility functions of agents in $A_1$ and $A_2$ for the three types of goods are $(1, \ \epsilon, \ 0)$ and $(1, \ 1- \epsilon, \ 0)$, respectively, for a small number $\epsilon > 0$. The main point is that whereas  agents in $A_2$ marginally prefer $T_1$ to $T_2$, those in $A_1$ overwhelmingly prefer $T_1$ to $T_2$. 

Clearly, the ordinal utilities of all agents in $A_1 \cup A_2$ are the same. Therefore, a mechanism based on such utilities will not be able to make a distinction between the two types of agents. On the other hand, the HZ mechanism, which uses cardinal utilities, will fix the price of goods in $T_3$ to be zero and those in $T_1$ and $T_2$ appropriately so that by-and-large the bundles of $A_1$ and $A_2$ consist of goods from $T_1$ and $T_2$, respectively.

\subsection{The Fisher Market Model}\label{sec:fisher}
The {\em Fisher market model} consists of a set $A = \{1, 2, \ldots n\}$ of agents and a set $G = \{1, 2, \ldots, m\}$ of infinitely divisible goods. By fixing the units for each good, we may assume without loss of generality that there is a unit of each good in the market. Each agent $i$ comes to the market with a budget $b_i$ and has utilities $u_{ij} \geq 0$ for each good $j$. In the case of linear utilities, agent $i$'s utility from allocation $(x_{ij})_{j\in G}$ is $\sum_j u_{ij}x_{ij}$. 

\begin{definition}
    A \emph{Fisher equilibrium} is a pair $(x, p)$ consisting of an \emph{allocation} $(x_{ij})_{i \in A, j \in G}$ and \emph{prices} $(p_j)_{j \in G}$ with the following properties.
    \begin{enumerate}
        \item Each agent $i$ spends at most their budget, i.e., $\sum_{j \in G}{p_j x_{ij}} \leq b_i$.
        \item Each agent $i$ gets an \emph{optimal bundle}, i.e., utility maximizing bundle at prices $p$. Formally:
 {\small           \[
                \sum_{j \in G}{u_{ij} x_{ij}} = \max \left\{\sum_{j \in G}{u_{ij} y_j} \;\middle|\; y \in \bbR_{\geq 0}^G, \sum_{j \in G}{p_j y_j } \leq b_i\right\}.
            \]}
        \item The market clears, i.e., each good with positive price is fully allocated to the agents.
    \end{enumerate}
\end{definition}

The set of equilibria of a linear Fisher market corresponds to the set of optimal solutions of the following Eisenberg-Gale convex program \cite{eisenberg}.
\begin{maxi*}
    {(x_{ij})_{i \in A, j \in G}}
    {\sum_{i \in A}{b_i \log \sum_{j \in G}{u_{ij} x_{ij}}}}
    {}
    {}
    \addConstraint{\sum_{i \in A}{x_{ij}}}{\leq 1}{\quad \forall j \in G}
    \addConstraint{x_{ij}}{\geq 0}{\quad \forall i \in A, j \in G.}
\end{maxi*}

This is a \emph{rational convex program} and in fact it motivated the definition of this concept~\cite{va.rational}.

\subsection{Nash Bargaining Problem} 
\label{sec.Nash}

An {\em $n$-person Nash bargaining problem} consists of a pair $(\CN, c)$, where $\CN \subseteq \R_+^n$ is a compact, convex set and $c \in \CN$. The set $\CN$ is called the {\em feasible set} -- its elements are vectors whose components are utilities that the $n$ players can simultaneously accrue. Point $c$ is the {\em disagreement point} -- its components are utilities players accrue if they decide not to participate in the proposed solution. 

The set of $n$ agents will be denoted by $A$ and the agents will be numbered $1, 2, \ldots n$. Instance $(\CN, c)$ is said to be {\em feasible} if there is a point in $\CN$ at which each agent does strictly better than her disagreement utility, i.e., $\exists \vv \in \CN$ such that $\forall i \in A, \ v_i > c_i$, and {\em infeasible} otherwise. In game theory it is customary to assume that the given Nash bargaining problem $(\CN, c)$ is feasible; we make this assumption as well.

The solution to a feasible instance is the point $\vv \in \CN$ that satisfies the four axioms: 

\begin{enumerate}
\item
{\em  Pareto optimality:}  No point in $\CN$ weakly dominates $\vv$.
\item
{\em  Symmetry:} If the players are renumbered, then a corresponding renumber the coordinates of $\vv$ is a solution to the new instance.
\item
{\em  Invariance under affine transformations of utilities:} If the utilities of any player are redefined by
multiplying by a scalar and adding a constant, then the solution to the transformed problem is obtained by
applying these operations to the particular coordinate of $\vv$.
\item
{\em  Independence of irrelevant alternatives:} If $\vv$ is the solution to $(\CN, c)$, and 
$\CS \subseteq \R_+^n$ is a compact, convex set satisfying $c \in \CS$ and  $\vv \in \CS \subseteq \CN$, then $\vv$ is also the solution to $(\CS, c)$.
\end{enumerate}

Via an elegant proof, Nash proved:

\begin{theorem}
[ Nash \cite{Nash1953two}]
\label{thm.nash}
If problem $(\CN, c)$ is feasible
then there is a unique point in $\CN$ satisfying the axioms stated above. Moreover, this point is obtained by maximizing $\Pi_{i \in A}  {(v_i - c_i)}$ over $\vv \in \CN$.
\end{theorem}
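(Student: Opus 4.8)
The plan is to establish the theorem in three stages: (i) show that the \emph{Nash product} $N(\vv) := \prod_{i \in A}(v_i - c_i)$ attains a \emph{unique} maximum over $\{\vv \in \CN : v_i \ge c_i \text{ for all } i\}$, call it $\vv^*$; (ii) check that $\vv^*$, regarded as the rule assigning this maximizer to every feasible instance, satisfies the four axioms; and (iii) show conversely that \emph{any} rule satisfying the four axioms must return $\vv^*$ on every feasible instance, which yields both existence and uniqueness of the axiomatic solution.

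For stage (i), feasibility supplies $\tilde\vv \in \CN$ with $\tilde v_i > c_i$ for all $i$, so $N(\tilde\vv) > 0$. The set $K := \{\vv \in \CN : v_i \ge c_i \text{ for all } i\}$ is a closed subset of the compact set $\CN$, hence compact, and nonempty; since $N$ is continuous it attains a maximum $\vv^*$ on $K$, and $N(\vv^*) \ge N(\tilde\vv) > 0$ forces $v^*_i > c_i$ for every $i$. Uniqueness follows from strict concavity of $\log N(\vv) = \sum_{i} \log(v_i - c_i)$ on the convex set $\{\vv : v_i > c_i \text{ for all } i\}$ (a sum of strictly concave functions of distinct coordinates is jointly strictly concave): two distinct maximizers would have a midpoint in $K$ at which $\log N$ is strictly larger, a contradiction.

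Stage (ii) is short. Pareto optimality: if $\vv' \ne \vv^*$ weakly dominated $\vv^*$ then, as each factor $v^*_i - c_i$ is strictly positive, $N(\vv') > N(\vv^*)$ --- impossible. Symmetry is clear because $N$ is symmetric in its coordinates. Invariance under affine transformations holds because $v_i \mapsto a_i v_i + b_i$ with $a_i > 0$ sends $c_i \mapsto a_i c_i + b_i$ and scales each factor by $a_i$, hence scales $N$ by the positive constant $\prod_i a_i$, so the maximizer transforms correspondingly. Independence of irrelevant alternatives holds because if $\CS \subseteq \CN$ still contains $\vv^*$ and $c$, then $\vv^*$ a fortiori maximizes $N$ over $\CS$.

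Stage (iii) is the crux. Let $\phi$ be any rule satisfying the axioms, fix a feasible $(\CN, c)$, and let $\vv^*$ be its Nash maximizer. By invariance (axiom 3) applied with $a_i = 1/(v^*_i - c_i)$ and $b_i = -c_i/(v^*_i - c_i)$, we may pass to the normalized instance in which $c = \mathbf 0$ and $\vv^* = \mathbf 1 := (1,\dots,1)$, and it suffices to show $\phi(\CN,\mathbf 0) = \mathbf 1$ there. The key claim is $\CN \subseteq T := \{\vv \in \R_+^n : \sum_{i} v_i \le n\}$: otherwise some $\ww \in \CN$ has $\sum_i w_i > n$, and by convexity the segment $t \mapsto (1-t)\mathbf 1 + t\ww$ lies in $\CN$, along which $\tfrac{d}{dt}\big|_{t=0} N = \sum_i (w_i - 1) = \big(\sum_i w_i\big) - n > 0$, so $N$ exceeds $N(\mathbf 1)=1$ for small $t>0$, contradicting maximality of $\mathbf 1$. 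Now $T$ is compact and convex, contains $\mathbf 0$, and contains $\CN$; by symmetry $\phi(T,\mathbf 0)$ is permutation-invariant, hence equals $(\lambda,\dots,\lambda)$, and Pareto optimality forces $\lambda = 1$, so $\phi(T,\mathbf 0) = \mathbf 1$. Since $\mathbf 0 \in \CN$, $\mathbf 1 = \vv^* \in \CN$, and $\mathbf 1 \in \CN \subseteq T$, independence of irrelevant alternatives (axiom 4, applied with the larger set $T$ and the smaller set $\CN$) gives $\phi(\CN,\mathbf 0) = \mathbf 1$, and undoing the normalization yields $\phi(\CN,c) = \vv^*$. The main obstacle is exactly this last stage: choosing the normalization so that the symmetric \enquote{enveloping} polytope $T$ is available and then invoking IIA in the correct direction (shrinking from $T$ to $\CN$, not the reverse); the halfspace claim $\CN \subseteq T$ is where convexity of $\CN$ and first-order optimality of the Nash point are both genuinely used, while everything else is routine bookkeeping.
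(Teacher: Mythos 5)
Your argument is correct, and since the paper simply quotes this result from Nash without proving it, there is nothing to diverge from: what you give is precisely Nash's classical proof --- compactness of $\{\vv \in \CN : \vv \ge c\}$ plus strict concavity of $\sum_i \log(v_i - c_i)$ for existence and uniqueness of the maximizer, direct verification of the four axioms, and the normalization to $c = \mathbf{0}$, $\vv^* = \mathbf{1}$ followed by the enveloping simplex $T$, symmetry, Pareto optimality, and IIA for the converse. The only cosmetic caveats are that the maximization should indeed be read over $\{\vv \in \CN : \vv \ge c\}$ (as you do, since the product can misbehave when coordinates drop below $c$) and that the affinely normalized image of $\CN$ need not stay inside $\R_+^n$ under the paper's convention $\CN \subseteq \R_+^n$ --- both standard, harmless points that do not affect the argument.
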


Nash's solution to his bargaining problem involves maximizing a concave function over
a convex domain, and is therefore the optimal solution to the following convex program.

	\begin{maxi}
		{} {\sum_{i \in A}  {\log (v_i - c_i)}}
			{\label{eq.CP-Nash}}
		{}
		\addConstraint{}{\vv \in \CN}
	\end{maxi}

As a consequence, if for a specific game, a separation oracle can be implemented
in polynomial time, then using the ellipsoid algorithm one can get as good an approximation to
the solution of this convex program as desired in time polynomial in the number of bits
of accuracy needed \cite{GLS}. In this paper, we do better by showing that most of our programs  are rational convex programs and hence we can get exact solutions in polynomial time.

\subsection{Rational Convex Program}\label{sec.RCP1}
The notion of a rational convex program (RCP) was motivated by the remarkable program given by Eisenberg and Gale \cite{eisenberg}. Its optimal solution gives an equilibrium allocation to the linear Fisher market and its dual gives equilibrium prices. If all parameters of the linear Fisher market are rational numbers, then there is always an equilibrium consisting of rational numbers; this obviously carries over to the Eisenberg-Gale convex program as well. The surprising aspect is that this happens despite the fact that its objective function is non-linear, consisting of logarithms! 

\begin{definition}
	\label{def.RCP} 
A nonlinear convex program is said to be a {\em rational convex program (RCP)} if for any setting of its parameters to rational numbers such that there is a finite optimal solution, it admits an optimal solution consisting of rational numbers. Moreover, the solution can be written using polynomially many bits in the number of bits needed to write all the parameters. 
\end{definition}

The significance of this notion lies in that the exact optimal solution to such a program can be found in polynomial time using the ellipsoid algorithm and simultaneous Diophantine approximation \cite{GLS, Jain-AD}. In contrast, an arbitrary nonlinear convex program admits an ellipsoid-based approximation algorithm whose running time is polynomial in the size of the input and $\log{1/\epsilon}$, where $\epsilon$ is the precision desired \cite{GaleS, Vishnoi.book}.

Since this notion is not well known, it will be useful to draw an analogy with an {\em integral linear program (ILP)}, which is an LP that always has integral optimal solutions. Some of the central problems of the field of combinatorial optimization, such as matching, flow and minimum spanning tree, share the feature that they possess LP-relaxation. A feasible LP with rational parameters always has a rational optimal solution. Hence the solution produced by an ILP is qualitatively better in than that it is integral. Needless to say this distinction is of central importance, since it enables the use of an LP-solver to solve these combinatorial optimization problems, whose solutions are necessarily integral. Analogously, the solution to an RCP is qualitatively better than that of an arbitrary nonlinear convex program, since the latter will typically have only irrational solutions.

\section{Rational Convex Programs}
\label{sec.RCP-App}

In Appendix~\ref{sec.RCP-1LAD}, we prove Theorem~\ref{thm.RCP-1} to show that \eqref{eq.CP-LAD-copy} is an RCP for the model \emph{1DLAD}. In Appendix~\ref{asec.rcp}, we present an RCP for the HZ scheme under dichotomous utilities. 
\subsection{Proof of Theorem~\ref{thm.RCP-1}}\label{sec.RCP-1LAD}

\begin{proof}
Let $(x, p, q)$ be an optimal solution to program \eqref{eq.CP-LAD-copy}. We will say that $p_j$ is the {\em price} of good $j$. Assume that $x_{ij} > 0$. There are two cases:

{\bf Case (a), $u_{ij} = 0$:} By KKT Condition (6), 
\[ u_{ij} = (v_i - c_i) (p_j + q_i) = 0 .\]
Since $v_i - c_i > 0$, we get that $p_j = q_i = 0$.

{\bf Case (b), $u_{ij} = 1$:} By KKT Conditions (5) and (6), among goods for which $i$ has utility 1, $p_j$ must be smallest. Furthermore, by KKT Condition (6), 
\[ u_{ij} = (v_i - c_i) (p_j + q_i) = 1 .\]

Therefore, 
\[ \sum_{j \in G} {u_{ij} x_{ij}} = (v_i - c_i) \sum_{j \in G} {(p_j + q_i) x_{ij}} .    \]
If $q_i > 0$, by KKT Condition (4), $\sum_{j \in G} {x_{ij}} = 1$, giving $\sum_{j \in G} {q_i  x_{ij}} = q_i$. If $q_i = 0$, we still get that $\sum_{j \in G} {q_i  x_{ij}} = 0 = q_i$. Therefore we have:
\[ v_i = \sum_{j \in G} {u_{ij} x_{ij}} =  (v_i - c_i) (\sum_{j \in G} {p_j x_{ij}} + q_i) .    \]

Hence,
\[ \sum_{j \in G} {p_{j} x_{ij}} =  {v_i \over {(v_i - c_i)}} - q_i  .    \]

If for agent $i$, $q_i > 0$, then by KKT Condition (4), $\sum_{j \in G} {x_{ij}} = 1$, i.e., $i$'s allocation consists of one unit of goods, of minimum cost, for which $i$ has utility 1. Assign one full unit of such a good, say $j$, to $i$ and remove $i$ and $j$ from consideration.

Therefore, in the remaining problem, for every agent $i$, $q_i = 0$. Therefore we have:
\[ \sum_{j \in G} {p_{j} x_{ij}}  =  {v_i \over {(v_i - c_i)}} = 1 +  {c_i \over {(v_i - c_i)}} .   \]

Suppose $x_{ij} > 0$ and $u_{ij} = 1$. Then we have
\[ u_{ij} = 1 = (v_i - c_i) p_j \ \implies \ p_j = {1 \over {(v_i - c_i)}} .\]

Let $G' \subseteq G$ be the set of goods whose price is positive, i.e., $G' = \{j \in G \ | \ p_j > 0 \}$. Let $A' \subseteq A$ be the agents who are allocated goods in $G'$. Consider a bipartite graph on vertex sets $G'$ and $A'$ with an edge $(i, j)$ for $i \in A', \ j \in G'$ if $x_{ij} = 1$. Let $C$ be a connected component in this graph and let the set of goods and agents in $C$ be $G_C'$ and $A_C'$, respectively. Since agents buy only cheapest goods for which they have utility 1, all prices of goods in $G_C'$ are equal, say they are $p_C$. The total money of agents in $A_C'$ is
\[ \sum_{i \in A_C'} \sum_{j \in G_C'} {p_j x_{ij}} =  \sum_{i \in A_C'} {1 + {c_i \over {(v_i - c_i)}}} = |A_C'| + c(A_C') p_C , \]   
where $c(A_C')$ is the sum of disagreement utilities of all agents in $A_C'$. i.e., $c(A_C') = \sum_{i \in A_C'} {c_i}$. The latter is a rational number by assumption. 

On the other hand, 
\[ \sum_{j \in G_C'} \sum_{i \in A_C'} {p_j x_{ij}} =  \sum_{j \in G_C'} {p_j} = |G_C'| p_C .  \]

Equating the two we get
 $|G_C'| p_C = |A_C'| + c(A_C') p_C .$ 
Solving this, we get that $p_C$ is a rational number. Using the equation 
 $p_j = {1 \over {(v_i - c_i)}}$ 
we get that $v_i$ is a rational number for each $i \in G_C'$. Since $v_i$ is the total allocation of utility 1 goods to $i$, we conclude that the entire allocation is rational.

Finally let $G_0$ and $A_0$ be the set of goods which are not fully allocated and agents who received less than a full unit of allocations. For $i \in A_0, \ j \in G_0$, $u_{ij} = 0$, since otherwise making such an allocation, the objective function value of \eqref{eq.CP-LAD-copy} can be increased. Make rational allocations from $G_0$ to $A_0$ so that each agent gets a full unit of allocation. This is the solution of 1DLAD under dichotomous utilities. 
\end{proof}

Since the objective function in \eqref{eq.CP-LAD-copy} is strictly concave, the utility derived by each agent $i$ must be the same in all solutions of this convex program. Hence, we get the following corollary. It can be seen as a variant of the well-known \emph{Rural Hospital Theorem}; see \cite{GusfieldI} for the latter. 

\begin{corollary}
\label{cor.concave}
Each agent, $i$, gets the same utility, $v_i$, under all optimal solutions to \eqref{eq.CP-LAD-copy} under dichotomous utilities.
\end{corollary}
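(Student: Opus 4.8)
The plan is to exploit strict concavity of the objective $\sum_{i\in A}\log(v_i-c_i)$ together with convexity of the feasible region, so that uniqueness of the optimal utility vector is forced even though the optimal allocation $x$ itself need not be unique. First I would record the structural facts: all constraints of \eqref{eq.CP-LAD-copy} (the defining equalities $v_i=\sum_j u_{ij}x_{ij}$, the matching inequalities $\sum_j x_{ij}\le 1$ and $\sum_i x_{ij}\le 1$, and $x_{ij}\ge 0$) are linear, so the feasible set is a polytope and the set of achievable utility vectors is convex, being the image of that polytope under the linear map $x\mapsto v_x$. By feasibility of the instance the program has a finite optimum $\OPT$ attained on a nonempty set of optimal solutions, and every optimal solution satisfies $v_i-c_i>0$ for all $i$, as otherwise the objective is $-\infty$.

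Then I would argue by contradiction: suppose $x^{(1)}$ and $x^{(2)}$ are both optimal but with $v^{(1)}_i\neq v^{(2)}_i$ for some agent $i$. The midpoint $x\coloneqq\tfrac12\bigl(x^{(1)}+x^{(2)}\bigr)$ is feasible by convexity, with utility vector $v\coloneqq\tfrac12\bigl(v^{(1)}+v^{(2)}\bigr)$ satisfying $v_i-c_i>0$ for every $i$. Applying the elementary inequality $\log\tfrac{a+b}{2}\ge\tfrac12(\log a+\log b)$ coordinatewise — which is strict whenever $a\neq b$ — and summing over agents (the distinguishing agent $i$ contributing a strict inequality, the others at least equality) gives
\[
  \sum_{i\in A}\log(v_i-c_i) \;>\; \tfrac12\sum_{i\in A}\log\bigl(v^{(1)}_i-c_i\bigr)+\tfrac12\sum_{i\in A}\log\bigl(v^{(2)}_i-c_i\bigr)\;=\;\OPT,
\]
contradicting optimality; hence all optimal solutions share the same utility vector $v$. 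Equivalently, program \eqref{eq.CP-Nash} maximizes a strictly concave function over the convex set $\CN$, so it has a unique maximizer, which is precisely the assertion.

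I do not expect a real obstacle here; the only points that deserve explicit care are that the feasible region is genuinely convex — which needs the defining equalities $v_i=\sum_j u_{ij}x_{ij}$ to be linear, as they are — and that $v_i-c_i$ remains strictly positive along the segment joining the two candidate optima, so that the logarithm and the strict-concavity inequality are legitimately applied there; both follow immediately from feasibility of the instance. The analogy with the Rural Hospital Theorem is then just the observation that this uniqueness pins down each agent's Nash-bargaining utility independently of which optimal allocation happens to be selected.
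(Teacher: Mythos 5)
Your argument is correct and is exactly the paper's reasoning: the paper's proof is the one-line observation that strict concavity of $\sum_i\log(v_i-c_i)$ forces the utility vector to agree across all optimal solutions, and your proposal simply spells out the standard midpoint/strict-concavity contradiction (with the routine checks that the feasible region is convex, the utility map is linear, and $v_i-c_i>0$ at any finite-objective optimum). No gap; same approach, just in more detail.
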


\subsection{An RCP for the HZ Scheme under Dichotomous Utilities}\label{asec.rcp}

We will assume without loss of generality that each agent $i\in A$ likes some good $j\in G$, i.e.\ $u_{ij} = 1$. We will show that program (\ref{cp}) given below is the required RCP. Observe that it is an extension of the Eisenberg-Gale convex program~\cite{eisenberg} via the second constraint, i.e., the amount of goods allocated to each agent is at most 1. 

\begin{equation}\label{cp}
\begin{array}{ccc}
     & \max & \sum_{i\in A} \log{\sum_{j\in G} u_{ij}x_{ij}} \\
\mbox{subject to}  & \forall j\in G:   & \sum_{i\in A} x_{ij} \le 1 \\
& \forall i\in A: & \sum_{j\in G} x_{ij} \le 1 \\
& \forall i\in A,j\in G: & x_{ij} \ge 0
\end{array}\enspace .
\end{equation}

Let $p_j$'s and $\alpha_i$'s denote the non-negative dual variables for the first and second  constraints, respectively. 

\begin{theorem}
\label{thm.dichotomous}
 Any HZ equilibrium is an optimal solution to (\ref{cp}), and every optimal solution of~\eqref{cp} can be trivially extended to an HZ equilibrium. Furthermore, the latter can be expressed via rational numbers whose denominators have polynomial, in $n$, number of bits, thereby showing that \eqref{cp} is a rational convex program.
\end{theorem}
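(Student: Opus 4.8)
The plan is to establish a tight correspondence between HZ equilibria of the dichotomous market and KKT points of the convex program \eqref{cp}, then exploit the combinatorial structure of the dual to derive rationality. First I would write down the KKT conditions for \eqref{cp}: with dual variables $p_j \geq 0$ for the supply constraints and $\alpha_i \geq 0$ for the unit-demand constraints, stationarity gives $p_j + \alpha_i \geq u_{ij}/v_i$ for all $i,j$ (where $v_i = \sum_j u_{ij}x_{ij}$), with equality whenever $x_{ij} > 0$; complementary slackness gives $p_j > 0 \Rightarrow \sum_i x_{ij} = 1$ and $\alpha_i > 0 \Rightarrow \sum_j x_{ij} = 1$. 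Since the objective is concave (and strictly concave in the $v_i$), these conditions are necessary and sufficient for optimality, and the $v_i$ are the same across all optima. The key observation is that for dichotomous utilities, an optimal $x$ allocates agent $i$ only to goods $j$ with $u_{ij} = 1$ of \emph{minimum} price among $i$'s liked goods, and the quantity purchased is governed by $\alpha_i$. Setting the HZ budget of agent $i$ to $b_i := v_i \cdot (p_{\min(i)} + \alpha_i) / u$-normalization — more precisely, reading off from the KKT equalities that $\sum_j p_j x_{ij} = 1 - \alpha_i \cdot (\text{something})$ — one checks directly that $(x,p)$ with these budgets is an HZ equilibrium (each agent buys a cheapest utility-maximizing bundle; market clears by complementary slackness). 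Conversely, from an HZ equilibrium one recovers $p$ and sets $\alpha_i$ to close the stationarity gap, verifying the KKT conditions; this is the "trivially extended" direction.

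For the rationality claim I would mirror the argument used in the proof of Theorem~\ref{thm.RCP-1} (Appendix~\ref{sec.RCP-1LAD}). Partition the goods into $G' = \{j : p_j > 0\}$ and its complement, and let $A'$ be the agents allocated into $G'$. Form the bipartite graph on $A' \cup G'$ with edges $\{(i,j) : x_{ij} > 0\}$; within each connected component $C$, all liked goods an agent buys have a common price $p_C$ (agents buy only cheapest liked goods), so the money-conservation equation for $C$ reads $|G_C'|\, p_C = \sum_{i \in A_C'} (\text{budget}_i)$, which, after substituting the closed form of the budgets in terms of $p_C$ and the integer cardinalities $|A_C'|, |G_C'|$, becomes a \emph{linear} equation in $p_C$ with integer (hence rational) coefficients. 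Solving yields $p_C \in \mathbb{Q}$ with denominator bounded by $\max(|A_C'|,|G_C'|) \leq n$, hence polynomially many bits. Then $v_i = 1/p_j$ (or the appropriate reciprocal from the stationarity equality) for $i \in A_C'$ is rational with a poly-size denominator, and since $v_i$ is the total liked-good allocation to $i$, the allocation restricted to $G'$ is rational. Finally, the leftover agents and unsold zero-price goods can be matched fractionally with any rational fractions summing to one (here $u_{ij} = 0$ on these edges, so this does not change utilities — exactly as in the 1DLAD proof), completing a fully rational HZ equilibrium.

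The main obstacle I anticipate is pinning down the exact relationship between the dual variables $(p, \alpha)$ of \eqref{cp} and the HZ budget vector $b$: the HZ definition fixes $\sum_j y_j = 1$ as an equality constraint, whereas \eqref{cp} has it as an inequality with multiplier $\alpha_i$, so the "budget" each agent effectively has is not simply $1$ but a price-dependent quantity, and I need to check that the cheapest-optimal-bundle clause of the HZ equilibrium definition is genuinely satisfied (not just the utility-maximizing clause). Concretely, one must verify that an agent with $\alpha_i = 0$ who is not fully allocated among liked goods still spends her entire budget — this is what forces the "cheapest" condition and is the analogue of the subtlety flagged in the footnote after the HZ equilibrium definition. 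Once that bookkeeping is done cleanly (it essentially falls out of KKT condition $\alpha_i > 0 \Rightarrow \sum_j x_{ij} = 1$ together with stationarity), the rest is routine. A secondary, purely expository point is to note that \eqref{cp} is a genuine convex program — the objective $\sum_i \log v_i$ is concave in $x$ — so that KKT is both necessary and sufficient and Slater's condition holds (each agent likes some good, so a strictly feasible point with all $v_i > 0$ exists).
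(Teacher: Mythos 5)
Your equivalence argument (KKT conditions of \eqref{cp} $\leftrightarrow$ HZ equilibrium) follows the paper's route, and the budget ``obstacle'' you flag resolves exactly as you suspect: budgets stay at $b_i=1$, and multiplying KKT stationarity by $x_{ij}$ and summing gives the clean identity $\sum_j p_j x_{ij} = 1-\alpha_i$ (not $1-\alpha_i\cdot(\text{something})$), which yields budget feasibility and, when $\alpha_i=0$, the full spending that forces utility maximization; the leftover agents/goods have zero prices and zero utilities and are matched arbitrarily, which is the ``trivially extended'' step.

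The genuine gap is in your rationality argument. You run the connected-component money-conservation argument over \emph{all} positively priced goods $G'=\{j: p_j>0\}$, but the equation $|G_C'|\,p_C=\sum_{i\in A_C'}(\text{spending}_i)$ only determines $p_C$ when the spending of each agent in the component is the fixed amount $1$, i.e.\ when $\alpha_i=0$. That is forced only when $p_C>1$. For a component whose price lies in $(0,1)$, each agent there has $\alpha_i>0$ (since she buys only liked goods at price $p_C<1$ and at most one unit, she spends less than $1$, so $\alpha_i=1-p_C>0$), hence buys a full unit of liked goods and spends exactly $p_C$; your equation degenerates to $|G_C'|\,p_C=|A_C'|\,p_C$ and pins down nothing, and indeed the optimal dual price there need not be unique. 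Moreover your identity $v_i=1/p_j$ is false on these components: stationarity gives $v_i=1/(p_j+\alpha_i)=1$, not $1/p_j$. So neither the rationality of the prices nor of the allocation follows from your argument for goods priced in $(0,1]$. The paper avoids this by splitting into three cases: goods priced strictly above $1$ (where $\alpha_i=0$, each agent spends exactly one dollar, so $p_C=|A_C'|/|G_C'|$ is rational and each agent in $C$ is allocated $1/p_C$ units rationally), goods priced exactly $1$ (full-unit integral assignments), and goods priced in $(0,1)$, where the restricted allocation is a fractional perfect matching on those agents and goods and is simply replaced by an integral perfect matching. You need this case analysis (or an equivalent device, e.g.\ first removing all agents with $\alpha_i>0$, who receive a full unit of liked goods, in analogy with the removal of agents with $q_i>0$ in the 1DLAD proof) to complete the rationality claim.
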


\begin{proof}
Let $u_i := \sum_{j\in G} u_{ij}x_{ij}$. Clearly, in any HZ equilibrium, since each agent $i$ is allocated an optimal bundle of goods, she will be allocated a non-zero amount of a unit-utility good and hence will satisfy $u_i > 0$. Furthermore, in an optimal solution $x$ of (\ref{cp}), every agent must have positive utility, because otherwise the objective function value will be $-\infty$. Therefore, $\forall i \in A: \ u_i > 0$. 

The KKT conditions of this program are:

\begin{enumerate}
    \setlength\itemsep{0em}
\item  $\forall i\in A:$  $\alpha_i \geq 0$. 
\item  $\forall j\in G:$  $p_j \geq 0$.
\item $\forall i\in A:$ If $\alpha_i > 0$ then $\sum_j x_{ij} = 1$. 
\item $\forall j\in G:$ If $p_j > 0$ then $\sum_i x_{ij} = 1$. 
\item 
$\forall i\in A, j\in G:  u_{ij} \le u_i (p_j + \alpha_i)$.
\item
$\forall i\in A, j\in G: \ \ x_{ij} > 0  \Rightarrow  u_{ij} = u_i (p_j + \alpha_i).$

\end{enumerate}

To prove the forward direction of the first statement, let $(x, p)$ be an HZ equilibrium. Since $x$ is a fractional perfect matching on agents and goods, it satisfies the constraints of (\ref{cp}) and is hence a feasible solution for it. We are left with proving optimality.

The KKT conditions 2, 3 and 4 are clearly satisfied by $(x, p)$. Next, consider agent $i$. If there is a good $j$ such that $p_j \leq 1$ and $u_{ij} = 1$, then $i$ will be allocated one unit of the cheapest such goods. Assume the price of the latter is $p$. Define $\alpha_i = 1- p$. Clearly $u_i = 1$. Now, it is easy to check that Conditions 1, 5 and 6 are also holding. 

Next assume that every good $j$ such that $u_{ij} = 1$ has $p_j > 1$ and let $p$ be the cheapest such price. Clearly, $i$'s optimal bundle will contain $1/p$ amount of these goods, giving her total utility $1/p$. Since the equilibrium always has a zero-priced good, that good, say $j$, must have $u_{ij} = 0$. Now, $i$ must be buying such zero-utility zero-priced goods to get to one unit of goods. We will define $\alpha_i = 0$. Again, it is easy to check that Conditions 1, 5 and 6 are holding. Hence we get that $(x, p)$ is an optimal solution to (\ref{cp}).

Next, we prove the reverse direction of the first statement. Let $(x, p)$ be an optimal solution to (\ref{cp}). Assume that agent $i$ is allocated good $j$, i.e.\ $x_{ij}>0$. We consider the following two cases:

\begin{itemize}
    \item[$(a)$] $u_{ij} = 0$. Using Condition 6 and $u_i > 0$, we get that $p_j = \alpha_i = 0$. 
    \item[$(b)$] $u_{ij} = 1$. Using Conditions 5 and 6 and $u_i > 0$, we get that the price of good $j$ is the cheapest among all goods for which $i$'s utility is 1.
\end{itemize}

For each agent $i$, multiply the equality in Condition 6 by $x_{ij}$ and sum over all $j$ to get:
\[\sum_j {x_{ij} u_{ij}} \ = \ u_i \sum_j x_{ij}(p_j + \alpha_i)\]

After canceling $u_i$ from both sides we obtain
\[ \ \sum_j x_{ij}(p_j + \alpha_i) = 1 = \sum_j {x_{ij} p_j}  + \alpha_i \sum_j {x_{ij}} .\]
Now, if $\alpha_i > 0$, then $\sum_j {x_{ij}} = 1$ and if $\alpha_i = 0$, then $\alpha_i \sum_j {x_{ij}} = 0 = \alpha_i$. Therefore, in both cases $\alpha_i \sum_j {x_{ij}} = \alpha_i$.
Hence, 
\begin{equation}
	\label{eq.alpha}
\sum_j x_{ij}p_j = 1 -\alpha_i.
\end{equation}

We will view the dual variables $p$ of the optimal solution $(x, p)$ as prices of goods. The above statement then implies that agent $i$'s bundle costs $1 - \alpha_i$.

Let $S$ denote the set of agents who get less than one unit of goods, i.e.\ $S:=\{i \in A\ |\ \sum_j x_{ij } < 1\}$, and let $T$ denote the set of partially allocated goods, i.e.\ $T:=\{j\in G\ |\ \sum_i x_{i j} < 1\}$. By Condition 4,  $p_j = 0$ for each $j\in T$. Observe that if for $i \in S$ and $j \in T$, $u_{ij} = 1$, then by allocating a positive amount of good $j$ to $i$, the objective function value of program (\ref{cp}) strictly increases, giving a contradiction. Therefore, $u_{ij} = 0$. 

Since the number of agents equals the number of goods, the total deficiency of agents in solution $x$ equals the total amount of unallocated goods. Therefore, we can arbitrarily allocate unallocated goods in $T$ to deficient agents in $S$ so as to obtain a fractional perfect matching, say $x'$. Clearly, $(x', p)$ is still an optimal solution to \eqref{cp} and is also an HZ equilibrium.

For the second statement, we will start with this solution $(x', p)$. Let $G'\subseteq G$ denote the set of goods with prices bigger than 1, i.e.\ $G'=\{j\in G\ |\ p_j > 1\}$ and let $A' \subseteq A$ be the set of agents who have allocations from $G'$. By Cases $(a)$ and $(b)$, for each $i \in A'$, there is a $j \in G'$ such that $u_{ij} = 1$; moreover this is the cheapest good for which $i$ has utility 1. We first show that each agent $i \in A'$ satisfies $\alpha_i = 0$. If $\sum_{j \in G} {x_{ij}} < 1$, this follows from KKT Condition 3. Otherwise, there exists $j \in G$ such that $x_{ij} > 0$ and $u_{ij} = 0$. The last statement follow from the fact that $\sum_j x_{ij}p_j \leq 1$, which follows from (\ref{eq.alpha}). Again, by Case $(a)$, $\alpha_i = 0$. Now, by (\ref{eq.alpha}), the money spent by each agent in $A'$ is exactly 1 dollar on goods in $G'$.  

Consider the connected components of bipartite graph $(A', G', E)$, where the set $E = \{(i,j) \in (A', G')\ |\ x_{ij} > 0\}$. Cases $(a)$ and $(b)$ imply that all goods in a connected component $C$ must have the same price, say $p_C$. Clearly, the sum of prices of all goods in $C$ equals the total money of agents in $C$; the latter is simply the number of agents in $C$. This implies that $p_C$ is rational. Clearly, there is a rational allocation of $1/p_C$ amount of goods to every agent in $C$.

Let $i \in A$ such that the cheapest good for which $i$ has utility 1 has price 1. If $\alpha_i = 0$, by (\ref{eq.alpha}), $i$ buys 1 dollar, and hence 1 unit, of such goods. If $\alpha > 0$, by KKT Condition 3, $\sum_{j \in G} {x_{ij}} = 1$ and therefore again $i$ has bought 1 unit of such goods. Now, without loss of generality, we will assign to $i$ an entire unit of one such good.

Finally, let $G'' \subseteq G$ denote the set of goods with prices in the interval $(0, 1)$, i.e.\ $G''=\{j\in G\ |\ 0 < p_j < 1\}$ and let $A'' \subseteq A$ be the set of agents who have allocations from $G''$. Let $i \in A''$. Since $\sum_j x_{ij}p_j < 1$, by (\ref{eq.alpha}) $\alpha > 0$. Therefore each agent in $A''$ buys one unit of goods from $G''$. Hence the allocation of goods from $G''$ to $A''$ forms a fractional perfect matching on $(G'', A'')$. Therefore, we can pick any perfect matching consistent with this fractional perfect matching and allocate goods from $G''$ integrally to $A''$. 

Hence in all cases, the allocation consists of rational numbers, completing the proof. 
\end{proof}

\begin{remark}
	\label{rem.dual}
	The proof of Theorem \ref{thm.dichotomous} shows that for the dichotomous case, the dual of (\ref{cp}) yields equilibrium prices. In contrast, for arbitrary utilities, there is no known mathematical construct, no matter how inefficient its computation, that yields equilibrium prices. In a sense, this should not be surprising, since there is a polynomial time algorithm for computing an equilibrium for the dichotomous case \cite{VY-HZ}.  
\end{remark}

Since the objective function in~\eqref{cp} is strictly concave, the utility derived by each agent $i$ must be the same in all solutions of~\eqref{cp}. Hence, we get the following corollary which can be seen as a variant of the well-known \emph{Rural Hospital Theorem}; see \cite{GusfieldI} for the latter. 

\begin{corollary}
Each agent gets the same utility under all HZ equilibria with dichotomous utilities.
\end{corollary}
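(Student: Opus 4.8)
The plan is to deduce this directly from Theorem~\ref{thm.dichotomous} together with strict concavity of the objective of~\eqref{cp}. First, by Theorem~\ref{thm.dichotomous}, every HZ equilibrium $(x,p)$ (with dichotomous utilities) is an optimal solution of the convex program~\eqref{cp}; in particular, any two HZ equilibria correspond to two optimal allocations $x$ and $x'$ of~\eqref{cp}, and these achieve the same optimal value of the objective $\sum_{i\in A}\log\left(\sum_{j\in G}u_{ij}x_{ij}\right)$.

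Next I would record two elementary facts. The feasible region of~\eqref{cp} is convex (it is cut out by linear inequalities), so the allocation $x'' \coloneqq \tfrac12 x + \tfrac12 x'$ is feasible. Moreover, for each agent $i$ the map $x \mapsto u_i(x) \coloneqq \sum_{j\in G}u_{ij}x_{ij}$ is linear, hence $u_i(x'') = \tfrac12 u_i(x) + \tfrac12 u_i(x')$; and, as noted in the theorem, every agent has $u_i > 0$ at any optimal solution. Now suppose, for contradiction, that the two equilibria gave different utility vectors, i.e.\ $u_k(x) \neq u_k(x')$ for some agent $k$. Since $t \mapsto \log t$ is strictly concave on $(0,\infty)$, we get $\log u_k(x'') > \tfrac12\log u_k(x) + \tfrac12\log u_k(x')$, while for every other agent $i$ we at least have $\log u_i(x'') \ge \tfrac12\log u_i(x) + \tfrac12\log u_i(x')$ by concavity. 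Summing over all agents yields a feasible point $x''$ whose objective value strictly exceeds the common optimal value of $x$ and $x'$, contradicting optimality. Hence $u_i(x) = u_i(x')$ for all $i$, which is exactly the claim.

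I do not expect any real obstacle here: the argument is the standard ``strictly concave objective over a convex feasible set has a unique optimal value in each coordinate'' observation, and it is the same argument used just above for Corollary~\ref{cor.concave}. The only point worth a moment's care is that the strict concavity lives in the utility variables $u_i$ rather than in the allocation variables $x_{ij}$ themselves (the objective is \emph{not} strictly concave as a function of $x$, since many allocations can give the same utility profile); this is harmless because the linear map $x \mapsto (u_i(x))_{i\in A}$ sends convex combinations to convex combinations, so a discrepancy in some coordinate $u_k$ propagates to a strict gain in the objective as above. One should also note the positivity $u_i > 0$ at optimal solutions so that the logarithms are finite and strict concavity applies on the relevant interval. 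This gives the corollary; it can be viewed as a variant of the Rural Hospital Theorem, as stated.
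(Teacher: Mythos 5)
Your proof is correct and follows essentially the same route as the paper: every HZ equilibrium is an optimal solution of \eqref{cp} by Theorem~\ref{thm.dichotomous}, and strict concavity of the logarithm forces all optimal solutions to share the same utility profile. Your remark that the strict concavity acts in the utility coordinates rather than in the allocation variables is a correct (and slightly more careful) reading of the paper's terse "the objective is strictly concave" justification.
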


\section{Strategyproofness for Mechanism \ref{alg.ADNB}}\label{sec.DSIC-ADNB}

For $i' \neq i$, let $u_{i'j} \in \{0, 1\}$, for $j \in G$, be the dichotomous utility function reported by agent $i'$, and let $u_{ij} \in \{0, 1\}$, for $j \in G$, be $i$'s true dichotomous utility function. Assume that $i$ misreports her utility function as $u'_{ij} \in \{0, 1\}$, for $j \in G$. In this section, we show that Mechanism~\ref{alg.ADNB} is strategyproof under the assumption that the disagreement utilities $c_i$'s are public knowledge. We begin with the following lemma. 

\begin{lemma}
	\label{lem.subset-MMNB}
	Consider the subsets $S_1, \ldots , S_k$ of $G_1$ in graph $H[A, G]$ which successively go tight when the prices of goods in them are $\theta_1 < \theta_2 < \ldots < \theta_k$, respectively. Let $S \subseteq S_l$ and $T \subseteq (S_{l+1} \cup \ldots \cup S_k)$. Then, 
	\begin{enumerate}
		\item  ${{|N(S)|} \over {|S| - c(N(S))}} \geq \theta_l$. 
		\item ${{|N(T)|} \over {|T| - c(N(T))}} \geq \theta_{l+1}$. 
	\end{enumerate}
\end{lemma}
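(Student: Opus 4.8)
The plan is to mimic the argument used in the proof of Lemma~\ref{lem.theta}, exploiting the fact that each $S_l$ is the \emph{maximal} tight set at threshold $\theta_l$ in the graph $H[A_1,G_1]$ with the sets $S_1,\dots,S_{l-1}$ (and their neighborhoods) already removed. Recall that a set $S$ going tight at price $\theta$ means $p(S) = m(N(S))$, i.e.\ $|S|\theta = |N(S)| + c(N(S))\theta$, which rearranges to $\theta = |N(S)|/(|S| - c(N(S)))$; and a set that is \emph{not yet} tight at price $\theta$ satisfies $p(S) < m(N(S))$, i.e.\ $|S|\theta < |N(S)| + c(N(S))\theta$, i.e.\ $\theta < |N(S)|/(|S|-c(N(S)))$. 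So both parts amount to showing that the relevant subset is not tight strictly before the indicated threshold.

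\textbf{Part 1.} Let $S \subseteq S_l$. In the iteration of FLOW in which $S_l$ goes tight, the current price starts at $\theta_{l-1}$ (or $1$ if $l=1$) and is raised to $\theta_l$; throughout the open interval below $\theta_l$, \emph{no} subset of the remaining goods is tight, and in particular $S$ is not tight, so $p(S) < m(N(S)\cap(\text{remaining agents}))$ for all prices $\theta<\theta_l$. Taking $\theta \to \theta_l$ and noting $N(S)\cap(\text{remaining agents}) \subseteq N(S)$ (the neighborhood only shrinks when we restrict), we get $|S|\theta_l \le |N(S)| + c(N(S))\theta_l$, hence $\theta_l \le |N(S)|/(|S|-c(N(S)))$, provided the denominator is positive. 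Positivity of $|S|-c(N(S))$ follows from feasibility exactly as in the main text: agents in $N(S)$ collectively derive utility at most $|S|$ from goods in $S$, but I need $c(N(S)) < |S|$; since $N(S) \subseteq N(S_l) \subseteq A_1$ and feasibility gives $v_i > c_i$ for all $i$ with $\sum_{i\in A_1} v_i = |G_1|$, and more locally $\sum_{i \in N(S_l)} v_i = |S_l|$ with each agent in $N(S_l)$ receiving all utility from $S_l$, one argues $c(N(S)) \le \sum_{i\in N(S)} v_i \le |S|$ with strictness from $v_i>c_i$. (If $S=\emptyset$ the claim is vacuous.)

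\textbf{Part 2.} Let $T \subseteq S_{l+1}\cup\dots\cup S_k$. The key point is that goods in $S_{l+1},\dots,S_k$ are still present (not yet removed) when $S_{l+1}$ is about to go tight at $\theta_{l+1}$, and at every price $\theta < \theta_{l+1}$ in that phase no subset of the then-remaining goods is tight. Since $T$ is a subset of those remaining goods, $T$ is not tight at any $\theta<\theta_{l+1}$, giving $p(T) < m(N(T)\cap(\text{remaining agents}))$, and letting $\theta\to\theta_{l+1}$ and enlarging the neighborhood back to $N(T)$ yields $\theta_{l+1} \le |N(T)|/(|T| - c(N(T)))$, with denominator positivity again from feasibility. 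One subtlety to handle carefully: $T$ may straddle several of the sets $S_{l+1},\dots,S_k$, so I should phrase the non-tightness claim as ``no subset of the goods remaining at the start of the phase in which $S_{l+1}$ goes tight is tight before $\theta_{l+1}$,'' which covers $T$ since all of $S_{l+1}\cup\dots\cup S_k$ is among those remaining goods.

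\textbf{Main obstacle.} The delicate point is bookkeeping of which neighborhood to use: after removals, the ambient graph $H[A,G]$ in the lemma statement differs from the shrinking graph on which FLOW operates, and the neighborhood $N(\cdot)$ in the inequalities to be proved is taken in the full graph $H[A,G]$ (or in $H[A_1,G_1]$). Since restricting to a subgraph can only remove neighbors, $N_{\text{current}}(S) \subseteq N(S)$, so the ``not tight'' inequality $|S|\theta < |N_{\text{current}}(S)| + c(N_{\text{current}}(S))\theta$ does \emph{not} immediately give the desired bound with $N(S)$ — one must check that enlarging the neighborhood preserves the inequality direction, which it does because the right-hand side $|N| + c(N)\theta$ is monotone nondecreasing in $N$ (as $c(\cdot)\ge 0$). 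I would make this monotonicity explicit, and also double-check that extra neighbors in $N(S)\setminus N_{\text{current}}(S)$ are agents whose goods were removed, so that no double-counting or feasibility issue arises. Everything else is the routine limiting argument $\theta\to\theta_l^-$ combined with the feasibility inequality $c(N(\cdot)) < |\cdot|$ already established in the text.
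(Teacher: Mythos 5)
Your main argument is essentially the paper's own proof unfolded: the paper disposes of both parts in two sentences by contradiction --- if either bound failed, $S$ (resp.\ $T$) would go tight strictly before $S_l$ (resp.\ $S_{l+1}$), contradicting the stated sequence of tight sets --- and your direct rendering (no remaining subset is tight at any $\theta<\theta_l$, let $\theta\to\theta_l$, then pass from the shrunken neighborhood to $N(\cdot)$ using monotonicity of $|N|+c(N)\theta$ in $N$) is the same idea with the bookkeeping made explicit. That part is sound, and indeed more careful than the paper, which spells out neither the shrinking-graph neighborhoods nor the monotonicity step.

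However, your justification of the denominator's positivity is wrong. You claim $c(N(S)) \le \sum_{i\in N(S)} v_i \le |S|$, but the second inequality fails in general: agents in $N(S)$ draw their utility from their cheapest liked goods, which lie in all of $S_l$, not only in $S$, so $\sum_{i\in N(S)} v_i$ is bounded by $|S_l|$ rather than $|S|$. For example, with $S_l$ consisting of two goods, three agents liking both, and all $c_i=0$, a singleton $S$ has $\sum_{i\in N(S)} v_i = 2 > 1 = |S|$; in fact, after substituting $v_i=c_i+1/\theta_l$, the inequality $\sum_{i\in N(S)}v_i\le|S|$ is exactly the \emph{reverse} of the bound being proved. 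Worse, $|S|-c(N(S))>0$ can genuinely fail for an arbitrary $S\subseteq S_l$ (take a single good liked by several agents whose disagreement utilities sum to more than $1$, while the full tight set $S_l$ still satisfies $|S_l|>c(N(S_l))$), in which case the fraction form of the statement is ill-posed --- a defect the paper's own two-line proof glosses over as well. The clean repair is to prove the conclusion in the cross-multiplied form $|N(S)| \ge \theta_l\,\bigl(|S|-c(N(S))\bigr)$ and $|N(T)| \ge \theta_{l+1}\,\bigl(|T|-c(N(T))\bigr)$, which is exactly what your limiting argument delivers with no positivity assumption, and which is the form in which the lemma is actually applied later (see the proof of Lemma~\ref{lem.one-bigger}). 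With that change your proof goes through; the detour through the $v_i$'s should simply be dropped.
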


\begin{proof}
	The proof of both statements is by contradiction. If the first assertion were not holding, then $S$ would freeze before $S_l$ and if the second assertion were not holding, then $T$ would freeze before $S_{l+1}$. In either case, the set of successive tight sets would be different from the one claimed. 
\end{proof}

We show the following theorem where $x$ and $x'$ be the allocations computed by our mechanism in the $u$-run and $u'$-run, respectively. 
\begin{theorem}
	\label{thm.IC-MMNB}
	Agent $i$ does not accrue more utility under allocation $x'$ as compared to $x$, when evaluated by her true utility function, i.e., 
	\[ u_i(x) \geq u_i(x') .\]
\end{theorem}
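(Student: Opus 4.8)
The plan is to compare, for each of the two runs of Mechanism~\ref{alg.ADNB}, the \emph{price level} at which agent $i$'s purchases sit, and to exploit the fact that the reported profiles $u$ and $u'$ differ only in the edges of $H$ incident to $i$. First I would dispose of the easy case: if in the $u$-run agent $i$ is matched in Step~1 or belongs to $A_2$ in Step~2, then $v_i = 1$, and since every feasible allocation gives $i$ a total mass $1$ of goods each worth $0$ or $1$, we have $u_i(x') \le 1 = v_i$ and there is nothing to prove. So assume $i \in A_1$ in the $u$-run, and let $S_l$ be the tight set in which $i$'s cheapest liked good freezes, at price $\theta_l \ge 1$, so that $v_i = c_i + 1/\theta_l$. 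Because FLOW deletes $S_m$ together with $N(S_m)$ the moment $S_m$ freezes, $i$ has no liked good in $S_1 \cup \cdots \cup S_{l-1}$, and since $i \in A_1$ all of $i$'s truly-liked goods lie in $G_1$; hence $\{\,j : u_{ij} = 1\,\} \subseteq S_l \cup S_{l+1} \cup \cdots \cup S_k =: P$.

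Next I would control $i$'s allocation in the $u'$-run. In any run of the mechanism, $i$ receives her cheapest reported-liked goods at one common price $\rho'$, and, to fill her unit of mass, possibly some leftover goods of price $0$; writing $m'_i = 1 + c_i \rho'$ for $i$'s money at that level, she spends all of it on the price-$\rho'$ goods and so receives exactly mass $m'_i / \rho' = c_i + 1/\rho'$ of them when $\rho' > 0$. After reducing, by a standard monotonicity argument for dichotomous markets (reporting a liked good as disliked only deletes an edge at $i$ and cannot help), to the case where $i$'s reported liked set contains her true liked set, every truly-liked good that $i$ receives is one of these price-$\rho'$ goods, so $u_i(x') \le c_i + 1/\rho'$. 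Thus the whole theorem reduces to the single inequality $\rho' \ge \theta_l$, together with excluding $\rho' = 0$ in the case $v_i < 1$; indeed these give $u_i(x') \le c_i + 1/\rho' \le c_i + 1/\theta_l = v_i = u_i(x)$.

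The inequality $\rho' \ge \theta_l$ is the heart of the proof and the step I expect to be the main obstacle. The idea is that $\rho'$ is the freezing price of some tight set $S'$ of the $u'$-run containing a good $i$ buys, so $\rho' = |N'(S')| / (|S'| - c(N'(S')))$, where primes denote neighbourhoods computed in the $u'$-run. Since $u$ and $u'$ agree for every agent other than $i$, the agents of $N'(S') \setminus \{i\}$ are exactly the non-$i$ agents truly liking a good of $S'$, and the goods of $S'$ that $i$ actually receives lie in $P$; restricting to $S' \cap P$ and to those agents, the quantities that appear are precisely the ones governed by the true-run inequalities of Lemma~\ref{lem.subset-MMNB} for subsets of $S_l \cup \cdots \cup S_k$, which force the ratio to be at least $\theta_l$, while the extra agent $i$ and any goods of $S' \setminus P$ only enlarge it. The excluded case $\rho' = 0$ with $v_i < 1$ is handled in the same spirit: if $i$ were matched in the $u'$-run to a good she truly likes, that good would lie in $P$ and would therefore carry price $\ge 1$ in the $u$-run, which is impossible once only $i$'s report has changed (here one also uses Lemma~\ref{lem.expand} to see how the minimum vertex cover reacts when a single agent's edge set is altered). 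The routine parts are the reduction to superset reports and the bookkeeping of leftover zero-priced goods; the genuinely delicate part is transporting Lemma~\ref{lem.subset-MMNB}, a statement about the $u$-run, into a lower bound on the freezing price of a tight set of the $u'$-run.
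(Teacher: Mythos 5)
Your reduction of the whole theorem to the single inequality $\rho' \geq \theta_l$ is where the argument breaks: that inequality is false in general, and the reason is precisely the scenario your parenthetical dismisses. You claim that ``any goods of $S' \setminus P$ only enlarge'' the ratio $|N'(S')|/(|S'| - c(N'(S')))$, but the opposite happens. Goods from the earlier tight sets $S_1, \ldots, S_{l-1}$ come bundled with their own interested agents, and each such block of goods and agents has ratio $\theta_m < \theta_l$; merging such a block into $S'$ produces a mediant of the two ratios and can pull the freezing price strictly below $\theta_l$. Concretely, if $i$ reports liking a good of some $S_m$ with $m < l$ (in addition to her true likes), she can be absorbed into a tight set of the $u'$-run that freezes at a price $\theta < \theta_l$; the paper's proof treats exactly this case --- it shows that the new tight set $T$ must then satisfy $T_3 = \emptyset$ and $T_1 \neq \emptyset$ (so it really does contain cheaper goods), not that it cannot exist. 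In that regime your bound $u_i(x') \leq c_i + 1/\rho'$ exceeds $u_i(x) = c_i + 1/\theta_l$ and proves nothing. The missing idea is the conservation argument the paper runs inside $T_2 = T \cap S_l$: the truly liked goods $i$ can receive all lie in $T_2$, the agents of $V_2^* = V_2 \setminus \{i\}$ also draw all of their utility from $T_2$, and at the lower price $\theta$ their money buys them strictly more of $T_2$ than at $\theta_l$; since the total mass available in $T_2$ is fixed at $|T_2|$, agent $i$'s truly liked share drops to some $a < 1/\theta_l$. It is this accounting, not a lower bound on the freezing price, that closes the case $\theta < \theta_l$.

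Two further steps are weaker than you suggest, though secondary. The ``standard monotonicity argument'' letting you assume the reported liked set contains the true one is not standard here and is unproven: deleting a true edge changes the minimum vertex cover and the entire tight-set structure, and the paper never needs (or proves) such monotonicity --- it analyzes arbitrary misreports directly, splitting only on whether $i$ adds edges to $G_2$. Likewise, your treatment of the $\rho' = 0$ case leans on Lemma~\ref{lem.expand}, which is a within-run statement; the paper requires a separate argument (Lemma~\ref{lem.one-bigger} together with the tie-breaking rule that the cover minimizes the number of vertices taken from $A$) to conclude that when $i$ adds edges to $G_2$ she ends up allocated a good she truly values at zero.
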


\begin{proof}
	The maximum utility an agent can accrue is 1. Therefore, if $u_i(x) = 1$, the theorem holds vacuously. Hence we will study the case that $u_i(x) < 1$, i.e., $H[A, G]$ does not have a perfect matching and $i \in A_1$. Furthermore, borrowing the terminology set up 
earlier,	
	assume that the run of FLOW partitions $G_1$ into sets $S_1, \ldots , S_k$ and these sets go tight when the prices of goods in them are $\theta_1, \theta_2 \ldots \theta_k$, respectively, with $\theta_1 < \ldots < \theta_k$. 
	
Assume that $i \in N(S_l)$ in $H[S, G]$. Then, $i$ has no edges to $S_1, \ldots, S_{l-1}$, she  has edges to $N(S_l)$ and may also have edges to $S_{l+1}, \ldots , S_k$. Since $u_i(x) = {1 \over \theta_l} < 1$, it must be the case that $|N(S_l)| > |S_l| - c(N(S_l))$ and $\theta_l > 1$. Note that the partitions $(A_1, A_2)$ of $A$ and $(G_1, G_2)$ of $G$ were obtained when we found a minimum vertex cover in the graph $H[A, G]$ which contained only the old edges of $i$. 
		
Clearly, in the $u'$-run, only the flow sent to $i$ on old edges counts towards the ``actual'' utility of $i$, i.e., counts towards $u_i(x')$. We next analyze $u_i(x')$ under the different ways in which $i$ could have manipulated her utility function.
			
	{\bf Case 1:} Agent $i$ adds new edges to $G_2$, i.e., \ $\exists j \in G_2$ s.t. $(i, j)$ is a new edge. \\
	 By Lemma \ref{lem.one-bigger} stated below, the size of a minimum vertex cover in $H'[A, G]$ is bigger than that in $H[A, G]$ by one. The new minimum vertex cover will depend on whether $i$ has picked one new edge to $G_2$ or more. Observe that none of these new edges are  covered by the old vertex cover, since they run between $A_1$ and $G_2$.
	 
	 In the first case, since we are picking a minimum vertex cover that minimizes the number of vertices picked from $A$, $j$ will be added to the cover, hence covering $(i, j)$. The new cover is $A_2 \cup G_1'$, where $G_1' = G_1 \cup \{j\}$. Now, for $S = \{j\}$, $N(S) = \{i\}$. Therefore, $S$ will go tight when $\theta = 1/(1-c_i)$ and $i$ will be allocated one unit of $j$, resulting in $u_i(x') = 0$.
	 
	 In the second case, $i$ will be added to the cover. Therefore, the new cover will be $A_2' \cup G_1$ where $A_2' = A_2 \cup \{i\}$. Therefore, $i$ will be matched in Step 3 via one of the new edges. Hence, again $u_i(x') = 0$.
\medskip

	\noindent	{\bf Case 2:} Agent $i$ does not add any new edges to $G_2$.\\
		In the $u$-run, the utility received by $i$ is $c_i + {1 \over \theta_l}$.
		
	Assume that in the $u'$-run, a set $T \subseteq G_1$ goes tight, with $i \in N'(T)$. Let $\theta$ be the prices of goods in $T$ when it goes tight. If $\theta \geq \theta_l$, then the utility received by $i$ in the $u'$-run is $c_i + {1 \over \theta} \leq c_i + {1 \over \theta_l}$ and therefore $u_i(x) \geq u_i'(x') \geq u_i(x')$. Hence, the statement of the theorem is true.
	
	Next, assume that $\theta < \theta_l$. Consider the partition of $T$ into $(T_1, T_2, T_3)$ where the three sets in this partition are $T \cap (S_1 \cup \ldots \cup S_{l-1})$, $T \cap S_l$ and $T \cap (S_{1+1} \cup \ldots \cup S_k)$, respectively. In graph $H[A, G]$, denote $N(T)$ by $V$ and define its partition $(V_1, V_2, V_3)$ as follows: The sets $V_1, V_2$ and $V_3$ are $N(T_1)$, $N(T_2) - V_1$ and $N(T_3) - (V_1 \cup V_2)$, respectively. Observe that $i \in V_2$ and that the only difference between $N(T)$ and $N'(T)$ can be $i$. However, since $i \in N'(T)$, we have that $N'(T) = N(T) = V$. 
	
	 Regarding $T$, we will first prove that $T_3 = \emptyset$. By Lemma \ref{lem.subset-MMNB}, ${{|N(T_3)|} \over {|T_3| - c(N(T_3))}} \geq \theta_{l+1} > \theta_l$. Since $N(T_3) = N'(T_3) = V_3$, the previous inequality holds in $H'[A, G]$ as well, i.e., ${{|N'(T_3)|} \over {|T_3| - c(V_3)}} > \theta_l$.
	 Furthermore, since 
	 \[ {{|V|} \over {|T| - c(V)}} = {{|V_1| + |V_2| + |V_3|} \over {(|T_1| - c(V_1)) + (|T_2| - c(V_2)) + (|T_3| - c(V_3))}} =    \theta < \theta_l , \]
	 we get ${{|V_1| + |V_2|} \over {(|T_1| - c(V_1)) + (|T_2| - c(V_2))}} < \theta$. Clearly, in $H'[A, G]$, $N'(T_1 \cup T_2) = V_1 \cup V_2$. Therefore, $(T_1 \cup T_2)$ will go tight before $T$, leading to a contradiction.
	 
	Next, we prove that $T_1 \neq \emptyset$. Suppose not, then $T \subseteq S_l$. Since $N(T) = N'(T) = V$, by Lemma \ref{lem.subset-MMNB} we get  
	\[ {{|N(T)|} \over {|T| - c(N(T))}} = {{|N'(T)|} \over {|T| - c(N(T))}} \geq \theta_l . \] 
	This contradicts the fact that $T$ goes tight at $\theta < \theta_l$ in the $u'$-run.

	We are now ready to prove that $u_i(x) \geq u_i(x')$. Let $V_2^* = (V_2 - \{i\})$. The total utility accrued by agents in $V_2$ in the $u$-run is 
\[\left(c_i + {1 \over \theta_l} \right) + \left( c(V_2^*) + {{|V_2^*|} \over \theta_l} \right) = |T_2| , \] 
	where we have split $V_2$ into $V_2^*$ and $i$, and we have used the fact that goods in $T_2$ provide all utility to agents in $V_2$ in the $u$-run.
	
	In the $u'$-run, the set $T$ freezes when the prices of goods in it are 
	\[ \theta := {{|V_1| +  |V_2|} \over {(|T_1| - c(V_1)) + (|T_2| - c(V_2))}}  , \] 
	and the total utility accrued by agents in $V_2^*$ in the $u'$-run is $c(V_2^*) + {{|V_2^*|} \over \theta}$.
	
	Since these agents don't have any edges to $S_1 \cup \ldots \cup S_{l-1}$, all this utility comes from $T_2$. Let the utility accrued by $i$ from goods in $T_2$ in the $u'$-run be $c_i + a$. Then we have,
	
\[\left(c_i + a \right) + \left( c(V_2^*) + {{|V_2^*|} \over \theta} \right) = |T_2|. \] 	
Equating with the equation for the $u$-run we get
\[ a = {1 \over \theta_l}  +  {|V_2^*|} \left({1 \over \theta} - {1 \over \theta_l} \right) .\]
Since $\theta < \theta_l$ we get that $a < {1 \over \theta_l}$.	Therefore, $ u_i(x') < u_i(x)$. 

This completes the proof of Theorem \ref{thm.ADNB}.
\end{proof}

Finally we note that Footnote~\ref{footnote1} holds for this setting as well. As a result, Mechanism \ref{alg.ADNB} computes an optimal solution for {\em 1DLAD} under bi-valued utility functions as well and is strategyproof.
	
\begin{lemma}
	\label{lem.one-bigger}
		 In Case 1 in the proof of Theorem \ref{thm.IC-MMNB}, the size of a maximum matching in $H'[A, G]$ is bigger than that in $H[A, G]$ by one. 
\end{lemma}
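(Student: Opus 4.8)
The plan is to work with maximum matchings directly and to invoke K\"onig's theorem only to reconcile the statement with the explicit vertex covers used in Case~1. Write $\mu = |A_2| + |G_1|$ for the size of a maximum matching of $H[A,G]$; this equals the size of the minimum vertex cover $A_2 \cup G_1$. The inequality ``$\le \mu + 1$'' is immediate and I would dispose of it first: since $H'[A,G]$ differs from $H[A,G]$ only in the edges incident to $i$, the set $A_2 \cup G_1 \cup \{i\}$ covers every edge of $H'$ --- edges not at $i$ are covered by the cover $A_2 \cup G_1$ of $H$, and edges at $i$ by $i$ itself --- so the minimum vertex cover, hence the maximum matching, of $H'$ has size at most $|A_2| + |G_1| + 1 = \mu + 1$. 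In the sub-case where $i$ adds exactly one new edge $(i,j)$ with $j \in G_2$ one may use $A_2 \cup G_1 \cup \{j\}$ instead, which is the cover $A_2\cup G_1'$ named in Case~1; its minimality then follows once the matching lower bound below is in place.

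For the lower bound I would exhibit a matching of $H'$ of size $\mu + 1$, assembled from three pairwise vertex-disjoint pieces. Two structural facts are available: first, minimality of the vertex cover forces $N_H(A_1) = G_1$ (if some $g \in G_1$ had all its neighbours in $A_2$, then $(A_2 \cup G_1)\setminus\{g\}$ would be a smaller cover) and, being in $A_1 = A\setminus A_2$, agent $i$ has all its $H$-neighbours in $G_1$; second, Lemma~\ref{lem.expand} gives the strict Hall condition $|N_H(S)\cap G_2| > |S|$ for every $S \subseteq A_2$ and the Hall condition $|N_H(S)\cap A_1| \ge |S|$ for every $S \subseteq G_1$. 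Let $j^{\ast} \in G_2$ be a new neighbour of $i$ in $H'$. Using the strict Hall condition (deleting one vertex of $G_2$ preserves the resulting non-strict condition), pick a matching $\nu_2$ of $H$ saturating $A_2$ inside $G_2 \setminus \{j^{\ast}\}$. Using the Hall condition for $G_1$, pick a maximum matching $\nu_1$ of $H[A_1,G_1]$; it has size $|G_1|$, saturates $G_1$, and --- this is the crux --- can be chosen to miss $i$. Then $\nu_1 \cup \nu_2 \cup \{(i,j^{\ast})\}$ is a matching: $\nu_1$ lives on $G_1 \cup (A_1 \setminus \{i\})$, $\nu_2$ on $A_2 \cup (G_2 \setminus \{j^{\ast}\})$, and the extra edge on $\{i, j^{\ast}\}$, all pairwise disjoint; moreover all edges lie in $H'$, because $\nu_1 \cup \nu_2$ avoids $i$ (so those edges are common to $H$ and $H'$) while $(i,j^{\ast}) \in E'$. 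Its size is $|G_1| + |A_2| + 1 = \mu + 1$, which completes the lower bound.

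The one non-routine ingredient is that $\nu_1$ may be chosen to miss $i$, equivalently that $i$ is inessential in $H[A_1,G_1]$; by the defect form of Hall's theorem this amounts to the strengthened condition that every $S \subseteq G_1$ with $i \in N_H(S)$ has $|N_H(S)\cap A_1| \ge |S| + 1$. This is precisely where the standing hypothesis $u_i(x) < 1$ of Case~1 must be used, and I expect it to be the main obstacle. I would argue the contrapositive: if some $S \subseteq G_1$ containing a good liked by $i$ had $|N_H(S)\cap A_1| = |S|$ (the minimum permitted by Lemma~\ref{lem.expand}), then in the run of FLOW a tight set containing such a good would freeze at price $\theta_l \le 1/(1-c_i)$, so $i$ would be allocated a full unit of a liked good and hence $u_i(x) = 1$, a contradiction. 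Turning this into a clean argument requires controlling the order in which FLOW freezes sets and handling the possibility that $i$ is absorbed into a tight set earlier than the ``balanced'' set $S$; Lemma~\ref{lem.subset-MMNB} on monotonicity of the freezing prices should be the right tool. As a fallback, if $i$ is genuinely essential in some configuration, one instead takes $(\nu_1 \setminus \{(i,g_0)\}) \cup \nu_2 \cup \{(i,j^{\ast})\}$ with $g_0 = \nu_1(i)$, a matching of size $\mu$ leaving $g_0$ exposed, and produces an augmenting path out of $g_0$ in $H'$; but this reroute relies on exactly the same $u_i(x) < 1$ hypothesis, so the essential difficulty is unchanged.
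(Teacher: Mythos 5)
Your overall skeleton is the same as the paper's: both proofs reduce the statement, via K\"onig, to exhibiting a matching of size $|A_2|+|G_1|+1$ in $H'$ obtained by taking the edge $(i,j^{\ast})$, a matching that saturates $A_2$ inside $G_2\setminus\{j^{\ast}\}$ (from the strict Hall condition of Lemma~\ref{lem.expand}), and a matching that saturates $G_1$ inside $A_1\setminus\{i\}$; your explicit upper-bound cover $A_2\cup G_1\cup\{i\}$ is fine. The problem is the third piece, which you correctly identify as the crux and then do not prove: the claim that $G_1$ can be saturated while avoiding $i$, equivalently that $|N(S)\cap A_1|\ge |S|+1$ for every $S\subseteq G_1$ with $i\in N(S)$, is exactly the nontrivial content of the lemma (it is where the hypothesis $u_i(x)<1$, i.e.\ $\theta_l>1/(1-c_i)$, must enter), and your proposal leaves it as an expectation rather than an argument. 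Moreover, the contrapositive you sketch is wrong as stated: if $S\subseteq G_1$ with $i\in N(S)$ were balanced, $|N(S)\cap A_1|=|S|$, the price at which such a set would go tight is $|N(S)|/\bigl(|S|-c(N(S))\bigr)=1/(1-\bar{c})$, where $\bar{c}$ is the \emph{average} disagreement utility over $N(S)$; this is at most $1/(1-c_i)$ only when $c_i\ge\bar{c}$, so no bound $\theta_l\le 1/(1-c_i)$, and hence no conclusion $u_i(x)=1$, follows from the price comparison alone. Your fallback (an augmenting path out of the exposed good) is, as you admit, circular, since it needs the same unproved saturation fact.

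The desired implication can in fact be closed, but by a different mechanism than the one you sketch. The paper's route is to apply Lemma~\ref{lem.subset-MMNB} to subsets of the tight set $S_l$ whose neighborhood contains $i$, together with $\theta_l>1$ (which is where $u_i(x)<1$ is used) and Lemma~\ref{lem.expand} for the remaining subsets of $G_1$, to argue that deleting the single vertex $i$ cannot destroy the Hall condition for $G_1$; so your instinct that Lemma~\ref{lem.subset-MMNB} is the right tool matches the paper, but you stopped before carrying out the argument. An alternative, more elementary way to get your contrapositive is a utility-conservation count rather than a price bound: the goods of a balanced set $S$ carry exactly $|S|$ units of utility, in FLOW they are fully allocated and only to agents of $N(S)$, and each of the $|S|=|N(S)|$ agents of $N(S)$ can absorb at most one unit; hence every agent of $N(S)$, in particular $i$, ends with utility exactly $1$, contradicting $u_i(x)<1$. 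Without one of these arguments (or something equivalent), your construction of $\nu_1$ avoiding $i$ is unjustified and the proof is incomplete.
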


\begin{proof}
		 First match the edge $(i, j)$ and remove its vertices from the graph $H'[A, G]$. Let us denote the remaining graph by $H'[A, G] - \{i, j\}$. This graph is identical to the graph obtained by removing $i$ and $j$ from $H[A, G]$, which we will denote by $H[A, G] - \{i, j\}$. This follows from the fact that $H'[A, G]$ and $H[A, G]$ differ only in the edges incident at $i$. 

		 We claim that in $H[A, G] - \{i, j\}$, \ $G_1$ an $A_2$ will satisfy Hall's condition and hence both can be fully matched. The lemma then follows, since the size of a maximum matching in $H[A, G]$ equals the size of a minimum vertex cover in it which is $|G_1| + |A_2|$.
		 
		 To prove the claim, let us first consider $A_2$. As stated above, in the $u$-run, Step 2a of Algorithm \ref{alg.ADNB} finds a minimum vertex cover in $H[A, G]$ which minimizes the number of vertices picked from $A$, and as a result, the first assertion of Lemma \ref{lem.expand} changes to:
\[ \forall \ S \subseteq A_2, \ |N(S) \cap G_2| > |S|. \]
In going from $H[A, G]$ to $H[A. G] - \{i, j\}$, only one vertex is removed from $G_2$, namely $j$. Therefore, in $H[A, G] - \{i, j\}$ we have
\[ \forall \ S \subseteq A_2, \ |N(S) \cap G_2| \geq |S|. \]

Next we consider $G_1$. In the graph $H[A, G]$, by Lemma \ref{lem.expand} we have
\[ \forall S \subseteq G_1, \ |N(S) \cap A_1| \geq |S| .\]
As stated at the beginning of the proof of Theorem \ref{thm.IC-MMNB}, $|N(S_l)| > |S_l| - c(N(S_l))$ and $\theta_l >1$. By Lemma \ref{lem.subset-MMNB}, 
\[ \forall S \subseteq S_l, \ |N(S)| \geq \theta_l (|S| - c(N(S))), \ \ \mbox{therefore} \ \ |N(S)| > |S| - c(N(S)). \]

In going from $H[A, G]$ to $H[A, G] - \{i, j\}$, only one vertex is removed from $A_1$, namely $i$. Therefore, in graph $H[A, G] - \{i, j\}$ we have
\[ \forall S \subseteq S_l, \ |N(S)| \geq |S| - c(N(S)).\]
Hence, in graph $H[A, G] - \{i, j\}$ we have
\[ \forall S \subseteq G_1, \ |N(S) \cap A_1| \geq |S| - c(N(S)).\]

The lemma follows.
\end{proof}
\end{document}